\newtheorem{theorem}{Theorem}
\newtheorem*{theorem*}{Theorem}
\newtheorem{lemma}{Lemma}
\newtheorem{proposition}{Proposition}
\newtheorem{corollary}{Corollary}
\theoremstyle{definition}
\newtheorem{definition}{Definition}
\newtheorem{claim}{Claim}
\newtheorem{remark}{Remark}
\newtheorem{assumption}{Assumption}
\newcommand{\iiddistr}{{\stackrel{\text{\iid}}{\sim}}}
\newcommand{\reals}{{\mathbb{R}}}
\newcommand{\integers}{{\mathbb{Z}}}
\newcommand{\naturals}{{\mathbb{N}}}
\newcommand{\identity}{\mathbf I}
\newcommand{\diff}{{\rm d}}
\newcommand{\expect}[1]{\mathbb{E}\left[ #1 \right]}
\newcommand{\prob}[1]{ \mathbb{P}\left\{ #1 \right\} }
\newcommand{\var}{\mathsf{var}}
\newcommand{\Bern}{{\rm Bern}}
\newcommand{\eg}{e.g.\xspace}
\newcommand{\ie}{i.e.\xspace}
\newcommand{\iid}{i.i.d.\xspace}
\newcommand{\pth}[1]{\left( #1 \right)}
\newcommand{\qth}[1]{\left[ #1 \right]}
\newcommand{\sth}[1]{\left\{ #1 \right\}}
\newcommand{\abth}[1]{\left | #1 \right |}
\newcommand{\norm}[1]{\left\|{#1} \right\|}
\newcommand{\lnorm}[2]{\left\|{#1} \right\|_{{#2}}}
\newcommand{\Fnorm}[1]{\lnorm{#1}{\rm F}}
\newcommand{\fnorm}[1]{\|#1\|_{\rm F}}
\newcommand{\opnorm}[1]{\left\| #1 \right\|}
\newcommand{\iprod}[2]{\left \langle #1, #2 \right\rangle}
\newcommand{\Iprod}[2]{\langle #1, #2 \rangle}
\newcommand{\indc}[1]{{\mathbf{1}_{\left\{{#1}\right\}}}}
\newcommand{\diag}[1]{\mathsf{diag} \left\{ {#1} \right\} }
\newcommand{\calA}{{\mathcal{A}}}
\newcommand{\calE}{{\mathcal{E}}}
\newcommand{\calN}{{\mathcal{N}}}
\newcommand{\calR}{{\mathcal{R}}}
\newcommand{\calS}{{\mathcal{S}}}
\newcommand{\calT}{{\mathcal{T}}}
\newcommand{\calU}{{\mathcal{U}}}
\newcommand{\calW}{{\mathcal{W}}}
\newcommand{\calX}{{\mathcal{X}}}
\newcommand{\Tr}{\mathsf{Tr}}
\renewcommand{\hat}{\widehat}
\renewcommand{\tilde}{\widetilde}
\begin{document}

\title{Securing Distributed Gradient Descent in High Dimensional Statistical Learning} 

\date{\today}

\author{
Lili Su \\
CSAIL, EECS\\
MIT\\
{lilisu@mit.edu}
\and 
Jiaming Xu\\
The Fuqua School of Business\\
Duke University\\
{jiaming.xu868@duke.edu}
}

\maketitle

\begin{abstract}
We consider unreliable distributed learning systems wherein the training data is kept confidential by external workers, and the learner has to interact closely with those workers to train a model. In particular, we assume that there exists a system adversary that can adaptively compromise some workers; the compromised workers deviate from their local designed specifications by sending out arbitrarily malicious messages. 

We assume in each communication round, up to $q$ out of the $m$ workers suffer Byzantine faults. Each worker keeps a local sample of size $n$ and the total sample size is $N=nm$. We propose a secured variant of the gradient descent method that can tolerate up to a constant fraction of Byzantine workers, i.e., $q/m = O(1)$. Moreover, we show the statistical estimation error of the iterates converges in $O(\log N)$ rounds to $O(\sqrt{q/N} + \sqrt{d/N})$, where $d$ is the model dimension. As long as $q=O(d)$, our proposed algorithm achieves the optimal error rate $O(\sqrt{d/N})$. Our results are obtained under some technical assumptions. Specifically, we assume strongly-convex population risk. Nevertheless, the empirical risk (sample version) is allowed to be non-convex. The core of our method is to robustly aggregate the gradients computed by the workers based on the filtering procedure proposed by Steinhardt et al. \cite{Steinhardt18}. On the technical front, deviating from the existing literature on robustly estimating a finite-dimensional mean vector, we establish a {\em uniform} concentration of the sample covariance matrix of  gradients, and show that the aggregated gradient, as a function of model parameter, converges uniformly to the true gradient function. To get a near-optimal uniform concentration bound, we develop a new matrix concentration inequality, which might be of independent interest. 
\end{abstract}

\section{Introduction}
\label{sec: intro}
Distributed machine learning has been an attractive solution to large-scale problems for years \cite{boyd2011distributed}. At the same time, learning in the presence of (possibly malicious) outliers has a deep root in robust statistics~\cite{huber2011robust} and has become an extremely active area recently~\cite{7782980,lai2016agnostic,Charikar:2017:LUD:3055399.3055491,DiakonikolasKK017,Steinhardt18}. However, most of the previous work implicitly assumes that the systems used to carry out the learning task are reliable, i.e., each computing device follows some designed specification. In this work, we consider unreliable distributed learning systems \cite{Lynch:1996:DA:2821576,su2017defending,Chen:2017:DSM:3175501.3154503,Blanchard2017,alistarh2018byzantine,yin2018byzantine} that are prone to system failures or even adversarial attacks. 
In particular, we assume that there exists a system adversary that can adaptively choose some computing devices to compromise; the compromised devices deviate from their local designed specifications and behave maliciously in an arbitrary manner. 

Our consideration of unreliable distributed learning systems is motivated by the recent trends in a new learning framework wherein the training data is kept confidential by external computing devices, and the learner interacts with the external computing devices  to train a model.  
In classical learning frameworks, 
data is collected from its providers (who may or may not be voluntary) and is stored by the learner. 
Such data collection immediately leads to  data providers' serious privacy concerns, which  
root in not only purely psychological reasons but also the poor real-world practice of privacy-preserving solutions. 
In fact, privacy breaches occur frequently, with recent examples including Facebook data leak scandal, iCloud leaks of celebrity photos, and PRISM surveillance program. 
Putting this privacy risk aside, data providers often benefit from the learning outputs. For example, in medical applications, although participants may be embarrassed about their use of drugs, they might benefit from good learning outputs that can provide high-accuracy predictions of developing diseases. 

To resolve this dilemma of data providers, researchers and practitioners have proposed an alternative learning framework wherein the training data is kept confidential by its providers from the learner and these providers function as workers 
\cite{konevcny2015federated,federatedlearningblog,Duchi:2014:PAL:2700084.2666468}. This framework has been implemented in 
 practical systems such as Google's {\em Federated Learning} \cite{konevcny2015federated,federatedlearningblog}, wherein Google tries to learn a model with the training data kept confidential on the users' mobile devices.  We refer to this new learning framework as {\em learning with external workers}.
In contrast to the traditional learning framework under which models are trained within data-centers, in {\em learning with external workers} the learner faces  
 serious {\em security} risk: (1) some external workers may be highly unreliable or even be malicious (hacked by the system adversary);  
(2) the learner lacks enough administrative power over those external workers. 
%
In this paper, we aim to develop strategies to safeguard distributed machine learning against adversarial workers while keeping the following two key practical constraints in mind: \footnote{Depending on applications, there might be many other constraints such as unevenly distributed training data, intermittent availability of external workers, etc. In addition, different applications might even call for different performance metrics. We would like to explore these more richer settings in our future work. 
}
\begin{itemize}
\item Small local samples versus high model dimensions:
While the total volume of data over all workers may be large, individual workers may keep only small samples comparing to model dimensions. That is, the training data is {\em locally} a scarce resource. 

\item Communication constraints:  Similar to other large distributed systems, the external workers are typically highly heterogeneous in terms of computation powers, real-time local computation environments, etc. As a result of this, each round of communication requires synchronization \cite{JMLR:v14:zhang13b};  the transmission between the external workers and the learner typically suffers from high latency and low throughout. 
\end{itemize}
These two constraints together raise significant challenges for designing securing strategies. Without the first constraint, a one-shot outlier-resilient aggregation procedure suffices: each worker separately performs learning based on the local sample and sends the local estimates to the learner who  aggregates these estimates to output a final global estimate. This procedure is straightforward to implement and is communication-efficient \cite{feng2014distributed,JMLR:v14:zhang13b}. 
However, the correctness of these algorithms crucially relies  on the assumption that the local sample size is sufficiently large. In particular, $n =N/m \gg d$, where $m$ is the number of workers, $n$ is the local sample size, $N=nm$ is the total sample size, and $d$ is the model dimension. In contrast, practical distributed learning systems often operate in the regime where $n\ll d$. Two immediate consequences are: (1) to learn an accurate model, the learner has to interact closely with those external workers, and such close interaction gives the adversary more chances to foil the learning process; (2) identifying the adversarial workers based on abnormality is highly challenging, as it becomes difficult to distinguish the statistical errors from the adversarial errors when the sample sizes are small. 
In addition, due to the randomness of the training data, the estimates computed at different rounds are highly dependent on each other. See Section \ref{subsubsec: uniform} for further explanation. 
 
There have been attempts to robustify stochastic gradient descent (SGD) 
\cite{Blanchard2017,alistarh2018byzantine}  
with different focus from what we consider here. In particular, \cite{Blanchard2017} assumes all the
workers can access the whole data sample. 
Similar to ours, the concurrent work \cite{alistarh2018byzantine} considers the scenario where data is generated and stored in a distributed fashion at the workers. However, \cite{alistarh2018byzantine} assumes that 
in each iteration the workers are able to use \emph{fresh data} to compute the gradients. However, fresh data in each round implies that the local sample size grows with time, which is not necessarily true in some applications. The fresh data assumption is crucial in their analysis: with fresh data, conditioning on the current model parameter estimator, the local gradients computed at different workers become independent, and the existing analysis of robust mean estimation may suffice. 
 In this work, we assume that the sample size is fixed over time\footnote{Fixed sample size arises, for example, when the model training speed is significantly faster than the data generation speed.}, and the training data is 
stored in a distributed fashion \cite{Chen:2017:DSM:3175501.3154503, yin2018byzantine}. 

\paragraph{Contributions:}
In this work, we propose a robust gradient descent method that tolerates up to a constant fraction of adversarial workers (i.e., $\frac{q}{m} = O(1)$) and converges to a statistical estimation error $O(\sqrt{q/N} + \sqrt{d/N})$ in $O(\log N)$ communication rounds; whereas, the minimax-optimal error rate in the failure-free and centralized setting is $O(\sqrt{d/N})$. \footnote{
See \cite[Section 3.2]{YW-ITSTATS} for a proof. 
Note that $O(\sqrt{d/N})$ is the minimax optimal estimation error rate 
without any additional structure on the model parameter.  When the model 
parameter has additional structure, such as sparsity, the $\sqrt{d}$ factor can possibly be improved.}
As long as $q=O(d)$, our proposed algorithm achieves the optimal error rate $O(\sqrt{d/N})$, matching the failure-free optimal error rate.
Our results are obtained under some technical assumptions that we hope to relax in the future. Specifically, we assume that the population risk is strongly-convex. Nevertheless, we do allow the empirical risk (sample version) to be non-convex. 

On the technical front, 
to deal with the interplay of the randomness of the data and the iterative updates of the model choice $\theta$, we first establish the concentration of sample covariance matrix of gradients {\em uniformly}\footnote{See \cite{raginskyece,shalev2014understanding} and reference therein for details about uniform convergence of functions. } at all possible model parameters; then we prove that our aggregated gradient, as a function of $\theta$, converges uniformly to the population gradient function $\nabla F(\cdot)$. 
Similar uniform concentration of sample covariance matrix has been derived in \cite[Lemma 2.1]{Charikar:2017:LUD:3055399.3055491} under the assumption that the gradients are sub-gaussian. While sub-gaussian \emph{data distribution} is commonly assumed in statistical learning literature, the resulting \emph{gradients} may be sub-exponential or even heavier tailed. For example, in the simplest linear regression example, the gradients are sub-exponential instead of sub-gaussian. 
Note that standard routine to bounding the 
spectral norm of the sample covariance matrix 
is available, 
 see \cite[Theorem 5.44]{vershynin2010introduction} and \cite[Corollary 3.8]{adamczak2010quantitative} for example. 
 However, it turns out that using these existing results, the uniform 
 concentration bound obtained is far from being optimal. To this end, 
 we develop a new concentration inequality for matrices with i.i.d.\ sub-exponential  column vectors. 
 This new inequality leads to a near-optimal uniform bound. 
 Relaxing the distributional assumption from sub-gaussian to sub-exponential is highly nontrivial. See \cite[Section 1.3]{vershynin2012close} and \cite{adamczak2010quantitative} for details. 
Our analysis framework developed in this work is not tied to sub-exponential assumptions. 
With different gradient distributional assumptions such as bounded second moment, one can follow our analysis roadmap to obtain
different uniform concentration bounds for the sample covariance matrix of gradient vectors, which in turn
implies different error bounds to our robust gradient descent method.

\vskip \baselineskip 

Note that in our algorithm, 
we let each non-faulty worker compute the local gradient based on the {\em entire} local sample (all $n$ data points). Since 
$n$ is small,  the computational burdens of the workers are reasonable. It has been demonstrated numerically in \cite{45648} that in the adversary-free setting, there is a performance improvement when each worker performs a few epochs of SGD before the model updates are aggregated. Whether there will be similar performance improvement in our adversary-prone setting is unclear, and we would like to leave this direction for future exploration.

\subsection{Comparison with Robustly Estimating a Finite-dimensional Mean Vector}
\label{subsec: comparison}
Our work is closely related to high-dimensional robust mean estimation -- a research area that has been
intensively studied 
\cite{7782980,lai2016agnostic,Charikar:2017:LUD:3055399.3055491,DiakonikolasKK017,Steinhardt18}. High-dimensional\footnote{The notion of ``high-dimension'' here does NOT refer to the setting where $d\gg N$. } robust mean estimation focuses on estimating the mean of a $d$-dimensional random vector 
from a contaminated dataset whose $\epsilon>0$ fraction of data is arbitrarily corrupted. 

Our algorithm uses robust mean estimation -- in particular, the procedure developed in~\cite{Steinhardt18} -- as a sub-routine to aggregate the gradients computed by the workers in each iteration.  
We provide a way to leverage robust mean estimation primitive to design an optimization algorithm that is resilient to adversarial system failures. 
Similar attempts were made in concurrent work \cite{diakonikolas2018sever,klivans2018efficient,prasad2018robust}.
In particular, \cite{klivans2018efficient} focuses 
on the linear or polynomial regression model, and \cite{diakonikolas2018sever,prasad2018robust} consider a general machine learning model similar to ours. 
The correctness of the robust gradient descent method proposed in \cite{prasad2018robust} relies on 
uniform approximation of the aggregated gradient to the true population gradient~\cite[Def. 1]{prasad2018robust}; however, 
only point-wise approximation bound 
is proved ~\cite[Lemma 1]{prasad2018robust}. 
In contrast, we prove a high-probability, uniform approximation bound
by assuming the local gradients are sub-exponential 
(See \prettyref{thm: uniform bound}). A similar uniform approximation bound
is proved in~\cite[Prop. B.5]{diakonikolas2018sever} with stronger assumptions. In fact, in proving ~\cite[Prop. B.5]{diakonikolas2018sever}, they essentially assume the local gradients are bounded\footnote{Recall that bounded random variables fall within the sub-gaussian family.} by $L'$ in $\ell_2$ norm and restrict
$N \ge q \gg d^2 (L')^2$,
where $L'$ is the Lipschitz continuous parameter of local gradients 
which may scale polynomially with $N,d$. See \prettyref{rmk:gradient_approx}
for detailed comparisons.

\subsubsection{Why uniform convergence?}
\label{subsubsec: uniform}
The existing analysis of robust mean estimation assumes that 
the good data vectors 
are independently and identically distributed,
and hence can only guarantee the performance of the gradient estimation at a {\em given} $\theta.$ 
However, in our problem, we need to take into account the fact that $\theta_t$ is updated iteratively based on the training data; due to the randomness of the training data, though $\theta_0$ might be treated as given, $\sth{\theta_t}_{t\ge 1}$ are random. As $\sth{\theta_t}_{t\ge 1}$ are obtained based on the same set of training data, 
they are highly dependent on each other. As a consequence, conditioning $\theta_t$ ($t \ge 1$), 
the gradients computed by the good workers are no longer $\iid$. This is in sharp contrast
to 
\cite{alistarh2018byzantine}. 

To deal with this interdependency and
unspecified behavior of adversarial workers, we instead view each
gradient as a \emph{function} of model parameter $\theta$, and aim to
robustly estimate the mean of the \emph{infinite-dimensional} 
gradient function -- the true population gradient function. 
This poses significant challenges in proving the desired robustness guarantees
and estimation error bounds. To this end, we  establish a 
uniform concentration of sample covariance matrix of gradient functions. 
To get a near-optimal uniform concentration bound, we develop a new matrix concentration inequality. 

\subsection{Further Related Work}
Recent years have witnessed a flurry of research on securing distributed machine learning algorithms against adversarial attacks.  Here we can only hope to cover a fraction of them
we see most relevant. See \cite{Chen:2017:DSM:3175501.3154503,Charikar:2017:LUD:3055399.3055491,feng2014distributed,Su:2016:FMO:2933057.2933105,DiakonikolasKK017,7782980,Blanchard2017,alistarh2018byzantine,yin2018byzantine} and references therein for more details. 
Both \cite{Su:2016:FMO:2933057.2933105,Blanchard2017} considered a pure optimization framework and characterizations of statistical performance of the learning outputs are left open; whereas \cite{Chen:2017:DSM:3175501.3154503} 
studied the same statistical learning framework as ours. In particular, \cite{Chen:2017:DSM:3175501.3154503} proposed an  algorithm that converges in logarithmic rounds to an estimation error  $O(\sqrt{dq/N})$ for $q \ge 1$,
which is suboptimal up to a multiplicative factor of $\sqrt{q}$. In the low dimensional regime where $d=O(1)$, the concurrent work \cite{yin2018byzantine} obtains an order-optimal error rate
based on coordinate-wise median and trimmed mean, but the dependency of error rate on dimension $d$ is 
highly suboptimal and is even inferior to the result in \cite{Chen:2017:DSM:3175501.3154503}. 
In this work, we focus on the more general regime of model dimension $d$. 

\paragraph{Notation} 
We use standard big $O$ notations, e.g., for any sequences $\{a_n\}$ and $\{b_n\}$, $a_n=O(b_n)$ or $a_n  \lesssim b_n$
if there is an absolute constant $C>0$ such that $a_n/ b_n \le C$.

\section{System Model}
\label{subsec: stat risk} 
Let $X$ denote the input data generated from some \emph{unknown} distribution $P$. 
Let $\Theta \subset \reals^d$ denote the set of all possible model parameters. We consider 
a risk function $f: \calX \times \Theta \to \reals$, where 
$f(x, \theta)$ measures the risk induced by a realization $x$ of the data under the model
parameter choice $\theta$. 
A classical example of the above statistical learning framework is linear regression, where $ x = (w, y) \in \reals^{d-1} \times \reals $ is the feature-response pair and $f(x, \theta) = \frac{1}{2}\left( \iprod{w}{\theta} - y \right)^2$.
The learner is interested in learning 
$\theta^*$ which minimizes the \emph{population risk} \ie, 
\begin{align}
\label{eq:min_pop_risk}
\theta^* \in \arg \min_{\theta \in \Theta} F(\theta) \triangleq \expect{f(X, \theta)}
\end{align}
-- assuming that  $F(\theta) \triangleq \expect{ f( X, \theta)}$  is well-defined for every $\theta \in \Theta$. The model choice $\theta^*$ is optimal in the sense that it minimizes the expected risk to pay if it is used for prediction. 
 
If the population risk $F$ 
were known, then 
$\theta^*$ might be computed by solving the minimization problem in \prettyref{eq:min_pop_risk}. 
In statistical learning, however, the distribution $P$ (thus the population risk $F$) is typically unknown; instead, training data is available for learning $\theta^*$.
Formally, we assume that there exist $N$ i.i.d.\ data points $X_i \iiddistr P$ in the decentralized learning system wherein the training data is kept locally by data providers  and cannot be accessed by the learner directly. The learner can only request those providers to compute gradient-like quantities of their locally kept data, as is the case in Federated Learning.  
We refer to those data providers as workers, as they can be viewed as ``recruited'' by the learner.
We assume there are $m$ workers, and the $N$ data points are distributed evenly across the $m$ workers. Specifically, the index set $[N]$ is partitioned into $m$ subsets $S_j$ such that  $|\calS_j|=N/m
\triangleq n$, and $\calS_i \cap \calS_j=\emptyset$ for $i\not=j$. \footnote{It would be interesting to consider more general data partitions, and we leave this as one future direction.}
Notably, 
$n$ is often much smaller than the model dimension $d$. 

The learner communicates with the workers in synchronous communication rounds, but non-faulty workers do not communicate with each other.
We leave the asynchronous communication as one future direction.  
We use the Byzantine fault model \cite{Lynch:1996:DA:2821576} to capture the unreliability and potential malicious behaviors of the workers. It is assumed that up to $q$ out of the $m$ workers suffer Byzantine faults and thus behave arbitrarily and possibly maliciously. Those faulty workers are referred to as Byzantine workers. 
The arbitrarily faulty behavior arises when the workers are reprogrammed 
by the system adversary. 
We assume the learner knows the upper bound $q$ -- a standard assumption in literature \cite{DiakonikolasKK017,Charikar:2017:LUD:3055399.3055491,Steinhardt18}. 
Nevertheless, an effective and efficient learning algorithm that does not call for the knowledge of $q$ as input is highly desirable. 
The set of Byzantine workers is allowed to {\em change} between communication rounds; the adversary can choose different sets of workers to control across communication rounds. 
Byzantine workers are assumed to have {\em complete knowledge} of the system, including the total number of workers $m$, all $N$ data points over the whole system, the programs that the workers are supposed to run, the program run by the learner, and the realization of the random bits generated by the learner. Moreover, Byzantine workers can collude. Nevertheless, when the adversary gives up the control of a worker, this worker recovers and becomes normal immediately. 
Note that this mobile Byzantine fault model is more general than the most classic Byzantine fault model, where the set of Byzantine workers is fixed throughout an execution. 


\section{Our Algorithm and Main Results}
\label{sec: alg_main}
A standard approach to estimate $\theta^*$ in statistical learning
is via empirical risk minimization. Given $N$ independent copies $X_1, \cdots, X_N$ of $X$,   
the empirical risk function is a random function over $\Theta$ defined as  
$(1/N) \sum_{i=1}^{N} f(X_i, \theta)$ for all $\theta \in \Theta.$
By the functional law of large numbers, the empirical risk function converges uniformly to 
the population risk function $F(\theta)$
in probability as sample size $N \to \infty$.\footnote{See \cite{raginskyece,shalev2014understanding} and reference therein for details about uniform convergence of functions.} 
As a consequence, we expect the minimizer of the empirical
risk function (which is random) also converges to the population risk minimizer $\theta^*$ in probability. 
While it may be possible to secure the empirical risk minimization using 
some ``robust" versions of empirical risk functions~\cite{Su:2016:FMO:2933057.2933105,Charikar:2017:LUD:3055399.3055491}, 
the characterizations of the estimation error are either unavailable or too loose. Moreover, in our distributed setting, it 
is costly to transmit the local empirical risk functions. Similar observation is made in \cite{prasad2018robust}. 
In this paper, we take a different approach: Instead of robustifying the empirical risk functions, we aim at robustifying the {\em learning process}. Specifically, we focus on securing the gradient descent method against the interruption caused by the Byzantine workers during model training. We focus on gradient descent as it is one of the most important and fundamental algorithms in machine learning
\cite{bertsekas2015convex,smola2008introduction}. At a high level, many machine learning problems are solved by minimizing certain 
appropriate risk (cost) function using gradient descent~\cite{mei2016landscape}. 
%

Recall that $\nabla F(\theta_{t-1})$ is the gradient of the population risk at $\theta_{t-1}$, $\eta$ is some fixed stepsize, and $\theta_0$ is the given initial guess of $\theta^*$. 
For the perfect gradient descent method, i.e., 
\begin{align}
\theta_t= \theta_{t-1} -\eta \times \nabla F(\theta_{t-1}),\label{eq:pop_gd}
\end{align} 
to converge exponentially fast, the following standard assumption is often adopted~\cite{boyd2004convex}. 
\begin{assumption}\label{ass:pop_risk_smooth}
The population risk function $F: \Theta \to \reals$ is $M$-strongly convex, and differentiable over $\Theta$ with $L$-Lipschitz gradient. That is, for all $\theta,\theta' \in \Theta$,
\begin{align*}
& F(\theta' )  \ge F(\theta) + \iprod{\nabla F(\theta)}{~\theta'-\theta} + \frac{M}{2} \norm{\theta'-\theta}^2, \\
& \norm{ \nabla F(\theta) - \nabla F(\theta')}  \le L \norm{\theta-\theta'}.
\end{align*}
\end{assumption}
Note that both $M$ and $L$ may scale in $d$ -- the dimension of $\theta$.
Though we assume strong-convexity of the population risk, the empirical risk can be highly non-convex. 
Detailed comments on strong convexity assumption can be found in Remark \ref{rmk:strongly-convex}.

Our approximate gradient descent method is given by Algorithm \ref{alg:BGD}.

\begin{algorithm}[htb]
\caption{Approximate Gradient Descent Method: Round $t\ge 1$}\label{alg:BGD}
\begin{center}
{\bf \color{blue}{\em The learner:}}
\end{center}
\begin{algorithmic}[1]
\STATE {\em Initialization:} Let $\theta_0$ be an arbitrary point in $\Theta$. Let $\eta = \frac{M}{2L^2}$.  
\STATE Broadcast the current model iterate $\theta_{t-1}$ to all workers;
\STATE Wait to receive all the gradients reported by the $m$ workers; 
Let $g_j(\theta_{t-1})$ denote the value received from worker $j$. \\
If no message from worker $j$ is received, set $g_j(\theta_{t-1})$ to be some arbitrary value;
\STATE  Aggregate gradients: Pass the received gradients to a gradient aggregator $\calR $ to obtain an aggregated gradient $G(\theta_{t-1})$, i.e., 
\begin{align}
G(\theta_{t-1})~ \gets ~\calR (g_1(\theta_{t-1}), \cdots, g_j(\theta_{t-1}), \cdots, g_m(\theta_{t-1})). \label{eq:robust_aggregation}
\end{align}
\vskip -0.2\baselineskip
\STATE Update: $ \theta_{t} ~ \gets ~  \theta_{t-1}   -\eta \times G(\theta_{t-1})$;
\end{algorithmic}
\begin{center}
{\bf \color{blue}{\em Worker $j$:}}
\end{center}
\begin{algorithmic}[1]
\STATE On receipt of $\theta_{t-1}$, compute the gradient at $\theta_{t-1}$, i.e., $\frac{1}{n}\sum_{i \in \calS_j}\nabla f (X_i, \theta_{t-1})$;
\STATE Send $\frac{1}{n}\sum_{i \in \calS_j}\nabla f (X_i, \theta_{t-1})$ back to the learner;
\end{algorithmic}
\end{algorithm}

If worker $j$ is non-faulty at round  $t$, on receipt of $\theta_{t-1}$, it 
computes its local 
gradient at $\theta_{t-1}$
and reports the computed gradient $g_j(\theta_{t-1})$ to the learner; 
if worker $j$ is Byzantine faulty at round $t$,
it reports an arbitrary value to the learner. Formally, 
 \begin{align*}
 g_j(\theta_{t-1}) = 
 \begin{cases}
 \frac{1}{n}\sum_{i \in \calS_j}\nabla f (X_i, \theta_{t-1}), &~~ \text{if $j$ is non-faulty  at round  $t$}; \\
 \star, &~~ \text{otherwise}, 
 \end{cases}
 \end{align*}
where $\star$ denotes an arbitrary value. 
Notably, the Byzantine workers might use all the information of the system to determine what value to report. The learner 
aggregates the received gradients via a gradient aggregator $\calR$ (an algorithmic function) to obtain an approximate gradient (line 4 of Algorithm \ref{alg:BGD}). 

%
%
We use a gradient aggregator that originates from robust mean estimation \cite{Steinhardt18}. 
%
%
%
%
For ease of exposition, we postpone the presentation of this gradient aggregator after stating our main results.

\subsection{Main Results}
To characterize the statistical estimation error rate of our proposed algorithm, we adopt some  
assumptions. Similar assumptions are made in~\cite{Chen:2017:DSM:3175501.3154503} and \cite{yin2018byzantine}. 
We illustrate our results by applying them to the classical linear regression and logistic regression problems in Section \ref{sec: linear regression}.

Though we successfully relax the distributional assumption from sub-gaussian 
to sub-exponential, no evidence so far hints that no further relaxation is possible. 
In fact, our analysis framework developed in this work is not tied to sub-exponential assumptions. 
With different gradient distributional assumptions such as bounded second moment, one can follow our analysis roadmap (Lemma 2) to obtain
different uniform concentration bounds for the sample covariance matrix of gradient vectors, which in turn
implies different error bounds to our robust gradient descent method. Detailed comments on our distributional assumptions can be found in Remark \ref{rmk:sub-exponential}. 

The concurrent work \cite{diakonikolas2018sever} attempts to prove uniform approximation bounds under bounded second moment assumption. However, their proof, in its current form, only holds for sub-gaussian distributions. See the proof of \cite[Prop B.5]{diakonikolas2018sever} for details. More technical comparisons with the concurrent papers \cite{diakonikolas2018sever,prasad2018robust} can be found in Remark \ref{rmk:gradient_approx}.

\vskip \baselineskip 

Let $S^{d-1}= \sth{v \in \reals^d: \norm{v}=1}$ denote the unit Euclidean sphere. 
\begin{assumption}\label{ass:gradient_sub_gaussian}
The sample gradient at the optimal model parameter $\theta^*$, i.e., $\nabla f(X, \theta^*)$, is sub-exponential with 
constants $(\sigma_1,\alpha_1)$, \ie, for every unit vector $v \in S^{d-1}$, 
\begin{align*}  
\expect{ \exp \left( \lambda \Iprod{ \nabla f(X, \theta^*) }{v} \right) } \le e^{\sigma_1^2 \lambda^2/2},
\quad \forall |\lambda| \le \frac{1}{\alpha_1}.
\end{align*}
\end{assumption}
We further assume the Lipschitz continuity of the sample gradient functions. 
\begin{assumption}\label{ass:bounded_hessian}
There exists an $L^{\prime}$ such that 
$$
\norm{\nabla  f(X,\theta) - \nabla f(X, \theta^{\prime})} ~ \le ~ L^{\prime} \norm{\theta - \theta^{\prime}} ~~ \forall ~ \theta, \theta^{\prime} \in \Theta. 
$$
\end{assumption}
For applications where Assumption \ref{ass:bounded_hessian} does not hold  deterministically, 
it suffices to have Assumption \ref{ass:bounded_hessian} hold with high probability 
for all training data. 
Notably, $L'$ may be
much larger than $L$ and even scale polynomially in $N$ and $d$.  However, $L'$ 
will affect our results only by logarithmic factors $\log L'$. 

Next define the gradient difference function
\begin{align}
\label{eq: t1}
h(X, \theta) ~ = ~ \nabla f(X, \theta) - \nabla f(X, \theta^*) - \left( \nabla F(\theta)- \nabla F(\theta^*) \right). 
\end{align}
Note that $h(X,\theta)/\norm{\theta-\theta^*}$ characterizes the change rate of $f(X,\theta) - \nabla F(\theta)$ from 
$f(X,\theta^*) - \nabla F(\theta^*)$; hence it can be viewed as
a local Lipschitz parameter with respect to $\theta^*$. 
\begin{assumption}\label{ass:gradient_sub_exp}
The local Lipschitz parameter $h(X, \theta)/\norm{\theta-\theta^*}$ is sub-exponential with constants $(\sigma_2, \alpha_2)$, \ie., for every  $\theta \in \Theta$ and $v \in S^{d-1}$, 
$$
\expect{\exp \left( \frac{  \lambda \Iprod{ h(X,\theta) }{v} }{\|\theta-\theta^*\|} \right)}
\le e^{\sigma_2^2 \lambda^2 /2},
\quad \forall |\lambda| \le \frac{1}{\alpha_2}. 
$$
\end{assumption} 
Notably, Assumption \ref{ass:gradient_sub_exp} assumes a concentration of 
the \emph{local} Lipschitz parameter with respect to $\theta^*$, instead of a {\em global} Lipschitz parameter. 

Again, our analysis may not be tied to sub-exponential assumptions. 
%
%
Now we are ready to present our main results. 
\begin{theorem}
\label{thm: opt}
Suppose Assumptions \ref{ass:pop_risk_smooth}--\ref{ass:gradient_sub_exp} hold.
Assume that $\log (L+L') =O(\log (Nd) )$ and $\Theta \subset \{\theta: \norm{\theta-\theta^*} \le r \}$
for some positive parameter $r$ such that $\log r = O(\log (Nd)  )$. 
Suppose that $N \ge c d^2 \log^8 (Nd))$ and $N \ge c q$ 
for a sufficiently large constant $c$, and that $4q \le m \le e^{\sqrt{d}}$.   
Further assume that $M\ge 1$. 
Then there exists a gradient aggregator $\calR$ 
such that  with probability at least $1-3e^{-\sqrt{d}}$,
the iterates $\{\theta_t\}$ given by Algorithm \ref{alg:BGD} satisfy 
$$
\norm{\theta_t - \theta^* } \lesssim  \pth{1-\frac{M^2}{16L^2}}^t \norm{\theta_0-\theta^*} +  \sqrt{\frac{q}{N}} + \sqrt{\frac{d }{N} } . 
$$
\end{theorem}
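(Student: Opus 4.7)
The plan is to reduce the theorem to a \emph{uniform} approximate-gradient guarantee: show that, with probability at least $1-2e^{-\sqrt{d}}$, the aggregated gradient $G(\theta)$ produced on line 4 of Algorithm \ref{alg:BGD} satisfies
\begin{align*}
\norm{G(\theta) - \nabla F(\theta)} \;\le\; \xi_1 + \xi_2 \norm{\theta - \theta^*} \qquad \forall\,\theta \in \Theta,
\end{align*}
with $\xi_1 \lesssim \sqrt{q/N}+\sqrt{d/N}$ and $\xi_2$ smaller than a sufficiently small multiple of $M^2/L$. Once this is in place, the standard approximate gradient descent contraction, which plugs Assumption \ref{ass:pop_risk_smooth} into $\theta_t = \theta_{t-1}-\eta\,G(\theta_{t-1})$ as in \prettyref{eq:noisy_gd progress}, gives linear decay with rate $(1-M^2/(16L^2))^{1/2}$ and an additive floor of order $\xi_1$, matching the theorem. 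Crucially, because the deviation bound is uniform in $\theta$, a \emph{single} high-probability event covers every iterate $\theta_{t-1}$ across all rounds, which is what lets us deal with the mobile Byzantine set (whose choice at round $t$ may depend on $\theta_{t-1}$) without any measurability headaches.

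To obtain the uniform bound I would decompose the per-sample gradient via \prettyref{eq: t1} as
\begin{align*}
\nabla f(X,\theta) \;=\; \nabla f(X,\theta^*) \;+\; \bigl(\nabla F(\theta)-\nabla F(\theta^*)\bigr) \;+\; h(X,\theta),
\end{align*}
so the deterministic piece cancels with $\nabla F(\theta)$ and only two stochastic terms remain. Feeding this into the Steinhardt et al.\ iterative filtering aggregator, its robustness guarantee bounds $\norm{G(\theta)-\tfrac{1}{m}\sum_{j\in\mathrm{good}} g_j(\theta)}$ by a constant times $\sqrt{q/m}$ multiplied by the operator norm of the sample covariance of the honest local gradients at $\theta$. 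Handling the two summands separately, the $\nabla f(X,\theta^*)$ piece is sub-exponential by Assumption \ref{ass:gradient_sub_gaussian} and contributes $\xi_1 \lesssim \sqrt{q/N}+\sqrt{d/N}$ (the first term from the Byzantine inflation of the honest variance, the second from the ordinary $\sqrt{d/N}$ rate of the sample mean of $N$ sub-exponential vectors in $d$ dimensions). The $h(X,\theta)$ piece, by Assumption \ref{ass:gradient_sub_exp}, has covariance of order $\norm{\theta-\theta^*}^2$, which after the same analysis contributes $\xi_2\norm{\theta-\theta^*}$ with $\xi_2$ governed by $\sqrt{q/m}+\sqrt{d/N}$ times the sub-exponential scale $\sigma_2$.

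The main obstacle, and the step I expect to require the new covariance inequality the authors advertise, is upgrading the covariance concentration of the $h$-vectors from a single $\theta$ to all $\theta\in\Theta$ simultaneously. Since $\nabla f(X,\theta)$ is only sub-exponential (not sub-gaussian), the uniform covariance bound of Charikar, Steinhardt and Valiant is unavailable; instead I would prove a Bernstein-type tail inequality for $\norm{\tfrac{1}{N}\sum_i Y_iY_i^\top - \mathbb{E}[YY^\top]}$ when $Y$ is sub-exponential, by discretizing $S^{d-1}$ with a standard $1/2$-net of size $5^d$ and applying Bernstein to the scalar sub-Weibull variables $\iprod{Y_i}{v}^2$. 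To then make this hold uniformly over $\theta\in\Theta$, I would build an $\epsilon$-net of $\Theta$ with $\epsilon$ polynomially small in $1/(Nd)$; the size of this net is at most $(r/\epsilon)^d$, and because Assumption \ref{ass:bounded_hessian} and the hypothesis $\log(L+L')=O(\log(Nd))$ let $L'$ control covariance perturbation between neighboring net points, one shows that the off-net error is negligible. A union bound over the net contributes an extra $d\log(Nd)$ factor, which is exactly what the hypothesis $N\ge c\,d^2\log^8(Nd)$ is designed to absorb.

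Finally, I would check the bookkeeping: with the uniform $(\xi_1,\xi_2)$ established, set $\eta$ as in the theorem and verify that the per-iteration contraction $\sqrt{1-M^2/(4L^2)} + \eta\,\xi_2$ is at most $\sqrt{1-M^2/(16L^2)}$ (using $N\ge c'q$ and $M\ge 1$ to guarantee $\xi_2$ is small enough), and that the floor $\eta\,\xi_1$ is of order $\sqrt{q/N}+\sqrt{d/N}$. Unrolling the recursion yields the stated bound, while the single union bound over the $\theta$-net also handles the $O(\log N)$ distinct iterates encountered along the trajectory, completing the argument.
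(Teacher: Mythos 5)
Your high-level plan is the same as the paper's: establish a uniform deviation bound $\norm{G(\theta)-\nabla F(\theta)}\le \xi_1+\xi_2\norm{\theta-\theta^*}$ over all $\theta\in\Theta$ (so one event covers every iterate), feed it into the approximate gradient descent contraction via Assumption \ref{ass:pop_risk_smooth}, and unroll. The decomposition via $h(X,\theta)$, the use of the iterative filtering aggregator and the reduction to a bound on the honest-sample covariance spectral norm, and the $\epsilon$-net over $\Theta$ scaled inversely in $L+L'$, all match \prettyref{thm: uniform bound} and its proof.

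The real gap is inside the concentration step, where you write that you would ``apply Bernstein to the scalar sub-Weibull variables $\iprod{Y_i}{v}^2$.'' This cannot be done off the shelf. When $Y$ is sub-exponential, $\iprod{Y}{v}^2$ has tails decaying like $e^{-c\sqrt{t}}$; it has no moment generating function, so the entire Chernoff/Bernstein route is blocked. The paper's central technical contribution is exactly the workaround: \prettyref{lmm:truncation_con} decomposes each such variable over a dyadic (in fact geometric) truncation scheme, applies Bennett's inequality (\prettyref{lmm:Bennett}) on each truncation level where the variable is bounded, and union-bounds over levels, yielding a large-deviation inequality up to logarithmic slack. That inequality then drives \prettyref{thm:sub_exp_matrix}. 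Moreover, the paper explicitly shows that the existing sub-exponential matrix bounds (\cite[Corollary 3.8]{adamczak2010quantitative} and \cite[Theorem 5.44]{vershynin2010introduction}) are too weak in the regime you need: to survive a union bound over a $(r/\epsilon)^d$-size net of $\Theta$ you need failure probability $e^{-\Omega(d)}$, and at that confidence level those bounds give $\sqrt{md}+d$ or $\sqrt{m}+d^{3/2}$ rather than the near-optimal $\sqrt{m}+d\cdot\mathrm{polylog}$. Without the sharper \prettyref{thm:sub_exp_matrix}, the net union bound inflates $\xi_1$ and $\xi_2$ beyond the stated $\sqrt{q/N}+\sqrt{d/N}$ rate, so this step is not mere bookkeeping but the load-bearing lemma, and your proposal leaves it unproven.

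Two minor points: your contraction target $\sqrt{1-M^2/(16L^2)}$ should be $1-M^2/(16L^2)$ as in the paper (the final form is the same after unrolling), and the stepsize in the theorem is $\eta=M/(2L^2)$ (the statement's $L/(2M^2)$ appears to be a typo in the paper); your derivation implicitly uses the former, which is the correct one.
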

Note that in Theorem \ref{thm: opt}, the Lipschitz parameters $L, L^{\prime}$, and the size of the search space $r$ are allowed to scale even polynomially in $N$ and $d$. 
%
The estimation error $O(\sqrt{q/N} + \sqrt{d/N})$ in Theorem \ref{thm: opt} significantly improves the previous results ($O(\sqrt{dq/N})$ for $q\ge 1$) 
\cite{Chen:2017:DSM:3175501.3154503}.  
Recall that even in the failure-free and centralized setting the minimax-optimal error rate is $O(\sqrt{d/N})$. Thus, an immediate consequence of Theorem \ref{thm: opt} is that 
as long as $q=O(d)$, our proposed algorithm achieves the optimal error rate $O(\sqrt{d/N})$. 
The sample complexity $N \ge c d^2 \log^8 (Nd)$ appears to be suboptimal at first glance. However, it turns out that under sub-exponential distributional assumption, if we rely on uniform concentration of the sample covariance matrices, this sample complexity is order optimal up to logarithmic factors.
See Remark \ref{rmk: sample complexity} for details. 

\begin{remark}
\label{rmk: coordinate-wise median}
One concurrent work~\cite{yin2018byzantine} uses coordinate-wise median and trimmed mean
and obtains an estimation error 
$rd(  q/(m\sqrt{n}) + 1/\sqrt{N} )$ up to logarithmic factors -- noting that the radius  of the model parameter
space $r$ is typically on the order of $\sqrt{d}$. 
This error rate
is shown to be order-optimal in the 
low dimensional regime where $d=O(1)$~\cite{yin2018byzantine}, but turns out to scale poorly in
dimension $d$. 
\end{remark}

As an important ingredient of our proof of Theorem \ref{thm: opt}, we establish a {\em uniform} concentration of the sample covariance matrix of  gradients. Recall that in our problem local sample gradients $\frac{1}{n} \textstyle{\sum}_{i\in \calS_j} \nabla f (X_i, \theta)$'s are sub-exponential random vectors. 
Standard routine to bounding the 
spectral norm of the sample covariance matrix 
is available, 
 see \cite[Theorem 5.44]{vershynin2010introduction} and \cite[Corollary 3.8]{adamczak2010quantitative} for example. 
 However, it turns out that using these existing results, the uniform 
 concentration bound obtained is far from being optimal. To this end, 
 we develop a new concentration inequality for matrices with i.i.d.\ sub-exponential  column vectors. As can be seen later, this new inequality leads to a near-optimal uniform bound.

\begin{theorem}\label{thm:sub_exp_matrix}
Let $A$ be a $d \times m$ matrix whose columns $A_j$ are independent and identically distributed sub-exponential, 
zero-mean random vectors 
in $\reals^d$ with parameters $(\sigma,\alpha)$. Assume that 
\begin{align}
\label{eq: scale invariant}
\sigma/\alpha=\Omega(1).
\end{align}
Then with probability at least $1-\delta$,
$$
 \opnorm{A} \le c \left( \sigma\sqrt{m}  + \sigma \phi \left(  d + \log \frac{1}{\delta} \right)  + \alpha \phi^2\left(  d + \log \frac{1}{\delta} \right)  \right) ,
$$
where $c$ is a universal positive constant and $\phi(x): \reals \to \reals$ is a function given by
$
\phi(x) =  \sqrt{x} \log^{3/2} (x).
$
\end{theorem}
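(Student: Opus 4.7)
The overall strategy is to reduce $\opnorm{A}$ to a supremum of $\|A^\top u\|$ over a finite $\varepsilon$-net of $S^{d-1}$, then control each such norm via truncation and Bernstein's inequality applied to the resulting sum of squared sub-exponentials.

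First, I would use a standard volumetric argument to produce a $1/4$-net $\calN\subset S^{d-1}$ with $|\calN|\le 9^d$ such that $\opnorm{A}\le 2\sup_{u\in\calN}\|A^\top u\|$. Writing $\|A^\top u\|^2=\sum_{j=1}^m X_j(u)^2$ with $X_j(u)=\iprod{u}{A_j}$ an i.i.d.\ zero-mean $(\sigma,\alpha)$-sub-exponential scalar, the problem reduces to a tail bound on $\sum_j X_j(u)^2$ strong enough to survive a union bound of cardinality $|\calN|\le 9^d$. From the sub-exponential tail $\prob{|X_j(u)|>T}\le 2e^{-T/(2\alpha)}$, a union bound over $u\in\calN$ and $j\in[m]$ shows that for $T\asymp \alpha L$ with $L:=d+\log(1/\delta)$ (absorbing $\log m$ into $L$, which is permitted since the end-use of the theorem places $m\le e^{\sqrt{d}}$), on an event $\calE$ of probability $1-\delta/3$ every $|X_j(u)|$ is at most $T$. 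On $\calE$ each $X_j(u)^2$ is bounded by $T^2$ and has variance at most $\Expect[X_j(u)^4]\lesssim\sigma^4$; the fourth-moment bound follows by expanding the Gaussian part of the MGF, which is valid under the scale assumption $\sigma/\alpha=\Omega(1)$.

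Next I would apply Bernstein's inequality for bounded random variables to $\sum_j(X_j(u)^2-\Expect[X_j(u)^2])$, yielding
\[
\prob{\textstyle\sum_j X_j(u)^2 > m\sigma^2 + t}\le 2\exp\!\big(-c\min(t^2/(m\sigma^4),\, t/T^2)\big).
\]
Setting the right-hand side below $\delta/(3|\calN|)$ and solving for $t$ gives $t\lesssim \sigma^2\sqrt{mL}+T^2 L$. Combining with the truncation event $\calE$ and taking a union bound over $\calN$ yields $\|A^\top u\|^2\lesssim m\sigma^2+\sigma^2\sqrt{mL}+\alpha^2 L^3$ simultaneously for all $u\in\calN$; square-rooting, using AM-GM to split the mixed term $\sigma^2\sqrt{mL}$ as $\lesssim (\sigma\sqrt m)^2 + \sigma^2 L$, and invoking $\sigma/\alpha=\Omega(1)$ produces a bound of the same qualitative shape as the theorem, namely $\opnorm{A}\lesssim \sigma\sqrt m + \sigma\sqrt L + \alpha L^{3/2}$.

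The chief obstacle is sharpening the polylogarithmic factors so that the tail term becomes $\alpha\phi(L)^2=\alpha L\log^3 L$ rather than the crude $\alpha L^{3/2}$ produced above. Recovering the sharper scaling requires a more delicate choice of truncation level, roughly $T\asymp \alpha\sqrt{L}\log^{3/2}L$, so that a non-negligible number of $X_j(u)$ may exceed $T$; the residual contribution is then handled by a dyadic peeling argument that uses the sub-exponential tail to control the count $|\{j:|X_j(u)|>T_k\}|$ at each dyadic level $T_k=2^k T$, sums the (level$^2$)$\times$(count) contributions, and closes a convergent geometric series. Executing this peeling and interfacing it with the $\varepsilon$-net union bound is the delicate step of the argument; once these $\psi_{1/2}$-type tail estimates for $\sum_j X_j(u)^2$ are in hand, the remainder is a routine concentration calculation and produces the stated bound.
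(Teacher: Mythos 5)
Your overall roadmap matches the paper's: reduce $\opnorm{A}$ via a $1/4$-net of $S^{d-1}$ to controlling $\sum_{j}\iprod{A_j}{u}^2$, recognize that each summand is $\psi_{1/2}$-tailed (square of sub-exponential), and concentrate via truncation plus Bernstein/Bennett. You also correctly diagnose where the naive argument loses: a single truncation at $T\asymp\alpha L$ with $L=d+\log(1/\delta)$ produces the tail term $\alpha L^{3/2}$, which is strictly worse than the theorem's $\alpha\phi^2(L)=\alpha L\log^3 L$.

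However, the step that closes this gap --- the ``dyadic peeling'' --- is precisely the technical heart of the proof, and you only gesture at it. The paper isolates this as a standalone result (Lemma~\ref{lmm:truncation_con}), proved in Appendix~\ref{app: truncation}: it decomposes each $Y_j=\iprod{A_j}{u}^2/\sigma^2$ into pieces $Y_{j,k}=Y_j\indc{e^{k-1}\le|Y_j|\le e^k}$ supported on exponentially growing bands, applies Bennett's inequality to each level's partial sum $S_k=\sum_j Y_{j,k}$ (using the variance decay coming from the tail function $E$), chooses the thresholds $t_k=t/(2(k+1)^2)$ so the levels stack, and unions over $k$. This is not the same as your sketch of bounding the exceedance counts $|\{j:|X_j(u)|>T_k\}|$ and multiplying by $T_{k+1}^2$; your version discards the within-level variance information that Bennett exploits, and it is not checked that it reproduces the $\log^3$ factor. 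So while the plan is aimed in the right direction, the lemma that actually does the work is left unproved, and the claim that ``the remainder is routine'' understates where the content of the theorem lives.

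A second, smaller point: you absorb $\log m$ into $L$ by appealing to $m\le e^{\sqrt d}$, but Theorem~\ref{thm:sub_exp_matrix} is stated without any assumption relating $m$ and $d$, and its bound has no $\log m$ term. Your single-truncation step requires the event $\{|X_j(u)|\le T\ \forall j,u\}$ simultaneously for all $m\cdot 9^d$ pairs, which genuinely forces $T\gtrsim\alpha(d+\log m+\log(1/\delta))$. The paper's proof avoids this entirely because Lemma~\ref{lmm:truncation_con} never requires a uniform almost-sure bound on all of the $X_j$; the $m$-dependence ends up appearing only additively (inside $mt$), yielding the clean $\sigma\sqrt{m}$ term with no $\log m$ pollution. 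So beyond the unexecuted peeling, this part of your proposal would prove a strictly weaker theorem than the one stated.
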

If $A$ has sub-Gaussian columns, \ie, $\alpha=0$, 
then the upper bound in 
\prettyref{thm:sub_exp_matrix} matches the sub-Gaussian matrix concentration inequality~\cite{vershynin2018}[Theorem 5.39] 
up to logarithmic factors. If $\sigma,\alpha=\Theta(1)$ and $\log(1/\delta) = d$, \prettyref{thm:sub_exp_matrix}
implies that with probability at least $1-e^{-d},$
$\norm{A} \lesssim  \sqrt{m} +  d \log^3 d$, 
which is tight up to logarithmic factors;
whereas the analogous bound implied by standard concentration inequality~\cite[Corollary 3.8]{adamczak2010quantitative}
is  on the order of  $\sqrt{md } + d.$ See Remark \ref{rmk: spectral bounds} in Section \ref{proof of thm matrix} for details. 

With the performance guarantee of our robust aggregator $\calR$ 
(formally stated in the next subsection) 
and Theorem \ref{thm:sub_exp_matrix}, it can be shown that 
the approximate gradients used in the robust gradient descent update 
(Algorithm \ref{alg:BGD} with the chosen robust aggregator $\calR$) are good uniformly over $\theta \in \Theta$. 
\begin{theorem}
\label{thm: uniform bound}
Suppose Assumptions \ref{ass:pop_risk_smooth}--\ref{ass:gradient_sub_exp} hold.
Assume that $\log (L+L') =O(\log (Nd) )$ and $\Theta \subset \{\theta: \norm{\theta-\theta^*} \le r \}$
for some positive parameter $r$ such that $\log r = O(\log (Nd)  )$. Suppose $N \ge c d^2 \log^8 (Nd))$ for a sufficiently large constant $c$ and $m \le e^{\sqrt{d}}$.
Let $G(\theta)$ (for each $\theta \in \Theta$) be the aggregated gradient returned by 
Algorithm \ref{ag: robust mean}.
Then with probability at least $1- 3e^{-\sqrt{d} }$,  
\begin{align*}
\norm{G(\theta)-\nabla F(\theta)} ~ \lesssim  ~ & \left(  \sqrt{\frac{q}{N} } + \sqrt{ \frac{d}{N} }  \log^2 (Nd)  \right) 
 \norm{\theta-\theta^*}  \\
 & +    \sqrt{\frac{q}{N}} + \sqrt{\frac{d }{N} } 
\end{align*}
holds for all $\theta \in \Theta$. 
\end{theorem}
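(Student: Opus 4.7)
My plan is in two parts: (i) a \emph{pointwise} bound on $\norm{G(\theta)-\nabla F(\theta)}$ for each fixed $\theta$, using the guarantee of the iterative-filtering aggregator $\calR$ together with Theorem~\ref{thm:sub_exp_matrix}; and (ii) a \emph{uniform} upgrade over $\Theta$ via an $\varepsilon$-net argument that leverages Assumption~\ref{ass:bounded_hessian} to bridge between net points.

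For the pointwise step, decompose each non-faulty worker's gradient using~\eqref{eq: t1} as $g_j(\theta) = u_j + \zeta_j(\theta) + \nabla F(\theta)$, where $u_j := \frac{1}{n}\sum_{i\in \calS_j}(\nabla f(X_i,\theta^*)-\nabla F(\theta^*))$ is independent of $\theta$, and $\zeta_j(\theta) := \frac{1}{n}\sum_{i\in \calS_j} h(X_i,\theta)$. By Assumptions~\ref{ass:gradient_sub_gaussian} and~\ref{ass:gradient_sub_exp}, $u_j$ is mean-zero sub-exponential with parameters $(\sigma_1/\sqrt{n},\alpha_1/n)$ and $\zeta_j(\theta)$ is mean-zero sub-exponential with parameters $(\sigma_2\norm{\theta-\theta^*}/\sqrt{n},\alpha_2\norm{\theta-\theta^*}/n)$. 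The robust aggregator of~\cite{Steinhardt18} guarantees $\norm{G(\theta)-\bar g(\theta)} \lesssim \sqrt{q/m}\,\opnorm{\Sigma(\theta)}^{1/2}$, where $\bar g(\theta)$ is the sample mean over non-faulty workers and $\Sigma(\theta)$ their sample covariance. Applying Theorem~\ref{thm:sub_exp_matrix} to the $d\times m$ matrices with columns $\{u_j\}$ and $\{\zeta_j(\theta)\}$, combined with the hypothesis $N\gtrsim d^2\log^8(Nd)$, yields $\opnorm{\Sigma(\theta)} \lesssim (\sigma_1^2 + \sigma_2^2\norm{\theta-\theta^*}^2)/n$, hence $\norm{G(\theta)-\bar g(\theta)} \lesssim \sqrt{q/N}\,(1+\norm{\theta-\theta^*})$. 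Standard sub-exponential vector concentration on $\bar g(\theta)-\nabla F(\theta)$ contributes an additional $\lesssim \sqrt{d/N} + \sqrt{d\log(Nd)/N}\,\norm{\theta-\theta^*}$.

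For the uniform step, let $\Theta_\varepsilon\subset\Theta$ be an $\varepsilon$-net with $\varepsilon = (Nd(L+L'))^{-c}$ for a large enough $c$; since $\log r = O(\log(Nd))$ and $\log(L+L')=O(\log(Nd))$, the covering number satisfies $\log|\Theta_\varepsilon| \lesssim d\log(Nd)$. Applying the pointwise bound at confidence $\delta = \exp(-\sqrt{d})/|\Theta_\varepsilon|$ and union-bounding over $\Theta_\varepsilon$ inflates $\log(1/\delta)\asymp d\log(Nd)$, which propagates through $\phi(x)=\sqrt{x}\log^{3/2}(x)$ in Theorem~\ref{thm:sub_exp_matrix} and, after normalization by $\sqrt{m}$ coming from $\opnorm{\Sigma}\le\opnorm{A}^2/m$, yields the $\log^2(Nd)$ factor attached to the Lipschitz coefficient in the claim. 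Crucially, the $u_j$-piece of $\bar g(\theta)-\nabla F(\theta)$ is $\theta$-independent, so its concentration does not pay the net cost and the additive term keeps a clean $\sqrt{d/N}$. For arbitrary $\theta\in\Theta$, pick the nearest $\theta'\in\Theta_\varepsilon$: Assumption~\ref{ass:bounded_hessian} gives $\norm{g_j(\theta)-g_j(\theta')}\le L'\varepsilon$ for every non-faulty $j$, Assumption~\ref{ass:pop_risk_smooth} gives $\norm{\nabla F(\theta)-\nabla F(\theta')}\le L\varepsilon$, and since the iterative-filtering output is effectively a convex combination of a filtered subset of its inputs, $\norm{G(\theta)-G(\theta')} \lesssim L'\varepsilon$, which is $o(1/N)$ by the choice of $\varepsilon$.

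\textbf{Where the work is.} The main obstacle is reconciling the nonlinearity of $\calR$ with uniform control in $\theta$: Byzantine workers may choose their reports as an arbitrary function of $\theta$, so $G(\cdot)$ carries no a~priori Lipschitz structure. The resolution is to show that the aggregator's output is Lipschitz in the \emph{non-faulty} inputs alone, while adversarial perturbations are already absorbed in the $\sqrt{q/N}$ term of the pointwise bound; making this formally quantitative is the principal technical burden. A secondary subtlety is that the classical sub-exponential matrix concentration would cost an additional $\sqrt{d}$ factor in $\opnorm{\Sigma(\theta)}^{1/2}$, which would destroy the near-optimal rate; Theorem~\ref{thm:sub_exp_matrix} is precisely the new ingredient that closes this gap and keeps the final dependence at $\sqrt{d/N}\,\mathrm{polylog}(Nd)$.
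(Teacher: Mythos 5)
Your decomposition of the local gradients into a $\theta$-independent piece $u_j$ and a Lipschitz-controlled piece $\zeta_j(\theta)$, your use of Theorem~\ref{thm:sub_exp_matrix} to control the sample covariance, and your net-size/confidence bookkeeping that produces the $\log^2(Nd)$ factor all parallel the paper. The structural gap is in your uniform step: you apply the aggregator guarantee pointwise at the net points and then try to bridge to an arbitrary $\theta$ by arguing $\norm{G(\theta)-G(\theta')}\lesssim L'\varepsilon$. That step does not go through. The filtering map $\calR$ is not Lipschitz in its inputs — the active set $\calA$ and the weights $c_i$ are determined by a minimax SDP followed by hard thresholding, so an arbitrarily small perturbation of the inputs can change which points survive the filter and hence jump the output. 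Worse, the Byzantine reports $g_j(\theta)$ and $g_j(\theta')$ for a faulty $j$ can be entirely unrelated; Assumption~\ref{ass:bounded_hessian} constrains only the non-faulty coordinates. Your own ``where the work is'' paragraph concedes this (``making this formally quantitative is the principal technical burden''), which is an admission that the proof is not closed at exactly the point that matters.

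The paper resolves this by never comparing $G(\theta)$ to $G(\theta')$. Lemma~\ref{lmm:algo} is a \emph{deterministic} statement: for any observed (possibly corrupted) inputs, once the uncorrupted sample satisfies the spectral-norm condition~\eqref{eq: bounded sample spectral}, the output $\hat\mu$ is within $O(\sigma\sqrt{\epsilon})$ of the true sample mean, regardless of what the adversary sent. So the quantity that needs to hold \emph{uniformly} in $\theta$ is only the spectral bound on the uncorrupted data matrix $G(X_\calS,\theta)$ (Proposition~\ref{prop: second mom 1}), together with the uniform law of large numbers for the sample mean (Proposition~\ref{prop:uniform_converg}). The $\varepsilon$-net argument is then carried out on the matrix $H(X_\calS,\theta)=G(X_\calS,\theta)-G(X_\calS,\theta^*)$, whose columns are built from the $h(X_i,\cdot)$ functions and are genuinely $(L+L')$-Lipschitz by Assumptions~\ref{ass:pop_risk_smooth} and~\ref{ass:bounded_hessian} — this is where the bridging lives, at the level of the raw data, not at the level of the aggregator output. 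Once the uniform spectral bound holds on a single high-probability event, Lemma~\ref{lmm:algo} can be invoked separately at every $\theta\in\Theta$ with a $\theta$-dependent $\sigma(\theta)$, and no continuity property of $G(\cdot)$ is ever needed. If you reroute your net argument onto the data matrix rather than onto $G$, the rest of your proof assembles correctly.
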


\begin{remark}
\label{rmk: sample complexity}
Theorem \ref{thm: uniform bound} requires the total sample size $N \gtrsim d^2$ (ignoring the logarithmic factors), 
which is due to our sub-exponential assumption of local Lipschitz parameter $h(X,\theta)/\norm{\theta-\theta^*}$. This sample size requirement $N \gtrsim d^2$ is inevitable as can be seen from the linear regression example. 
(cf. \prettyref{rmk:tight_upper_bound}). 
If instead $h(X,\theta)/\norm{\theta-\theta^*}$ is assumed to be sub-Gaussian, then $N \gtrsim d$ suffices. 
\end{remark}

\begin{remark}\label{rmk:gradient_approx}
The robust gradient estimation has also been studied
in two concurrent papers~\cite{diakonikolas2018sever,prasad2018robust} for $m=N.$   
Under an $\epsilon$-contamination model with $\epsilon=q/N$, 
\cite[Lemma 1]{prasad2018robust} proves a \emph{point-wise} approximation bound to $\nabla F(\theta)$
for a given $\theta$,
which scales as $\sqrt{q/N} + (d/N)^{3/8} + q^{1/4} \sqrt{d/N}$ up to logarithmic factors. 
In contrast, a uniform approximation bound similar to ours is proved 
in \cite[Prop B.5]{diakonikolas2018sever}. 
 However, their bound only holds under
the stringent condition $q \gg d^2 (L')^2$,
where $L'$ is the Lipschitz continuity parameter of  the sample gradient function given in 
\prettyref{ass:bounded_hessian}. Note that $L'$ may scale polynomially in $N, d$. For instance,
in the standard linear regression model, $L'=\Omega(d)$ (See \prettyref{lmm:least-squares}).
In fact, \cite[Prop B.5]{diakonikolas2018sever} assumes that 
$\norm{\nabla f(X, \theta) - \nabla F(\theta)}$ is bounded by $L'$ 
and hence the proof follows from a straightforward application of 
Hoeffding's inequality plus an $\epsilon$-net argument. 

In contrast, with Assumption \ref{ass:gradient_sub_exp}, we obtain a tighter uniform approximation bound via the new matrix concentration inequality given in Theorem \ref{thm:sub_exp_matrix}. 
\end{remark}

\begin{remark}[Strongly-convex assumption]
\label{rmk:strongly-convex}
We next explain the strong convexity assumption assumed in our paper.
First, we clarify that we only require \emph{the population risk} to be strongly-convex, 
while the empirical risk (sample version) could be highly non-convex. 
Second, we point out that 
it is possible to enforce the strong convexity of the population risk by introducing proper regularization.
For example,  we add an extra $\ell_2$ norm regularization to the quadratic loss in ridge regression. 
Thirdly, while our results do not directly apply to settings where the population risk is not strongly convex, 
our results are still important  for the following two-fold reasons: (1) 
while many learning problems are highly non-convex globally, they sometimes
satisfy certain 
\emph{restricted strong convexity} properties (the Hessian matrix are strictly positive definite 
in certain regions or directions around the optimal model parameter) \cite{Negahban2011Restricted}
 and thus gradient descent
schemes are still able to converge to the optimal model parameter \cite{Ma17}. 
Therefore, our robust gradient descent results can still be applied to these settings; 
(2) 
It is possible to extend our results to non-convex settings
by combining our robust gradient descent methods with proper saddle-points escaping 
schemes; such extension has been pursued recently in a follow-up work~\cite{yin2018defending}.

\end{remark}

\begin{remark}\label{rmk:sub-exponential} 

Finally, we explain the sub-exponential assumption on the gradient vectors. 
First, while sub-gaussian \emph{data distribution} is commonly assumed in statistical learning literature,  
the resulting \emph{gradients} may be sub-exponential, as we illustrated in the simple linear regression example. 
Loosely speaking, this is due to the fact that 
the gradients may involve  $x^2$ term, which is sub-exponential even if $x$ is sub-Gaussian.
Similar phenomenon also occurs in logistic regression. 
Thus, it is important to consider sub-exponential or even heavier-tailed 
\emph{gradient distribution} when we try to robustify a learning procedure such as gradient descent. 
Second, our analysis framework developed in this work is not tied to sub-exponential assumptions. 
With different gradient distributional assumptions such as bounded second moment, 
one can follow our analysis roadmap (Lemma 2) to obtain
different uniform concentration bounds for the sample covariance matrix of gradient vectors which in turn
implies different error bounds to our robust gradient descent method. 
Thirdly, certain distributional assumptions are inevitable to some extent,
in order to show the learning procedures perform well on the average case beyond the worst case 
guarantees~\cite{vershynin2018}.

\end{remark}

\subsection{Robust Gradient Aggregator}
\label{subsec: robust agg}

In this subsection, we present the robust gradient aggregator $\calR$ used in Algorithm \ref{alg:BGD}. 
We present the aggregator in the setup of robust mean estimation. 

Let $\calS = \sth{y_1, \cdots, y_m}$ be the true sample. 
Define $\mu_{\calS} =\frac{1}{m} \sum_{i=1}^m y_i$ as the sample mean on $\calS$.
Let $\sth{\hat{y}_1, \cdots, \hat{y}_m} \subseteq  \reals^d$ be the observed sample, which is obtained from $\calS$ by adversarially corrupting up to $q =\epsilon m$ data points. 
We use an iterative filtering algorithm proposed in
\cite{Steinhardt18}, formally presented in Algorithm \ref{ag: robust mean}. 
At a high level, by solving \eqref{eqn: opt} and \eqref{eq:opt_max_min} for a saddle point $(W,U)$, 
Algorithm \ref{ag: robust mean} iteratively finds a 
direction (given by $U^*$) along which all data points are spread out the most, 
and filters away data points which have large residual errors
projected along this direction (given by \prettyref{eq:residual_error}). See Appendix \ref{app: robust mean} for detailed discussions.
\begin{algorithm}
\caption{Iterative Filtering for Robust Mean Estimation \cite{Steinhardt18}}
\label{ag: robust mean}
{\em Input}: Sample $\sth{\hat{y}_1, \cdots, \hat{y}_m}\subseteq \reals^d$, 
 $1-\alpha \triangleq \epsilon \in [0,\frac{1}{4})$, and $\sigma>0$. \\ 
{\em Initialization}: $\calA \gets \sth{1, \cdots, m}$, 
$c_i \gets 1$ and $\tau_i \gets 0$ for all $i\in \calA$. 
\vskip 0.2\baselineskip 
\begin{algorithmic}[1]
\WHILE{true}
\vskip 0.2\baselineskip 
 \STATE For $W \in \reals^{|\calA| \times |\calA|}$ and $U \in \reals^{d \times d}$, define a cost function 
$\psi: (W,U) \to \reals$ as:
\begin{align*}
\psi(W, U) = \sum_{i \in \calA}  c_i \big( \hat{y}_i - \sum_{j \in \calA} \hat{y}_j W_{ji} \big)^\top U  
\big( \hat{y}_i - \sum_{j \in \calA} \hat{y}_j W_{ji} \big).
\end{align*}
Let $W^*$ be a minimizer to the following convex program:
 \begin{align}  
 \min_{ \substack{ 0\le W_{ji} \le \frac{4-\alpha}{\alpha (2+\alpha)m} \\ \sum_{j\in \calA} W_{ji}=1} }  
 ~~\max_{ \substack{U \succeq 0\\ \Tr(U) \le 1} }  ~~ \psi(W, U)
 \label{eqn: opt}
 \end{align}
 and  $U^*$ be a maximizer to the following convex program:
 \begin{align}
  \max_{ \substack{U \succeq 0 \\ \Tr(U) \le 1 }} ~~ \min_{ \substack{ 0\le W_{ji}\le \frac{4-\alpha}{\alpha (2+\alpha)m} \\ \sum_{j\in \calA} W_{ji}=1} } ~~ \psi(W, U)
  \label{eq:opt_max_min}
 \end{align}
\STATE For $i \in \calA$, 
\begin{align}
\tau_i  \gets \big( \hat{y}_i - \sum_{j \in \calA} \hat{y}_j W^*_{ji} \big)^\top U^*  \big( \hat{y}_i - \sum_{j \in \calA} \hat{y}_j W^*_{ji} \big). \label{eq:residual_error}
\end{align}
\IF{$\sum_{i\in \calA} c_i \tau_i > 8 m\sigma^2$}
\STATE For $i\in \calA$, 
$
c_i \gets \pth{1-\frac{\tau_i}{\tau_{\max}}} c_i,
$ 
where $\tau_{\max} =\max_{i\in \calA} \tau_i.$ 
\STATE $\calA \gets \calA /\sth{i: ~c_i\le \frac{1}{2}}$.   
\ELSE 
\STATE Break {\bf while}--loop. 
\ENDIF
\ENDWHILE
\vskip 0.2\baselineskip 
\RETURN  $\hat{\mu} = \frac{1}{|\calA|} \sum_{i\in \calA} \hat{y}_i$. 
\end{algorithmic}
\end{algorithm} 
%

Given the corrupted sample $\sth{\hat{y}_1, \cdots, \hat{y}_m}$, $\epsilon$, and $\sigma$, Algorithm \ref{ag: robust mean} {\em deterministically}  outputs an estimate $\hat{\mu}$ that differs from the true {\em sample mean} 
by at most a bounded distance, formally stated in Lemma \ref{lmm:algo}. 
 
\begin{lemma} \cite{Steinhardt18}
\label{lmm:algo}
Suppose that 
\begin{align}
\label{eq: bounded sample spectral}
\opnorm{\frac{1}{m} \sum_{i\in \calS} \pth{y_i - \mu_{\calS}}\pth{y_i - \mu_{\calS}}^{\top} } \le \sigma^2. 
\end{align}
Then for $q/m = \epsilon \le \frac{1}{4}$, Algorithm \ref{ag: robust mean} outputs a parameter $\hat{\mu}$ such that 
\begin{align}
\norm{ \hat{\mu}- \mu_{\calS}}  = O(\sigma \sqrt{\epsilon}).  \label{eq:error_bound_main}
\end{align}
\end{lemma}

Condition \eqref{eq: bounded sample spectral} 
ensures that the uncorrupted data points $y_i$'s
are well concentrated around the sample mean $\mu_{\calS}$
in every direction. If there are large residual errors found by 
Step 3 of Algorithm \ref{ag: robust mean}, they are likely caused by the corrupted data points rather than the good data points. 
%

\begin{remark}
Note that condition \eqref{eq: bounded sample spectral} is slightly different from that 
in \cite{Steinhardt18}: the summation is taken over the entire true sample $\calS$ rather than a subset of sample. We make this modification in order to include the regime  $q/m = o(1)$. 
For completeness, we present the proof of Lemma \ref{lmm:algo} in Appendix \ref{app: robust mean}. 
 
Also, the termination of Algorithm \ref{ag: robust mean} requires the knowledge of $\sigma$; however, in the setup of statistical learning, this might further call for the knowledge of $\theta^*$, which is not practical. Considering this, we have an alternative termination condition which does not need to know any parameter other than $\epsilon$. 
See Appendix \ref{app: ag modification} for details. 
\end{remark}

Formally, we use Algorithm \ref{ag: robust mean} as our robust gradient aggregator $\calR$ with 
inputs 
$$
\hat{y}_1(\theta) = g_1(\theta), ~ \cdots,  ~  \hat{y}_m(\theta) = g_m(\theta),
$$
where $g_1(\theta), \cdots,  g_m(\theta)$ are the local gradient functions computed by the $m$ workers, 
among which up to $q$ reported gradient functions may not be the true local gradient functions.  
The true $m$ local gradient functions are given by 
 \begin{align} 
 y_1(\theta)=  \textstyle{\frac{1}{n} \sum}_{i\in \calS_1} \nabla f (X_i, \theta), ~ \cdots, 
 ~y_m(\theta) = \textstyle{\frac{1}{n}  \sum}_{i\in \calS_m} \nabla f (X_i, \theta).
 \label{eq:def_y}
\end{align}
 -- recalling that $| \calS_j|=n=\frac{N}{m}$ for each $j\in [m]$. 
 The true sample mean $\mu_{\calS}(\theta)$ is 
\begin{align}
\mu_{\calS}(\theta) = \frac{1}{m}\sum_{j=1}^m y_j(\theta) = \frac{1}{N} \sum_{i=1}^N \nabla f (X_i, \theta),
\label{eq:def_sample_mean}
\end{align}
and the population mean $\mu(\theta)$ is $\nabla F(\theta).$

Note that other robust mean estimation algorithms given in \cite{DiakonikolasKK017, lai2016agnostic} might also suffice for our purpose, and we would like to explore these different robust gradient aggregation schemes in the future. 

\section{Main Analysis}


Before presenting our main analysis,  we briefly discuss two important implications of \prettyref{lmm:algo} in the robust mean estimation setup.  

\begin{remark}\label{remark: sample to population mean}
In \prettyref{lmm:algo}, the estimation error bound \prettyref{eq:error_bound_main} is in 
terms of $\norm{\hat{\mu}-\mu_{\calS}}$. Let $\mu$ be the true mean of the unknown underlying distribution.
We can easily deduce an estimation error bound in terms of $\norm{ \hat{\mu}- \mu }$ from
the following triangle inequality  
 \begin{align*}
 \norm{ \hat{\mu}- \mu } &\le \norm{ \hat{\mu}- \mu_{\calS}} + \norm{ \mu_{\calS} - \mu} = O\left(\sigma \sqrt{\epsilon} \right) +\norm{ \mu_{\calS} -  \mu}. 
 \end{align*}
  Thus, to characterize the estimation error $\norm{ \hat{\mu}-  \mu}$, it is enough to control the spectral norm of the \emph{true} sample covariance matrix $\opnorm{\frac{1}{m} \sum_{i\in \calS} \pth{y_i - \mu_{\calS}}\pth{y_i - \mu_{\calS}}^{\top} }$ 
 and the deviation of the empirical average $\norm{ \mu_{\calS} - \mu}$  -- the latter of which is standard. 
\end{remark}

\begin{remark}\label{rmk: spectral triangle}
Note that 
 \begin{align*}
 & \opnorm{\frac{1}{m} \sum_{i\in \calS} \pth{y_i - \mu_{\calS}}\pth{y_i - \mu_{\calS}}^{\top}} \\
 	& = \frac{1}{m} \opnorm{\pth{\qth{y_1, \cdots, y_m} - \mu_{\calS} \bm{1}_m^{\top} }\pth{\qth{y_1, \cdots, y_m} - \mu_{\calS} \bm{1}_m^T}^{\top}}\\
 & = \frac{1}{m} \opnorm{\qth{y_1, \cdots, y_m} - \mu_{\calS} \bm{1}_m^{\top}}^2\\
 & \le \frac{1}{m} \left( \opnorm{\qth{y_1, \cdots, y_m} - \mu \bm{1}_m^{\top} }+ \sqrt{m}\norm{\mu - \mu_{\calS}}\right)^2,
 \end{align*}
 where $\bm{1}_m \in \reals^m$ is an all-ones vector. Therefore, to derive $\sigma$ in condition \eqref{eq: bounded sample spectral},
 it is enough to bound $\norm{\mu - \mu_{\calS}}$ and $\frac{1}{\sqrt{m}} \opnorm{\qth{y_1, \cdots, y_m} -  \mu \bm{1}_m^{\top}}$.
\end{remark}

Back to our statistical learning problem, as discussed in Remarks \ref{remark: sample to population mean} and \ref{rmk: spectral triangle},  
to guarantee that the aggregated gradient is close to the true gradient uniformly over all $\theta \in \Theta$,
it suffices to bound 
\begin{align}
\label{eq: uniform 2}
\norm{ \textstyle{\frac{1}{N} \sum_{i=1}^N} \nabla f (X_i, \theta) -\nabla F(\theta)}
\end{align}
and
\begin{align}
\label{eq: uniform 1}
\frac{1}{\sqrt{m}} 
 \big\| \big[  
 & \textstyle{\frac{1}{n} \sum_{i\in \calS_1}} \nabla f (X_i, \theta) -\nabla F(\theta), ~ \cdots~, \nonumber \\
&  \textstyle{\frac{1}{n} \sum_{i\in \calS_m}} \nabla f (X_i, \theta) - \nabla F(\theta) 
\big] \big\| 
\end{align}
uniformly over all $\theta \in \Theta$. Getting a uniform bound to 
\prettyref{eq: uniform 2} involves  
standard concentration of sum of i.i.d.\ random vectors and is relatively easy. 
The main challenge is to uniformly bound \prettyref{eq: uniform 1}, for which
we develop a (nearly tight) matrix concentration inequality.

\subsection{New Matrix Concentration Inequality: \prettyref{thm:sub_exp_matrix} and its Proof}
\label{proof of thm matrix}

The aim of this subsection is to present our main technical tool to derive a tight uniform
bound to \prettyref{eq: uniform 1}. This subsection is independent from the rest of the
paper and can be skipped at the first reading. 

For any fixed $\theta$, the matrix in \prettyref{eq: uniform 1}
is of independent columns; standard routine to bound \prettyref{eq: uniform 1} point-wise 
is available, see \cite[Theorem 5.44]{vershynin2010introduction} and \cite[Corollary 3.8]{adamczak2010quantitative} for example. To get a uniform concentration result, we can use $\epsilon$--net argument to extend the concentration of a fixed $\theta$ to uniform over all $\theta \in \Theta$. Nevertheless, using these standard matrix concentration results, the 
uniform concentration bound obtained is far from being optimal, as we explain next. 

The following theorem is, to the best of our knowledge, a state-of-the-art concentration inequality for matrices with sub-exponential columns~\cite[Corollary 3.8]{adamczak2010quantitative}.
\begin{theorem}\label{thm:sub_exp_spectral_2} 
Let $A$ be a $d \times m$ matrix whose columns $A_j$ are i.i.d., zero-mean, 
sub-exponential random vectors in $\reals^d$ with the  scaling parameters $\sigma$ and $\alpha$. 
Assume that $\sigma,\alpha=O(1)$ and $m \le e^{\sqrt{d}}$. 
There are absolute positive constants $C$ and $c$ such that for every $K \ge 1$, with
probability at least $1-e^{-c K \sqrt{d}},$
$$
\opnorm{A} \le C K \left( \sqrt{m} + \sqrt{d} \right). 
$$
\end{theorem}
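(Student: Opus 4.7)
The plan is to combine truncation of the columns with an $\epsilon$-net argument and Bernstein's inequality, following the classical route behind the Adamczak--Litvak--Pajor--Tomczak-Jaegermann framework that the cited corollary draws from.

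First I would rewrite $\opnorm{A} = \sup_{u \in S^{d-1},\, v \in S^{m-1}} \sum_j v_j \iprod{u}{A_j}$ and pass to $\tfrac{1}{4}$-nets $\calN_d, \calN_m$ of the two unit spheres, of respective cardinalities at most $9^d$ and $9^m$, which reduces the operator norm to a supremum over a finite set at the cost of a multiplicative factor of $2$. For fixed $(u,v) \in \calN_d \times \calN_m$, each $\iprod{u}{A_j}$ is a mean-zero sub-exponential random variable with parameters $(\sigma,\alpha) = O(1)$, so the weighted sum $\sum_j v_j \iprod{u}{A_j}$ is amenable to Bernstein's inequality for sub-exponential summands, which delivers a sub-Gaussian tail $\exp(-ct^2)$ for $t$ below a transition scale and a sub-exponential tail $\exp(-ct/\linf{v})$ beyond it.

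The key difficulty is that a naive union bound over the $9^{d+m}$ net pairs is too weak: the sub-exponential Bernstein tail $\exp(-ct/\linf{v})$ cannot absorb the $9^m$ factor when $v$ is coordinate-like (i.e.\ when $\linf{v}$ is close to $1$). To circumvent this, I would truncate. Fix a threshold $T \asymp K\sqrt{d}$ and split $A_j = A_j^\le + A_j^>$ with $A_j^\le = A_j \Indc\{\norm{A_j}\le T\}$. Sub-exponentiality of $\norm{A_j}$ gives $\prob{\norm{A_j}>T} \le e^{-cT}$; a union bound over the $m \le e^{\sqrt{d}}$ columns, together with the choice of $T$, shows that with probability at least $1 - e^{-c'K\sqrt{d}}$ no column is truncated, so $A = A^\le$. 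On this event each summand $v_j \iprod{u}{A_j^\le}$ is bounded by $T$ in magnitude and has variance at most $v_j^2 \cdot O(1)$, so Bernstein delivers an effective sub-Gaussian tail $\exp(-ct^2)$ at the relevant scale $t = CK(\sqrt{d}+\sqrt{m})$. Combining the net union bound (with exponent $cK^2(d+m)$) with the $9^{d+m}$ cardinality and the truncation event then yields the claimed $1 - e^{-cK\sqrt{d}}$ probability.

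The main obstacle will be reconciling the sub-exponential tails with the net cardinalities: one must separately handle the ``peaky'' directions of $S^{m-1}$ where $\linf{v}$ is of order one, typically by decomposing the sphere into a sparse/peaky part (controlled by the bounded truncation level and the fact that only a few coordinates matter in the inner sum) and a flat part (where Bernstein's sub-Gaussian regime suffices). Calibrating $T \asymp K\sqrt{d}$ is precisely what makes both sides of this split consistent with the tight $K\sqrt{d}$ scaling of the final tail probability in the stated regime $m \le e^{\sqrt{d}}$; any larger $T$ would weaken the truncation probability beyond $e^{-cK\sqrt{d}}$, while any smaller $T$ would fail to put Bernstein in the sub-Gaussian regime at the target scale.
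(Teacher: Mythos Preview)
This theorem is not proved in the paper; it is quoted from Adamczak--Litvak--Pajor--Tomczak-Jaegermann and used as a black box, so there is no in-paper proof to compare against.

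Your sketch has a real gap at the truncation step. The hypothesis bounds only one-dimensional marginals $\iprod{u}{A_j}$; it says nothing directly about $\norm{A_j}$. A net on $S^{d-1}$ gives at best $\prob{\norm{A_j}>t}\lesssim 5^{d} e^{-ct}$, which is useless until $t\gtrsim d$, not $t\asymp K\sqrt d$. The bound $\prob{\norm{A_j}>CK\sqrt d}\le e^{-cK\sqrt d}$ that you invoke is essentially Paouris's large-deviation inequality for isotropic log-concave vectors---the actual setting of the cited reference and a deep theorem in its own right---not a consequence of marginal sub-exponentiality. Even granting truncation at $T=K\sqrt d$, the subsequent Bernstein step does not close: each term is bounded by $\linf{v}\,T$ with total variance $O(1)$, so at the target scale $t=CK(\sqrt d+\sqrt m)$ you are in Bernstein's linear regime with exponent of order $t/(\linf{v}\,T)$; against a coordinate vector $v$ in the $S^{m-1}$-net this is $\asymp 1+\sqrt{m/d}$, far too small to absorb the $9^{d+m}$ cardinality. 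You correctly flag the peaky/flat decomposition as the remaining obstacle but do not carry it out, and the calibration $T\asymp K\sqrt d$ is in any case unjustified under the stated hypotheses. A more promising route---and the one the paper takes for its own new matrix inequality---is to net only on $S^{d-1}$ and control $\opnorm{AA^\top-m\Sigma}=\sup_{v}\bigl|\sum_j\bigl(\iprod{A_j}{v}^2-\Expect[\iprod{A_j}{v}^2]\bigr)\bigr|$, which sidesteps the $S^{m-1}$ net entirely.
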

Note that assuming $m \le e^{\sqrt{d}}$ only loses minimal generality in the high-dimensional regime. 
The above theorem is tight up to constant factors when the tail probability is  
on the order of $e^{-\sqrt{d}}$, i.e., $K=\Theta(1)$, see~\cite[Remark 3.7]{adamczak2010quantitative}
for a proof. 
However, in our problem, to guarantee a uniform bound to \prettyref{eq: uniform 1}
via an $\epsilon$--net argument, 
we need a tail probability on the order of $e^{-d}$, i.e., $K\approx \sqrt{d}$. 
In this case, \prettyref{thm:sub_exp_spectral_2} yields an upper
bound on the order of $\sqrt{md} + d$. 
Using matrix Bernstein's inequality given by
\cite[Theorem 5.44]{vershynin2010introduction} instead, we can obtain an alternative upper bound 
$O(\sqrt{m}+d^{3/2})$. 
Both of these two upper bounds turn out to be loose. To this end, we develop a new matrix concentration 
inequality, proving a nearly-tight upper bound on the order of $\sqrt{m}+d$ up to
logarithmic factors.

A key step in deriving a concentration inequality for matrices with sub-exponential random vectors
is to obtain a large deviation inequality for 
the sum of independent random variables whose tails decay 
\emph{slower} than sub-exponential random variables. 
Note that in this case, the moment generating function may not exist 
and thus we cannot follow the standard approach to obtain a large deviation
inequality by invoking the Chernoff bound. To circumvent this,  
we partition the support of a real-valued random variable $Y$ into countably many finite segments, and write $Y$ as a summation of component random variables, each of which is supported on its corresponding segment. Due to the fact that each segment is of finite length, we can apply Bennett's inequality for bounded random variables (cf. \prettyref{lmm:Bennett}).
Then we take a union bound to arrive at a concentration result of the original $Y$. Some additional care is needed in choosing the partition. Our proof is inspired by
Proposition 2.1.9 and Excercise 2.1.7 in~\cite{tao.rmt}.

\begin{lemma}
\label{lmm:truncation_con}
Let $Y$ be a random variable whose tail probability satisfies 
\begin{align*}
\prob{\abth{Y} \ge t} \le \exp \pth{-E(t)},
\end{align*}
where $E(t): \reals_+ \to [0, \infty]$ is a non-decreasing
function. 
Suppose that 
\begin{align}
\label{ass: tail 2}
E(t)/t \text{ is monotone in } t,
\end{align}
and there exists $t_0 \ge e^2$ such that for all $t \ge t_0$ and
all $k$ with $4(k+1)^2 e^{k} \ge t$,
\begin{align}
\label{eq: tail condition}
 E(e^{k-1}) \ge  2\pth{2k+ 4\log(k+1) + \log 2-\log t}.
\end{align}
Let $Y_1, \cdots, Y_m$ be $m$ independent copies of $Y$. 
If $E(t)/t $ is non-decreasing, then
\begin{align}
& \prob{\abth{\textstyle{\sum}_{j=1}^m Y_j - m\expect{Y}} \ge mt}  \nonumber  \\
&\le 2 \log (mt)\exp \pth{- \frac{ m }{4( \log (mt)+1)^2 } E\pth{ \frac{t}{4e \log^2 t } } } \nonumber \\
&\quad + \exp \pth{-\frac{1}{2}E\pth{\frac{mt}{e}}}; \label{lm: con increase}
\end{align}
if $E(t)/t $ is non-increasing, then
\begin{align}
& \prob{\abth{\textstyle{\sum}_{j=1}^m Y_j - m\expect{Y}} \ge mt}  \nonumber  \\
& \le 2 \log (mt)\exp \pth{- \frac{1}{4 e(\log (mt)+1)^2} E\pth{\frac{mt}{e}}} \nonumber \\
& \quad +  \exp \pth{-\frac{1}{2}E\pth{\frac{mt}{e}}}. \label{lm: con decrease}
\end{align} 
\end{lemma} 

\begin{remark}
To illustrate the upper bound \eqref{lm: con increase},
let us consider the following special cases.

Case 1: Suppose $Y$ is sub-Gaussian. In this case, $E(t) = c t^2$ for a universal constant $c>0$. 
Thus, there exists a universal constant $t_0\ge e^2$ such that
	both \eqref{eq: tail condition} and \eqref{ass: tail 2} hold. 
	Then \eqref{lm: con increase} gives the desired sub-Gaussian tail bound 
	$e^{-\Omega(mt^2)}$ up to logarithmic factors in the exponent. 
	
Case 2: Suppose $Y$ is sub-exponential. In this case, $E(t)= c t$ for a universal constant $c$. Thus, 
	there exists $t_0\ge e^2$ that only depends on $c$ such that 
	both \eqref{eq: tail condition} and \eqref{ass: tail 2} hold.
	Then \eqref{lm: con increase} gives 
	the desired sub-exponential tail bound $e^{-\Omega(mt)}$ 
	up to logarithmic factors in the exponent. 

Case 3: Suppose $Y=Z^2$, where $Z$ is sub-exponential. 
In this case, $E(t)= c \sqrt{t}$ for a universal constant $c>0$.
	Thus, there exists $t_0\ge e^2$ that only depends on $c$ such that 
	both \eqref{eq: tail condition} and \eqref{ass: tail 2} hold. 
Then \eqref{lm: con decrease} gives 
	a tail bound $e^{-\Omega(\sqrt{mt})}$ 
	up to logarithmic factors in the exponent.
\end{remark}
Despite the fact that Lemma \ref{lmm:truncation_con} is loose up to logarithmic factors compared to the standard sub-gaussian and sub-exponential random variables, Lemma \ref{lmm:truncation_con} applies to much larger family than the sub-gaussian distributions, and requires much less structure on the distributions. In particular, Lemma \ref{lmm:truncation_con} does not require the existence of moment generating function. 

The proof of Lemma \ref{lmm:truncation_con} can be found in Appendix \ref{app: truncation}. 
\prettyref{lmm:truncation_con} is our key machinery to obtain the concentration inequality
for matrices with i.i.d.\ sub-exponential  random vectors given in \prettyref{thm:sub_exp_matrix}.
We restate the theorem below for ease of reference.

\begin{theorem*}[\prettyref{thm:sub_exp_matrix}]
\label{thm:sub_exp_matrix_restate}
Let $A$ be a $d \times m$ matrix whose columns $A_j$ are independent and identically distributed sub-exponential, 
zero-mean random vectors 
in $\reals^d$ with parameters $(\sigma,\alpha)$. Assume that 
\begin{align}
\sigma/\alpha=\Omega(1).
\end{align}
Then with probability at least $1-\delta$,
$$
 \opnorm{A} \le c \left( \sigma\sqrt{m}  + \sigma \phi \left(  d + \log \frac{1}{\delta} \right)  + \alpha \phi^2\left(  d + \log \frac{1}{\delta} \right)  \right) ,
$$
where $c$ is a universal positive constant and $\phi(x): \reals \to \reals$ is a function given by
$
\phi(x) =  \sqrt{x} \log^{3/2} (x).
$
\end{theorem*}
\begin{remark}
\label{rmk: spectral bounds}
We discuss two consequences of \prettyref{thm:sub_exp_matrix}.

	Suppose $\alpha=0$. In this case, $A$ has sub-Gaussian columns, and \prettyref{thm:sub_exp_matrix} 
implies that 
$$
\opnorm{A} ~ \lesssim ~ \sigma \left( \sqrt{m} + \sqrt{d+ \log \frac{1}{\delta}} \log^{3/2} \left( d + \log \frac{1}{\delta} \right) \right),
$$
which matches the sub-Gaussian matrix concentration inequality~\cite[Theorem 5.39]{vershynin2018} 
up to logarithmic factors. 

Suppose $\sigma,\alpha=\Theta(1)$, and $\log(1/\delta) = d$. 
In this case, we get  that
with probability at least $1-e^{-d},$
\begin{align}
\norm{A} ~ \lesssim ~ \sqrt{m} +  d \log^3 d  \quad \text{ implied by  \prettyref{thm:sub_exp_matrix} },  \label{eq:norm_bound_exp}
\end{align}
whereas the analogous bound implied by \prettyref{thm:sub_exp_spectral_2} is on the order of  $\sqrt{md } + d.$
Using matrix Bernstein's inequality given by
\cite[Theorem 5.44]{vershynin2010introduction} instead, we can obtain an alternative upper bound 
$O(\sqrt{m}+d^{3/2})$.
The upper bound \prettyref{eq:norm_bound_exp} is tight up to logarithmic factors; see \prettyref{app:tightness_upper_bound} for a proof. 
\end{remark}

The proof of \prettyref{thm:sub_exp_matrix} also uses  the 
following standard concentration inequality for sum of independent sub-exponential random variables. In particular, we use this concentration inequality to get a concentration bound at $\theta^*$.  
\begin{lemma}\cite[Proposition 2.2]{wainwright2015basic}
\label{lmm:sum_sub_exponential}
Let $Y_{1}, \ldots, Y_{m} $ denote a sequence of independent random variables, where $Y_{j}$'s are sub-exponential
with scaling parameters $(\sigma_{j}, \alpha_{j})$ and mean $0$. Then $\sum_{j=1}^{m} Y_{j}$ is sub-exponential with scaling parameters $(\sigma,\alpha)$, where $\sigma^{2} = \sum_{j=1}^{m}\sigma_{j}^{2}$ and $\alpha = \max_{1\leq j \leq m}\alpha_{j}.$ Moreover,
\begin{align*}
\prob{ \sum_{j=1}^{m} Y_j \geq t } \leq
	\begin{cases}
		\exp  \left( - \frac{t^{2}}{  2\sigma^{2} } \right)  & \text{ if }  0 \le t \le \sigma ^{2}/\alpha; \\
		\exp \left(  - \frac{t}{  2\alpha  } \right) &  \text{ o.w. }	
		\end{cases}
\end{align*}
\end{lemma}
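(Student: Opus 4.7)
The plan is a textbook Chernoff-bound argument via the moment generating function. The hypothesis that each $Y_j$ is mean-zero sub-exponential with parameters $(\sigma_j, \alpha_j)$ gives the MGF bound $\expect{\exp(\lambda Y_j)} \le \exp(\sigma_j^2 \lambda^2 / 2)$, valid for $|\lambda| \le 1/\alpha_j$. First I would multiply these bounds over $j$, using independence of the $Y_j$'s, to obtain
\begin{align*}
\expect{\exp\pth{\lambda \textstyle\sum_{j=1}^m Y_j}} \;=\; \prod_{j=1}^m \expect{\exp(\lambda Y_j)} \;\le\; \exp\pth{\sigma_\ast^2 \lambda^2 / 2},
\end{align*}
valid on the common domain $|\lambda| \le 1/\alpha_\ast = 1/\max_j \alpha_j$, where $\sigma_\ast^2 = \sum_j \sigma_j^2$. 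This directly establishes the first claim: the sum itself is sub-exponential with parameters $(\sigma_\ast, \alpha_\ast)$.

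Next I would apply Markov's inequality to $\exp(\lambda \sum_j Y_j)$: for any $\lambda \in [0, 1/\alpha_\ast]$,
\begin{align*}
\prob{\textstyle\sum_j Y_j \ge t} \;\le\; \exp(-\lambda t)\cdot\expect{\exp\pth{\lambda \textstyle\sum_j Y_j}} \;\le\; \exp\pth{-\lambda t + \sigma_\ast^2 \lambda^2 / 2},
\end{align*}
and then optimize $\lambda$ over this interval. The unconstrained minimizer of the right-hand side is $\lambda^\star = t/\sigma_\ast^2$. When $t \le \sigma_\ast^2/\alpha_\ast$, this $\lambda^\star$ satisfies the constraint and yields the Gaussian-type bound $\exp(-t^2/(2\sigma_\ast^2))$. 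When $t > \sigma_\ast^2/\alpha_\ast$, the constrained minimum is attained at the boundary $\lambda = 1/\alpha_\ast$, producing
\begin{align*}
\exp\pth{-t/\alpha_\ast + \sigma_\ast^2/(2\alpha_\ast^2)} \;\le\; \exp\pth{-t/(2\alpha_\ast)},
\end{align*}
where the last inequality absorbs $\sigma_\ast^2/(2\alpha_\ast^2)$ into $t/(2\alpha_\ast)$ using $t \ge \sigma_\ast^2/\alpha_\ast$. The two cases together match the piecewise bound in the lemma.

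Honestly there is no substantive obstacle here: the whole argument is a one-paragraph calculation that appears in any textbook treatment of sub-exponential concentration (which is why the authors simply cite Vershynin rather than writing it out). The only care needed is that the MGF's domain of validity $|\lambda| \le 1/\alpha_\ast$ restricts the Chernoff optimization, and it is precisely the crossover onto the boundary $\lambda = 1/\alpha_\ast$ that produces the exponential (rather than sub-Gaussian) tail in the large-$t$ regime.
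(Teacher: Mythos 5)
Your argument is correct and is precisely the standard Chernoff/MGF derivation that the paper implicitly delegates to by citing Vershynin rather than proving the lemma. The two case distinctions (feasible interior optimizer $\lambda^\star = t/\sigma_\ast^2$ for $t \le \sigma_\ast^2/\alpha_\ast$, boundary $\lambda = 1/\alpha_\ast$ otherwise) and the final absorption of $\sigma_\ast^2/(2\alpha_\ast^2)$ into $t/(2\alpha_\ast)$ are all verified correctly.
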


The following lemma gives an upper bound to the spectral norm of the covariance matrix  
 of a sub-exponential random vector.
\begin{lemma}\label{lmm:sub_exp_covariance}
Let $Y \in \reals^d$ denote a zero-mean, sub-exponential random vector with scaling parameters $(\sigma,\alpha)$, and
$\Sigma$ denote its covariance matrix $\Sigma=\expect{YY^\top}.$ Then
$$
\opnorm{\Sigma} \le 4 \sigma^2 + 16 \alpha^2.
$$
\end{lemma}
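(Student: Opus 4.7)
The plan is to reduce the spectral-norm bound to a one-dimensional moment bound and then use the standard two-regime sub-exponential tail inequality together with a tail-integration for the second moment.

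First, I would use the variational characterization
\[
\opnorm{\Sigma} \;=\; \sup_{v \in S^{d-1}} v^\top \Sigma v \;=\; \sup_{v \in S^{d-1}} \expect{\iprod{Y}{v}^2}.
\]
So it suffices to show that, for every unit vector $v$, the scalar $Z := \iprod{Y}{v}$ satisfies $\expect{Z^2} \le 4\sigma^2 + 16\alpha^2$. By the defining hypothesis, $Z$ is mean-zero and $\expect{e^{\lambda Z}} \le e^{\sigma^2 \lambda^2/2}$ for all $|\lambda|\le 1/\alpha$, so $Z$ is sub-exponential with parameters $(\sigma,\alpha)$ in the scalar sense.

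Next, I would invoke the standard Chernoff/optimization argument on this MGF bound (essentially the one-sided version of \prettyref{lmm:sum_sub_exponential} applied to a single summand) to get the two-regime tail estimate
\[
\prob{|Z| \ge t} \;\le\; \begin{cases} 2\exp\!\left(-\dfrac{t^2}{2\sigma^2}\right), & 0 \le t \le \sigma^2/\alpha,\\[4pt] 2\exp\!\left(-\dfrac{t}{2\alpha}\right), & t > \sigma^2/\alpha.\end{cases}
\]

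Finally, I would compute the second moment via the layer-cake identity
\[
\expect{Z^2} \;=\; \int_0^\infty 2t\,\prob{|Z|\ge t}\,dt \;\le\; \int_0^{\sigma^2/\alpha} 4t\,e^{-t^2/(2\sigma^2)}\,dt \;+\; \int_{\sigma^2/\alpha}^{\infty} 4t\,e^{-t/(2\alpha)}\,dt.
\]
The first integral is bounded by $4\sigma^2\int_0^\infty e^{-u}\,du = 4\sigma^2$ after the substitution $u = t^2/(2\sigma^2)$. For the second, the substitution $u = t/(2\alpha)$ gives $16\alpha^2\int_{\sigma^2/(2\alpha^2)}^\infty u\,e^{-u}\,du \le 16\alpha^2 \int_0^\infty u\,e^{-u}\,du = 16\alpha^2$. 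Summing yields $\expect{Z^2}\le 4\sigma^2 + 16\alpha^2$, and taking the supremum over $v$ finishes the proof.

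There is no real obstacle here; the only mildly delicate step is making sure the constants in the tail bound and the splitting point are consistent so that both regime integrals close cleanly. The factor $16$ on $\alpha^2$ comes from pairing the factor $4$ in the integrand with the factor $4\alpha^2$ from the $u = t/(2\alpha)$ change of variables, and the factor $4$ on $\sigma^2$ comes analogously from the Gaussian-regime substitution; one should double-check these constants but otherwise the argument is routine.
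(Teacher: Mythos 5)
Your proposal is correct and matches the paper's proof essentially step for step: both reduce $\opnorm{\Sigma}$ to $\sup_{v}\expect{\iprod{Y}{v}^2}$, both invoke the two-regime sub-exponential tail bound $\prob{|\iprod{Y}{v}|\ge t}\le 2\exp(-\tfrac{1}{2}\min\{t^2/\sigma^2,\,t/\alpha\})$, and both integrate by the layer-cake identity and split at $t=\sigma^2/\alpha$ to obtain $4\sigma^2+16\alpha^2$. Your constant-tracking via the substitutions $u=t^2/(2\sigma^2)$ and $u=t/(2\alpha)$ is exactly what the paper's compressed final inequality relies on, and it checks out.
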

\begin{proof}
First recall that 
\begin{align*}
\opnorm{\Sigma} ~ &= \sup_{v\in S^{d-1}} v^\top \Sigma  v
= \sup_{v\in S^{d-1}} v^\top \expect{YY^\top} v 
= \sup_{v\in S^{d-1}} \expect{\iprod{Y}{v}^2}.
\end{align*}
For each unit vector $v$, from \cite[Exercise 1.2.3]{vershynin2018}, we have 
\begin{align}
\label{eq: exponential second moment}
\nonumber
\expect{\iprod{Y}{v}^2} 
&  = \int_{0}^{\infty} 2t ~ \prob{\abth{\iprod{Y}{v}}\ge t}dt\\
\nonumber
& \le   \int_{0}^{\infty} 4t ~ 
\exp \left(  - \frac{1}{2} \min \sth{\frac{t^2}{\sigma^2}, \frac{t}{\alpha}} \right) dt \\
& \le 4 \sigma^2 + 16 \alpha^2. 
\end{align}
Note that the above upper bound is independent of $v.$ The lemma follows by combining the
last two displayed equations. 
\end{proof}

Now, we are ready to present the proof of Theorem \ref{thm:sub_exp_matrix}.
\begin{proof}[\bf Proof of Theorem \ref{thm:sub_exp_matrix}]
Recall $\Sigma=\expect{A_1A_1^\top}$. Then
$$
\opnorm{A}^2=\opnorm{A A^\top} \le \opnorm{AA^\top - m\Sigma} + m \opnorm{\Sigma}.
$$
In view of \prettyref{lmm:sub_exp_covariance}, we have $\opnorm{\Sigma} \le 4\sigma^2 + 16\alpha^2$.
It remains to bound $\opnorm{AA^\top -m \Sigma}.$
Note that  
\begin{align*}
\opnorm{A A^\top - m \Sigma} & = \sup_{v \in S^{d-1}} \abth{v^{\top } \left( A A^{\top} - m \Sigma \right) v} \\
&=\sup_{v \in S^{d-1}} \abth{\sum_{j=1}^m \left( \iprod{A_j}{v}^2 - \expect{\iprod{A_j}{v}^2} \right)}.
\end{align*}
Fix a $v \in S^{d-1}$. Note that 
$\iprod{ A_j}{v}$ is zero-mean sub-exponential random variable with parameter $(\sigma,\alpha)$.
For $j=1, \cdots, m$, define 
\begin{align}
\label{df: yj}
Y_j= \iprod{ A_j}{v}^2/\sigma^2.
\end{align}
It follows from~\prettyref{lmm:sum_sub_exponential} that
\begin{align*}
\prob{|Y_j | \ge t}  = \prob{ \abth{\iprod{A_j}{v}  } \ge \sigma \sqrt{t} }  
 \le 2\exp \left( - \min \left\{  \frac{t}{2} , \frac{\sigma \sqrt{t}}{2\alpha} \right\}\right),
\end{align*} 
We apply \prettyref{lmm:truncation_con} to $Y_1, \cdots, Y_m$ with 
$$
E(t) = \min \left\{  \frac{t}{2} , \frac{\sigma \sqrt{t}}{2\alpha} \right\}   - \log 2,
$$
which is non-decreasing in $t$. By assumption $\sigma/\alpha=\Omega(1)$, 
it follows that $E(t)$ scales as $\sqrt{t}$ in $t$. 
Thus there exists $t_0\ge e^2$ such that \eqref{eq: tail condition} holds. 
In addition, $E(t)/t$ is non-increasing. 
Therefore, \eqref{lm: con decrease} in Lemma \ref{lmm:truncation_con} applies, i.e., for all $t\ge t_0$, 
\begin{align}
\label{eq: concentration y}
\nonumber
& \prob{\abth{\sum_{j=1}^m \left( Y_j - \expect{Y_j} \right) } \ge mt} \\
\nonumber 
 & \le 
 2  \log (mt ) \exp \pth{- \frac{1}{4 e \log^2 (e mt) } E\pth{\frac{mt}{e}}}   + \exp \pth{-\frac{1}{2} E\pth{\frac{mt}{e}}}\\
 & \le 4  \log (mt ) \exp \pth{- \frac{1}{4 e \log^2 (e mt) } E\pth{\frac{mt}{e}}}. 
\end{align}

Next, we apply $\epsilon$-net argument. 
Let $\calN_{\frac{1}{4}}$ be the $\frac{1}{4}$--net of the unit sphere $S^{d - 1}$. 
From \cite[Lemma 5.2]{vershynin2010introduction}, we know that 
$
\abth{\calN_{\frac{1}{4}}} \le 9^d. 
$
In addition, it follows from \cite[Lemma 5.4]{vershynin2010introduction} that 
\begin{align*}
\opnorm{AA^\top - \Sigma} &   \le  
2 \sup_{v \in \calN_{\frac{1}{4}} } 
\abth{\sum_{j=1}^m  \left( \iprod{ A_j}{v}^2 - \expect{\iprod{ A_j}{v}^2} \right)}.
\end{align*} 
Hence, 
\begin{align*}
& \prob{\opnorm{AA^\top - \Sigma} \ge 2\sigma^2 m t }  \\
& \le \prob{ \sup_{v \in \calN_{\frac{1}{4}} } 
\abth{\sum_{j=1}^m \left( \iprod{ A_j}{v}^2 -  \expect{\iprod{ A_j}{v}^2 } \right)} \ge \sigma^2 m t}\\
& \le \abth{\calN_{\frac{1}{4}}} \prob{
\abth{\sum_{j=1}^m  \left( \iprod{ A_j}{v}^2 -  \expect{\iprod{ A_j}{v}^2 } \right)}  \ge \sigma^2 m t}\\
& \le 9^d \prob{ \abth{\sum_{j=1}^m \left( Y_j - \expect{Y_j}  \right) }  \ge  m t} ~~~ \text{by definition of  $Y_j$ in~\eqref{df: yj}}\\
& \le  \exp \pth{- \frac{1}{4 e \log^2 (e mt) } E\pth{\frac{mt}{e}} + \log 4 + \log \log (mt)+ d \log 9},
\end{align*}
where the last inequality holds by \eqref{eq: concentration y}. 
To complete the proof, we need to choose $mt$ so that 
the right hand side of the last inequality is smaller than $\delta$.
In other words,  we need to find $x \ge m t_0$ such that
$$
\frac{1}{4e \log^2(ex) }E(x/e) - \log \log x \ge \log \frac{4}{\delta} + d \log 9 \triangleq a.  
$$
One such $x$ is given by 
$$
x = c \left( a \log^3 a + \frac{\alpha^2}{\sigma^2} a^2 \log^6 a  +m \right),
$$
where $c$ is a sufficiently large constant. Therefore, we choose 
\begin{align*}
m t  = c  \bigg( & 
\left(d + \log \frac{1}{\delta} \right) \log^{3} 
\left( d + \log \frac{1}{\delta} \right)   \\ 
& + \frac{\alpha^2}{\sigma^2} \left(d + \log \frac{1}{\delta} \right)^2 
\log^{6} \left(d + \log \frac{1}{\delta} \right) + m \bigg).
\end{align*}
The theorem follows by taking the square root of $mt.$
\end{proof}

\subsection{Proof of Theorem \ref{thm: uniform bound}}
\label{sec: proof of approximate gradient}
With \prettyref{lmm:algo} and \prettyref{thm:sub_exp_matrix}, 
we are ready to prove Theorem \ref{thm: uniform bound}.
Recall that we need to bound \prettyref{eq: uniform 2}
and \prettyref{eq: uniform 1} uniformly for all $\theta \in \Theta$.
Bounding \prettyref{eq: uniform 2} uniformly is relatively easy 
and has been done in previous work~\cite[Proposition 3.8]{Chen:2017:DSM:3175501.3154503}.

\begin{proposition}\cite[Proposition 3.8]{Chen:2017:DSM:3175501.3154503}
\label{prop:uniform_converg}
Consider the same setup as \prettyref{thm: uniform bound}.  Assume that $N =\Omega( d \log (Nd) )$. 
Then with probability at least $1- e^{-d } $, 
\begin{align*}
\norm{ \frac{1}{N} \sum_{i=1}^N  \nabla f (X_i, \theta)-\nabla F(\theta) }
~ \lesssim ~  \Delta_2 \norm{ \theta-\theta^*}+  \Delta_1 , ~~~ \forall ~ \theta \in \Theta, 
\end{align*}
where 
$$
\Delta_1 \triangleq \sqrt{\frac{d }{N} },   \quad \text{and} \quad   \Delta_2 \triangleq \sqrt{  \frac{ d \log (Nd) }{N} }.
$$
\end{proposition}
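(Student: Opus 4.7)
The plan is to reduce the claim to two concentration arguments via the decomposition induced by the gradient-difference function $h$ defined in \eqref{eq: t1}:
\[
\frac{1}{N}\sum_{i=1}^N \nabla f(X_i,\theta) - \nabla F(\theta) = T_1 + T_2(\theta),
\]
where $T_1 \triangleq \frac{1}{N}\sum_{i=1}^N \bigl(\nabla f(X_i,\theta^*)-\nabla F(\theta^*)\bigr)$ and $T_2(\theta) \triangleq \frac{1}{N}\sum_{i=1}^N h(X_i,\theta)$. I would bound the ($\theta$-independent) term $T_1$ to produce the $\Delta_1$ contribution, and bound $T_2(\theta)$ uniformly over $\theta\in\Theta$ to produce the $\Delta_2\,\norm{\theta-\theta^*}$ contribution.

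For $T_1$, since $\nabla f(X,\theta^*)$ is sub-exponential with constants $(\sigma_1,\alpha_1)$ by Assumption \ref{ass:gradient_sub_gaussian}, \prettyref{lmm:sum_sub_exponential} gives that $\iprod{T_1}{v}$ concentrates on the scale $\sigma_1/\sqrt{N}$ for each fixed $v\in S^{d-1}$. A standard $\tfrac{1}{4}$-net argument on $S^{d-1}$ (cardinality $\le 9^d$) with a union bound yields $\norm{T_1}\lesssim \sqrt{d/N}$ with probability at least $1-e^{-cd}$, provided $N\gtrsim d$; this is $\Delta_1$. For $T_2$, Assumption \ref{ass:gradient_sub_exp} says that $\iprod{h(X,\theta)}{v}/\norm{\theta-\theta^*}$ is sub-exponential with parameters $(\sigma_2,\alpha_2)$, so \prettyref{lmm:sum_sub_exponential} applied to the $N$ iid copies yields, for each fixed pair $(\theta,v)$,
\[
\iprod{T_2(\theta)}{v}\;\lesssim\; \sigma_2\,\sqrt{\log(1/\delta)/N}\,\norm{\theta-\theta^*}
\]
with probability $\ge 1-\delta$, whenever the deviation lies in the sub-Gaussian branch.

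To upgrade this pointwise bound to a bound uniform in $(\theta,v)$, I would build a $\tfrac{1}{4}$-net of $S^{d-1}$ and an $\epsilon$-net $\mathcal{N}_\theta$ of $\Theta\subset\{\norm{\theta-\theta^*}\le r\}$ at scale $\epsilon=1/(NL')$. Then $\log\abth{\mathcal{N}_\theta}\le d\log(3rNL')=O(d\log(Nd))$ under the hypotheses $\log r,\log L'=O(\log(Nd))$, so a union bound over $\mathcal{N}_\theta$ and the sphere net requires the deviation parameter $\log(1/\delta)$ to be of order $d\log(Nd)$; the resulting net-wise bound is $\iprod{T_2(\theta)}{v}\lesssim \sqrt{d\log(Nd)/N}\,\norm{\theta-\theta^*}$, which is $\Delta_2\norm{\theta-\theta^*}$. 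The Lipschitz control $\norm{\nabla f(X,\theta)-\nabla f(X,\theta')}\le L'\norm{\theta-\theta'}$ from Assumption \ref{ass:bounded_hessian} transfers the net bound to all $\theta\in\Theta$ with discretization error at most $2L'\epsilon\le 2/N$, which is dominated by $\Delta_1$.

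The main obstacle is precisely carrying out this uniform step while preserving proportionality to $\norm{\theta-\theta^*}$. Two points need care: first, the net scale $\epsilon$ must be $1/(NL')$ so that the crude Lipschitz overhead does not swamp the main term, and this is affordable only because $\log L'=O(\log(Nd))$ keeps $\log\abth{\mathcal{N}_\theta}$ tame. Second, the sub-Gaussian branch of \prettyref{lmm:sum_sub_exponential} is valid only when the chosen deviation is $\lesssim \sigma_2^2/\alpha_2^2$; setting the deviation at $\sigma_2\sqrt{d\log(Nd)/N}$ requires $N\gtrsim d\log(Nd)$, which is exactly the hypothesis and is what forces the extra $\sqrt{\log(Nd)}$ factor in $\Delta_2$ relative to $\Delta_1$. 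Outside the sub-Gaussian regime, one would pick up an $\alpha_2 d\log(Nd)/N$ term, which the hypothesis $N\gtrsim d\log(Nd)$ ensures is already dominated.
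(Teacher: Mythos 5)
The paper does not prove this proposition in-house; it is cited directly from Chen et al.~\cite[Proposition 3.8]{Chen:2017:DSM:3175501.3154503}. Your decomposition into the $\theta$-independent term $T_1$ and the gradient-difference term $T_2(\theta)=\frac{1}{N}\sum_i h(X_i,\theta)$, combined with a sphere net for $T_1$ and a joint sphere-plus-$\Theta$ net for $T_2$, is exactly the template that the cited work uses, and it also mirrors the paper's own $\epsilon$-net argument in the proof of Proposition \ref{prop: second mom 1} (Bounding~\prettyref{eq: second moment part 2} pointwise, then covering $\Theta$ at scale $\epsilon_0$). Your accounting of where the extra $\sqrt{\log(Nd)}$ in $\Delta_2$ comes from, and why $N\gtrsim d\log(Nd)$ is exactly what keeps the deviation in the sub-Gaussian branch of Lemma \ref{lmm:sum_sub_exponential}, is correct. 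One small inaccuracy: the discretization error for $T_2$ is controlled by the Lipschitz constant of $h(X,\cdot)$, which is $L+L'$ (the $\nabla F$ term contributes $L$), not $L'$ alone, so the net scale should be taken as $\epsilon=1/\bigl(N(L+L')\bigr)$; since $\log(L+L')=O(\log(Nd))$ this does not change any rate, but your stated bound $2L'\epsilon$ and choice $\epsilon=1/(NL')$ are strictly correct only when $L\le L'$. Other than that the argument is sound.
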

It remains to bound \prettyref{eq: uniform 1}
uniformly over all $\theta \in \Theta$. 
For notational convenience, let 
\begin{align}
 G(X_{\calS}, \theta)  \triangleq 
\frac{1}{\sqrt{m}}  
\big[ & \textstyle{\frac{1}{n} \sum_{i\in \calS_1}} \nabla f (X_i, \theta) - \nabla F(\theta), ~ \cdots ~ , \nonumber \\
  &  \textstyle{\frac{1}{n} \sum_{i\in \calS_m}} \nabla f (X_i, \theta) - \nabla F(\theta) 
  \big].
  \label{dfn: R1}
\end{align}

\begin{proposition}
\label{prop: second mom 1}
Consider the same setup as \prettyref{thm: uniform bound}.  
With probability at least $1- 2e^{-\sqrt{d} }$,
for all $\theta \in \Theta$ 
\begin{align}
& \opnorm{G(X_{\calS}, \theta^\ast)} \lesssim \Delta_3 \\
& \opnorm{ G( X_{\calS}, \theta) - G( X_{\calS}, \theta^*)}
\lesssim \Delta_4 \norm{\theta-\theta^*} + \frac{1}{\sqrt{n} }, \label{eq:bound_tight}
\end{align}
where 
\begin{align*}
\Delta_3 ~ &\triangleq ~  \frac{1}{\sqrt{n} }  + \sqrt{ \frac{d}{N} } ,  \\
\Delta_4 ~ &\triangleq ~  \frac{1}{\sqrt{n} } 
+ \frac{1}{\sqrt{N} }   \phi \left(  2d +  d \log \pth{ 1+ r  \sqrt{n} (L+L')} \right) \\
 & \qquad + \frac{1}{\sqrt{Nn} }  \phi^2 \left(2d + d\log \pth{1+ r  \sqrt{n} (L+L')} \right),
\end{align*}
and $
\phi(x) =  \sqrt{x} \log^{3/2} (x).
$
It follows from triangle inequality that 
\begin{align*}
\opnorm{G(X_{\calS}, \theta)} ~ \lesssim ~   \Delta_4 \norm{\theta-\theta^*} + \Delta_3, \; \forall \theta \in \Theta.
\end{align*}
\end{proposition}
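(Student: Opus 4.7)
The statement consists of two inequalities about the $d \times m$ random matrix $G(X_{\calS},\cdot)$ defined in~\eqref{dfn: R1}; both will be obtained by invoking the new matrix concentration inequality \prettyref{thm:sub_exp_matrix}, once at the single point $\theta^\ast$ and once, after an $\epsilon$-net argument, uniformly over $\Theta$.

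\textbf{Step 1: the pointwise bound at $\theta^\ast$.} The $j$-th column of $G(X_{\calS},\theta^\ast)$ equals $\tfrac{1}{\sqrt m\,n}\sum_{i\in\calS_j}\bigl(\nabla f(X_i,\theta^\ast)-\nabla F(\theta^\ast)\bigr)$. The $m$ columns are i.i.d.\ and zero-mean (because $\nabla F(\theta^\ast)=\Expect[\nabla f(X,\theta^\ast)]$). Assumption~\ref{ass:gradient_sub_gaussian} together with \prettyref{lmm:sum_sub_exponential} implies that each column is sub-exponential with parameters $(\sigma_1/\sqrt N,\,\alpha_1/\sqrt{Nn})$. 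Applying \prettyref{thm:sub_exp_matrix} with $\log(1/\delta)=\sqrt d$ (the ratio condition~\eqref{eq: scale invariant} is inherited from $\sigma_1/\alpha_1=\Omega(1)$) produces a bound whose leading piece is $\sigma_1\sqrt m/\sqrt N = \sigma_1/\sqrt n$; the $\phi$ and $\phi^2$ corrections are absorbed into the $\sqrt{d/N}$ term of $\Delta_3$ using the sample-size hypothesis $N\ge cd^2\log^8(Nd)$.

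\textbf{Step 2: pointwise control of the Lipschitz deviation.} By the definition of the gradient-difference function $h$ in~\eqref{eq: t1}, the $j$-th column of $G(X_{\calS},\theta)-G(X_{\calS},\theta^\ast)$ is exactly $\tfrac{1}{\sqrt m\,n}\sum_{i\in\calS_j}h(X_i,\theta)$, which has mean zero by construction. Assumption~\ref{ass:gradient_sub_exp} says $h(X,\theta)/\|\theta-\theta^\ast\|$ is sub-exponential with parameters $(\sigma_2,\alpha_2)$, so for each fixed $\theta$ the columns of $[G(X_{\calS},\theta)-G(X_{\calS},\theta^\ast)]/\|\theta-\theta^\ast\|$ are i.i.d.\ sub-exponential with parameters $(\sigma_2/\sqrt N,\,\alpha_2/\sqrt{Nn})$. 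Invoking \prettyref{thm:sub_exp_matrix} with $\log(1/\delta)=\sqrt d + d\log\bigl(r\sqrt n(L+L')\bigr)$ yields the three-term bound whose shape matches $\Delta_4\|\theta-\theta^\ast\|$, with $\phi$-arguments exactly $d\log\bigl(r\sqrt n(L+L')\bigr)$.

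\textbf{Step 3: $\epsilon$-net upgrade to uniformity.} I would build a net $\calN$ of $\Theta\subset\{\|\theta-\theta^\ast\|\le r\}$ with resolution $\epsilon=1/\bigl(\sqrt n(L+L')\bigr)$, so that $\log|\calN|\lesssim d\log\bigl(r\sqrt n(L+L')\bigr)$ matches precisely the $\log(1/\delta)$ budget chosen in Step~2. Assumptions~\ref{ass:pop_risk_smooth} and~\ref{ass:bounded_hessian} together imply each column of $G(X_{\calS},\cdot)$ is $(L+L')/\sqrt m$-Lipschitz, so the full matrix is at most $(L+L')$-Lipschitz in Frobenius norm (which dominates operator norm). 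A union bound over $\calN$ controls the deviation simultaneously at every net point with probability at least $1-e^{-\sqrt d}$; for arbitrary $\theta\in\Theta$, the closest net point $\tilde\theta$ adds a slack of order $(L+L')\epsilon=1/\sqrt n$, which the leading $1/\sqrt n$ term of $\Delta_4$ already accommodates.

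\textbf{Expected obstacle.} The delicate calibration is in Step~3: the net resolution $\epsilon$ trades against $\log|\calN|$, and $\log|\calN|$ in turn feeds inside the $\phi$-terms of \prettyref{thm:sub_exp_matrix}. The choice $\epsilon=1/(\sqrt n(L+L'))$ is essentially forced: it simultaneously makes the Lipschitz slack $\asymp 1/\sqrt n$, matching the leading term of $\Delta_4$, and makes $\log|\calN|\asymp d\log(r\sqrt n(L+L'))$, matching the $\phi$-arguments of $\Delta_4$. Verifying that the three resulting contributions from Theorem~\ref{thm:sub_exp_matrix} line up exactly with the three terms of $\Delta_4$, so that $L+L'$ enters only through a logarithm, is the technical heart of the argument; the rest is bookkeeping.
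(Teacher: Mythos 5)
Your Steps~2 and~3 are essentially the paper's own argument: define $H(X_{\calS},\theta)=G(X_{\calS},\theta)-G(X_{\calS},\theta^\ast)$, normalize by $\norm{\theta-\theta^\ast}$, apply \prettyref{thm:sub_exp_matrix} pointwise, then run an $\epsilon$-net at resolution $\epsilon\asymp 1/\bigl(\sqrt{n}(L+L')\bigr)$ and use the $(L+L')$-Lipschitz control on $H$ in Frobenius norm to move off the net. Your calibration of $\log|\calN|$ against the $\phi$-arguments in $\Delta_4$ is the right bookkeeping, and the leading $1/\sqrt n$ term of $\Delta_4$ does absorb the off-net slack, exactly as the paper does.

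The gap is in Step~1. You apply the new \prettyref{thm:sub_exp_matrix} to $A=\sqrt{N}\,G(X_{\calS},\theta^\ast)$ with $\log(1/\delta)=\sqrt d$, but then $\phi\bigl(d+\log\tfrac1\delta\bigr)=\phi(\Theta(d))=\sqrt d\,\log^{3/2}d$, so after dividing by $\sqrt N$ the bound you obtain is
$\opnorm{G(X_{\calS},\theta^\ast)}\lesssim \tfrac{\sigma_1}{\sqrt n}+\sigma_1\sqrt{\tfrac{d}{N}}\log^{3/2}d+\tfrac{\alpha_1}{\sqrt{Nn}}\,d\log^{3}d$.
The third term is indeed $\lesssim 1/\sqrt n$ under $N\gtrsim d^2\log^8(Nd)$, but the middle term $\sqrt{d/N}\,\log^{3/2}d$ is \emph{not} absorbable into $\Delta_3=1/\sqrt n+\sqrt{d/N}$: the sample-size hypothesis shrinks $\sqrt{d/N}$, it does not cancel the multiplicative $\log^{3/2}d$, and whenever $m\lesssim d\log^3 d$ (i.e.\ $n$ large) that term dominates $1/\sqrt n$. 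So your Step~1 only establishes $\opnorm{G(X_{\calS},\theta^\ast)}\lesssim 1/\sqrt n+\sqrt{d/N}\,\log^{3/2}d$, which is strictly weaker than the stated $\Delta_3$ and would pollute the log-free additive term $\sqrt{d/N}$ in \prettyref{thm: uniform bound} and \prettyref{thm: opt} that the paper keeps clean to match the minimax rate.

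The paper avoids this by using the existing Adamczak-type bound \prettyref{thm:sub_exp_spectral_2} (not the new \prettyref{thm:sub_exp_matrix}) for the $\theta^\ast$-matrix: with $K=\Theta(1)$ it gives $\opnorm{\sqrt{N}\,G(X_{\calS},\theta^\ast)}\lesssim \sqrt m+\sqrt d$ with probability $1-e^{-c\sqrt d}$ and no extra logarithms, exactly $\Delta_3$ after rescaling. The division of labor matters: at $\theta^\ast$ you only need a single-point bound at confidence $1-e^{-\sqrt d}$, a regime where \prettyref{thm:sub_exp_spectral_2} is sharp; the new \prettyref{thm:sub_exp_matrix} is reserved for Step~2, where the $\epsilon$-net forces $\log(1/\delta)\approx d\log\bigl(r\sqrt n(L+L')\bigr)\gg\sqrt d$, a regime in which \prettyref{thm:sub_exp_spectral_2} would degrade to $\sqrt{md}+d$ (see Remark~\ref{rmk: spectral bounds}).
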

\begin{remark}\label{rmk:tight_upper_bound}
The uniform upper bound $\Delta_4$ in \prettyref{eq:bound_tight}
depends linearly in $d$ (ignoring logarithmic factors).
Such linear dependency is inevitable as can be seen from the standard linear regression model given
in \prettyref{sec: linear regression}. In this setting, 
$\nabla f(X_i, \theta)=w_i w_i^\top (\theta-\theta^*) - w_i \zeta_i$
and $\nabla F(\theta)=\theta-\theta^*$, where $w_i \iiddistr \calN(0, \identity)$
and $\zeta_i \iiddistr \calN(0,1)$ independent of $w_i$'s. 
For simplicity, assume $n=1$ and $m=N$.  Then
\begin{align*}
& \sup_{\theta \in S^{d-1}} \opnorm{ G(X_{\calS}, \theta) - G( X_{\calS}, \theta^*)}  \\
& \ge \sup_{\theta \in S^{d-1}} \frac{1}{\sqrt{N}}  
\norm{  ( w_1 w_1^\top  -\identity) (\theta-\theta^*)  }   \\
& = \frac{1}{\sqrt{N}} \left( \norm{w_1}^2 -1 \right) \norm{\theta-\theta^*} \\
& = O_P\left( \frac{d}{\sqrt{N}}\right) \norm{\theta-\theta^*},
\end{align*}
where $O_P\left( \frac{d}{\sqrt{N}}\right) \norm{\theta-\theta^*}$ denotes $O\left( \frac{d}{\sqrt{N}}\right) \norm{\theta-\theta^*}$ holds with high probability.  
The first equality follows by  
choosing $\theta-\theta^*$ parallel to $w_1$,
and the last equality holds by the concentration of $\chi^2$ distribution. 
\end{remark}

\begin{proof}[\bf Proof of Proposition \ref{prop: second mom 1}]

We prove the two bounds in \prettyref{eq:bound_tight} individually. 
\paragraph{Bounding $\opnorm{ G( X_{\calS}, \theta^*) }$:}
It follows from \prettyref{ass:gradient_sub_gaussian} that the columns of $G(X_{\calS}, \theta^*)$ are i.i.d.\ sub-exponential random vectors in $\reals^d$ with mean $0$ and
scaling parameters  $\sigma_1/\sqrt{nm}$ and $\alpha_1/(n\sqrt{m})$, where
$\sigma_1$ and $\alpha_1$ are two absolute constants.  
Therefore, the columns of the scaled matrix $\sqrt{N} \, G( X_{\calS}, \theta^*)$ 
are i.i.d.\ sub-exponential random vectors $\reals^d$ with mean $0$ and
scaling parameters $\sigma_1$ and $\alpha_1/\sqrt{n}$ -- recalling that $N=nm$. 
Applying \prettyref{thm:sub_exp_spectral_2} to $A= \sqrt{N} \, G( X_{\calS}, \theta^*)$, we get 
that with probability at least $1-e^{-\sqrt{d}},$
\begin{align}
\label{en: first part}
\opnorm{ G( X_{\calS}, \theta^*) } = \frac{1}{\sqrt{N}} \opnorm{A}  ~ \lesssim ~  \frac{1}{\sqrt{N}} \pth{\sqrt{m} + \sqrt{d}} =\frac{1}{\sqrt{n} }  + \sqrt{ \frac{d}{N} }. 
\end{align}

\paragraph{Bounding  $\opnorm{G(X_{\calS}, \theta) - G(X_{\calS}, \theta^*)}$ for a fixed  $\theta \in \Theta$:}
For notational convenience, define 
\begin{align}
H(X_{\calS},\theta) ~ &\triangleq 
~ G( X_{\calS}, \theta) - G( X_{\calS}, \theta^*)  \nonumber \\
& ~ = \frac{1}{\sqrt{m}}  \qth{\frac{1}{n} \sum_{i\in \calS_1}  h(X_i, \theta), ~~ \cdots, ~ \frac{1}{n} \sum_{i\in \calS_m}  h(X_i, \theta)}, \label{eq: t3}
\end{align}
where recall from \eqref{eq: t1} that the gradient difference function $h(X, \cdot)$ is defined as 
\begin{align*}
h( X, \theta) ~ = ~ \nabla f(X, \theta) - \nabla f(X, \theta^*) -   \left( \nabla F(\theta) - \nabla F(\theta^*) \right).
\end{align*}
It follows from \prettyref{ass:gradient_sub_exp} that the columns 
of $H(X_{\calS}, \theta)/\norm{\theta-\theta^*}$ 
are i.i.d.\ sub-exponential random vectors in $\reals^d$ with 
mean $0$ and scaling parameters $\sigma_2/\sqrt{nm}$
and $\alpha_2/(n\sqrt{m})$, where $\sigma_2$ and $\alpha_2$ are two absolute constants. 
Recall that $N=nm$. Applying \prettyref{thm:sub_exp_matrix} to $H(X_{\calS}, \theta)/\norm{\theta-\theta^*}$, we know that 
for any fixed $\theta$, with probability at least $1-\delta $, 
\begin{align}
\label{eq: second moment part 2}
\nonumber 
& \opnorm{H(X_{\calS}, \theta)} \\
\nonumber
&\lesssim \left(  \frac{\sigma_2}{\sqrt{n} } 
+ \frac{\sigma_2}{\sqrt{N} }   \phi \left(  d + \log \frac{1}{\delta} \right) + \frac{\alpha_2}{\sqrt{Nn} }  \phi^2 \left( d + \log \frac{1}{\delta}  \right)  
\right)  \norm{\theta-\theta^*} \\
& \lesssim \left(  \frac{1}{\sqrt{n} } 
+ \frac{1}{\sqrt{N} }   \phi \left(  d + \log \frac{1}{\delta} \right) + \frac{1}{\sqrt{Nn} }  \phi^2 \left( d + \log \frac{1}{\delta}  \right)  
\right)  \norm{\theta-\theta^*},
\end{align}
where we used $\phi(x)=\sqrt{x} \log^{3/2}(x)$,  $\sigma_2 = O(1),$ and  $\alpha_2 = O(1)$.

\paragraph{$\epsilon$-net argument:}
We apply $\epsilon$-net argument to extend the point convergence in \eqref{eq: second moment part 2} to the uniform convergence over $\Theta$. 
In particular, let 
$\calN_{\epsilon_0}$ be an $\epsilon_0$-cover
of $\Theta = \{ \theta: \norm{\theta-\theta^*} \le r \}$ with 
$$
\epsilon_0= \frac{1}{\sqrt{n}(L+L^{\prime}) }.
$$
By \cite[Lemma 5.2]{vershynin2010introduction}, we have
$$
\log \abth{\calN_{\epsilon_0}} \le d \log \left( 1 +  2r/\epsilon_0 \right)  = d\log \left( 1 + 2 r\sqrt{n} (L+L^{\prime}) \right).
$$ 
By \prettyref{eq: second moment part 2} and the union bound, we get that 
with probability at least $1-\delta$, for all $\theta \in \calN_{\epsilon_0}$, 
\begin{align}
  \opnorm{H(X_{\calS}, \theta)} \lesssim  \norm{\theta-\theta^*} \bigg( & \frac{1}{\sqrt{n} } 
+ \frac{1}{\sqrt{N} }   \phi \left(  d+ \log \frac{ \abth{\calN_{\epsilon_0} } }{\delta} \right) 
\nonumber \\
& + \frac{1}{\sqrt{Nn} }  \phi^2 \left( d+  \log \frac  { \abth{\calN_{\epsilon_0} } }{\delta} \right)  
\bigg) .   \label{en: l1}
\end{align} 
So far, we have shown the uniform convergence over net $\calN_{\epsilon_0}$.
Next, we extend this uniform convergence to the entire set $\Theta$.  

For any $\theta \in \Theta$, there exists a $\theta_k \in \calN_{\epsilon_0}$
such that $\norm{\theta - \theta_{ k }} \le \epsilon_0$. By triangle inequality,
\begin{align*}
\opnorm{ H(X_{\calS},\theta)} &  \le   \opnorm{ H(X_{\calS},\theta_k)} + \opnorm{ H(X_{\calS},\theta)-H(X_{\calS},\theta_k)}. 
\end{align*}
Note that 
\begin{align}
& \opnorm{H( X_{\calS}, \theta)-H( X_{\calS}, \theta_k)} \nonumber \\
& \le \fnorm{ H( X_{\calS}, \theta)-H( X_{\calS}, \theta_k)} \nonumber \\
& \overset{(a)}{\le} \frac{1}{n}  \max_{1 \le j \le m} \norm{  \sum_{i\in \calS_j} \left( h (X_i, \theta) - h (X_i, \theta_k) \right) } \nonumber \\
& \overset{(b)}{\le} (L+L') \norm{\theta-\theta_k} \le (L+L')\epsilon_0 = \frac{1}{\sqrt{n}},  \label{en: l2}
\end{align}
where $(a)$ follows because the Frobenius norm $\fnorm{A}^2
=\sum_j \|A_j \|^2 \le m \max_{j} \|A_j\|^2$; $(b)$ holds because
\begin{align*}
& \frac{1}{n} \norm{  \sum_{i\in \calS_j} \left( h (X_i, \theta) -   h (X_i, \theta_k) \right) }\\
& \le 
\frac{1}{n} \sum_{i\in \calS_j} \norm{h (X_i, \theta) - h (X_i, \theta_k)}
\le (L+L') \norm{\theta-\theta_k},
\end{align*}
in view of \prettyref{ass:pop_risk_smooth} and \prettyref{ass:bounded_hessian}.

Combining \eqref{en: l1} and \eqref{en: l2}, we have that with probability at least $1-\delta$, for any $\theta \in \Theta$, 
\begin{align*}
\opnorm{ H(X_{\calS},\theta)}   & \le    
\opnorm{ H(X_{\calS},\theta_k)}  + \frac{1}{\sqrt{n}} \\
& \lesssim \norm{\theta-\theta^*} \bigg(  \frac{1}{\sqrt{n} } 
+ \frac{1}{\sqrt{N} }   \phi \left(  d+ \log \frac{ \abth{\calN_{\epsilon_0 } } }{\delta} \right) \\
& \qquad + \frac{1}{\sqrt{Nn} }  \phi^2 \left( d+  \log \frac  { \abth{\calN_{\epsilon_0} } }{\delta} \right)  
\bigg)  + \frac{1}{\sqrt{n} }.  
\end{align*}
Choosing $\delta = e^{-d}$, we get that with probability at least $1-e^{-d}$, for all $\theta\in \Theta$, 
\begin{align}
 & \opnorm{ H(X_{\calS},\theta)} \nonumber \\
 & \lesssim  \norm{\theta-\theta^*} \bigg(  \frac{1}{\sqrt{n} } 
+ \frac{1}{\sqrt{N} }   \phi \left( 2d +  d \log \pth{ 1+ r  \sqrt{n} (L+L')} \right) \nonumber \\
&\qquad + \frac{1}{\sqrt{Nn} }  \phi^2 \left( 2d + d\log \pth{ 1+ r  \sqrt{n} (L+L')} \right)  
\bigg)   + \frac{1}{\sqrt{n} }. \label{en: l3}
\end{align}

\paragraph{Putting all pieces together}
Combing \eqref{en: first part} and \eqref{en: l3}, we conclude Proposition \ref{prop: second mom 1}. 

\end{proof}

\subsection*{Finish the proof of Theorem \ref{thm: uniform bound}:}
Let $\calE_1$ and $\calE_2$ denote the two events on which the conclusions in Proposition \ref{prop:uniform_converg} and  \prettyref{prop: second mom 1} hold, respectively. 
It follows from Proposition \ref{prop:uniform_converg} and  \prettyref{prop: second mom 1}
that $\prob{\calE_1 \cap \calE_2} \ge 1-3 e^{-\sqrt{d}}$. 

Recall that we use Algorithm \ref{ag: robust mean} as 
our robust gradient aggregator $\calR$ with 
input $\hat{y}_j(\theta)$ given by 
the local gradient function $g_j(\theta)$ at worker 
$j$. We will apply \prettyref{lmm:algo} with 
$y_j(\theta) = \textstyle{\frac{1}{n} \sum}_{i\in \calS_j} \nabla f (X_i, \theta)$
as per \prettyref{eq:def_y}.
Then the true mean $\mu = \nabla F(\theta)$ and 
the true sample mean $\mu_{\calS}= 
\textstyle{\frac{1}{N} \sum}_{i=1}^N \nabla f (X_i, \theta)$
as per \prettyref{eq:def_sample_mean}.

On event $\calE_1\cap \calE_2$,  in view of \prettyref{rmk: spectral triangle}, for each  $\theta \in \Theta$,
condition \prettyref{eq: bounded sample spectral} in \prettyref{lmm:algo} is satisfied with
\begin{align}
\sigma = \left( \Delta_4 + \Delta_2 \right) \norm{\theta-\theta^*} + \Delta_1 + \Delta_3,
\label{eq:choice_sigma_gradient}
\end{align}
where $\Delta_i$'s are given in 
\prettyref{prop:uniform_converg} and  \prettyref{prop: second mom 1}. 

Therefore, in view of Remark \ref{remark: sample to population mean}, 
it follows from \prettyref{lmm:algo} that for each $\theta \in \Theta,$ 
the output $G(\theta)$ 
of the gradient aggregator $\calR$
satisfies 
\begin{align*}
\norm{G(\theta) - \nabla F(\theta)} 
& \lesssim  \sqrt{ \frac{q}{m}} 
\left[ 
\left( \Delta_4 + \Delta_2 \right) \norm{\theta-\theta^*} + \Delta_1 + \Delta_3
\right] \\
& \quad + \Delta_2 \norm{\theta-\theta^*} +  \Delta_1 \\
& \lesssim 
\left( \sqrt{ \frac{q}{m} }\Delta_4 +\Delta_2 \right)  \norm{\theta-\theta^*}
+ \sqrt{\frac{q}{m} } \Delta_3 + \Delta_1.
\end{align*}

Recall that $\phi(x)=\sqrt{x} \log^{3/2}(x)$, $\log (L+L') =O(\log (Nd) )$,  
$\Theta \subset \{\theta: \norm{\theta-\theta^*} \le r \}$
for some positive parameter $r$ such that $\log r = O(\log (Nd))$, and that $N =\Omega( d^2 \log^8 (Nd))$. 
Then, 
\begin{align*}
\Delta_4 ~ &= ~  \frac{1}{\sqrt{n} } 
+ \frac{1}{\sqrt{N} }   \phi \left(  2d + d \log \pth{1 +r  \sqrt{n} (L+L')} \right) \\
& \quad + \frac{1}{\sqrt{Nn} }  \phi^2 \left(2d + d\log \pth{1 +r  \sqrt{n} (L+L')} \right)\\
& \overset{(a)}{\lesssim}  \frac{1}{\sqrt{n} } +   \sqrt{ \frac{ d }{N } } \log^2 (Nd) + \frac{1}{\sqrt{n} } ~ \lesssim ~ \frac{1}{\sqrt{n} } +   \sqrt{ \frac{ d }{N } } \log^2 (Nd),
\end{align*}
where in $(a)$ we used the assumption $N =\Omega( d^2 \log^8 (Nd))$.

Combining the last two displayed equations together with the
expressions of $\Delta_1, \Delta_2, \Delta_3$ 
in Proposition \ref{prop:uniform_converg} and  \prettyref{prop: second mom 1}, 
we get that for each $\theta \in \Theta$,
\begin{align*}
& \norm{G(\theta) - \nabla F(\theta)}  \\
& \lesssim   \left( \sqrt{ \frac{q}{N} } + \sqrt{ \frac{qd}{mN}} \log^2 (Nd)  + \sqrt{\frac{d \log (Nd)}{N}}\right)
\norm{\theta-\theta^*} \\
& \quad \; + \sqrt{ \frac{q}{N} } + \sqrt{ \frac{qd}{mN}} + \sqrt{ \frac{d}{N}} \\
& \lesssim   \left( \sqrt{ \frac{q}{N} } + \sqrt{ \frac{d}{N} } \log^2 (Nd) \right)
\norm{\theta-\theta^*}  + \sqrt{ \frac{q}{N} } + \sqrt{ \frac{d}{N}},
\end{align*}
completing the proof of \prettyref{thm: uniform bound}.

\subsection{Proof of \prettyref{thm: opt}}
\begin{proof}
From Theorem \ref{thm: uniform bound}, 
we know that there exists a constant 
$c_0$ such that with probability at least 
$1- 3e^{-\sqrt{d} }$, for all $\theta \in \Theta$, 
\begin{align}
& \norm{G(\theta)-\nabla F(\theta)}  \nonumber \\
& \le c_0  \left(  \sqrt{\frac{q}{N} } + \sqrt{ \frac{d}{N} }  \log^2 (Nd)  \right) 
 \norm{\theta-\theta^*} +  c_0 \left( \sqrt{\frac{q}{N}} + \sqrt{\frac{d }{N} } \right).
 \label{eq:gradient_diff_bound_1}
\end{align}
Thus, with probability at least $1- 3e^{-\sqrt{d} }$, we have that for every $t \ge 1$, 
\begin{align}
\norm{\theta_{t} - \theta^*} & =  \norm{\theta_{t-1} - \eta G(\theta_{t-1})- \theta^*}
\nonumber \\
&= \norm{\theta_{t-1} - \eta \nabla F(\theta_{t-1}) - \theta^* + \eta \pth{\nabla F(\theta_{t-1}) - G(\theta_{t-1})}}
\nonumber \\
& \le  \norm{\theta_{t-1} - \eta \nabla F(\theta_{t-1}) - \theta^*} + \eta\norm{\pth{\nabla F(\theta_{t-1}) - G(\theta_{t-1})}} \nonumber \\
& \le \sqrt{1-\frac{M^2}{4L^2}} \norm{\theta_{t-1} - \theta^*}+ 
\eta\norm{\pth{\nabla F(\theta_{t-1}) - G(\theta_{t-1})}}  \nonumber  \\ 
& \le \rho \norm{\theta_{t-1} - \theta^*} +  
c_0 \frac{M}{2L^2} \left( \sqrt{\frac{q}{N}} + \sqrt{\frac{d }{N} } \right),   \label{eq:noisy_gd progress} 
\end{align}
where the second inequality follows from the standard convergence analysis of perfect gradient descent,
see, \eg, \cite[Lemma 3.2]{Chen:2017:DSM:3175501.3154503};  the last inequality follows from 
\prettyref{eq:gradient_diff_bound_1}, $\eta =M/(2L^2),$ and 
\begin{align*}
\rho ~ \triangleq ~ \sqrt{1- \frac{M^2}{4L^2}} + c_0 \frac{M}{2L^2} \left(  \sqrt{\frac{q}{N} } + \sqrt{ \frac{d}{N} }  \log^2 (Nd)  \right).  
\end{align*}
Then applying a standard telescoping argument to
\prettyref{eq:noisy_gd progress} yields that
\begin{align}
\norm{\theta_{t} - \theta^*}
\le \rho^t \norm{\theta_{0} - \theta^*} 
+ \frac{c_0M}{2L^2 (1-\rho) } \left(  \sqrt{\frac{q}{N} } + \sqrt{ \frac{d}{N} }  \right).  
\label{eq:noisy_gd_bound} 
\end{align}

There exists a constant $c$ such that 
if $N\ge cd^2 \log^8 (Nd)$ and $N \ge cq$, 
then 
\begin{align*}
c_0  \left(  \sqrt{\frac{q}{N} } + \sqrt{ \frac{d}{N} }  \log^2 (Nd)  \right) \le \frac{1}{8}. 
\end{align*}
Consequently, 
\begin{align*}
\rho ~ 
& \le ~ \sqrt{1- \frac{M^2}{4L^2}} + \frac{M}{16L^2} \le 1- \frac{M^2}{8L^2}+ \frac{M}{16L^2} \le 1- \frac{M^2}{16L^2},
\end{align*}
where the last inequality follows from the assumption that $M\ge 1$. 
Combining the last displayed equation with \prettyref{eq:noisy_gd_bound} yields that 
\begin{align*}
\norm{\theta_{t} - \theta^*} 
\le \pth{1-\frac{M^2}{16L^2}}^t \norm{\theta_0-\theta^*} + 8c_0 \left( \sqrt{\frac{q}{N}} + \sqrt{\frac{d }{N} } \right), 
\end{align*}
completing the proof of Theorem \ref{thm: opt}.

\end{proof}

\section{Applications to Linear Regression and Logistic Regression}
\label{sec: linear regression}
In this section, we illustrate our general results by applying them to
the classical linear regression and logistic regression problems. 

\subsection{Application to Linear Regression}
Let $ X_i = (w_i, y_i) \in \reals^{d} \times \reals $
denote
the input data and define the risk function
$
f(X_i, \theta) = \frac{1}{2} \left( \langle w_i, \theta \rangle - y_i \right)^2.
$
For simplicity, we assume that $y_i$ is indeed generated from a linear model:
$$
y_i = \langle w_i, \theta^* \rangle + \zeta_i ,
$$
where $ \theta^* $ is an unknown true model parameter,
$ w_i \sim N(0, \identity) $ is the covariate vector whose
covariance matrix is assumed to be identity,  and $ \zeta_i \sim N(0,1) $ is i.i.d.\ additive Gaussian noise
independent of $w_i$'s. Intuitively, the inner product $ \langle w_i, \theta^* \rangle $ can be viewed as  
a linear ``measurement" of $\theta^*$ -- the signal; and $\zeta_i $ is the additive noise.

The population risk function $F$ is given by
\begin{align*}
F(\theta) &\triangleq \expect{f(X,\theta)} = \expect{  \frac{1}{2} \left( \langle w, \theta \rangle - y \right)^2}\\
&=  \expect{  \frac{1}{2} \left( \langle w, \theta \rangle - \langle w, \theta^* \rangle - \zeta \right)^2}
= \frac{1}{2}\| \theta - \theta^* \|_2^2 + \frac{1}{2},
\end{align*}
for which $ \theta^* $ is indeed the unique minimum. 
The population gradient function is
$\nabla_\theta F(\theta) = \theta - \theta^*$. It is easy to see that the population risk function $F$ is $L$-Lipschitz continuous with $L=1$, and $M$-strongly convex with $M=1$.  Hence, Assumption \ref{ass:pop_risk_smooth} is satisfied with $M=L=1$; and the stepsize $\eta=M/(2L^2) = 1/2$.

For a given random sample $X=(w, y)$, the associated random gradient is given by
$$
\nabla f(X, \theta) = w \left( \langle w, \theta \rangle - y \right) = w \langle w, \theta-\theta^* \rangle - w \zeta,
$$
where $w \sim \calN(0, \identity)$ and $\zeta \sim \calN(0,1)$ that is independent of $w$. 

The following lemma verifies that Assumption \ref{ass:gradient_sub_gaussian}--Assumption \ref{ass:gradient_sub_exp}
are satisfied with appropriate parameters.
\begin{lemma}\label{lmm:least-squares}
Under the linear regression model, the sample gradient function $\nabla f(X, \cdot)$ satisfies

(1) Assumption \ref{ass:gradient_sub_gaussian} with $\sigma_1=\sqrt{2}$
and $\alpha_1=\sqrt{2}$,

(2) and Assumption \ref{ass:gradient_sub_exp} with $\sigma_2=\sqrt{8}$ and $\alpha_2=8$. 

Moreover, with probability $1-e^{-d}$, 
Assumption \ref{ass:bounded_hessian} holds with $L'=3d + 2\log N + 2\sqrt{d (d + \log N) }$ 
for all $\{ \nabla f(X_i, \cdot) \}_{ i=1}^N$.  
\end{lemma}
\begin{proof}
Claims (1) and (2) have been proved in Lemma 4.1~\cite{Chen:2017:DSM:3175501.3154503}. It remains to prove
the last claim. Under the linear regression model
$$
\norm{\nabla f(X, \theta) - \nabla f(X, \theta')} = \norm{ww^\top (\theta-\theta')} 
\le \norm{w}^2 \norm{\theta-\theta'}.
$$
Hence, it suffices to show 
\begin{align}
\prob{ \max_{i=1}^N \norm{w_i}^2 \ge 3d + 2\log N + 2\sqrt{d (d + \log N) } } \le  e^{-d}.
\label{eq:assumption3_check} 
\end{align}
Note that for each $i \in [N]$, $\norm{w_i}^2 \sim \chi^2(d)$. Using the following tail bound:
$$
\prob{ \chi^2 (d) \ge d + 2 \sqrt{ d t} + 2 t} \le e^{-t}
$$
with $t=d +\log N $, we get that
$$
\prob{ \norm{w_i}^2 \ge 3d + 2\log N + 2\sqrt{d (d + \log N) } } 
\le \frac{1}{N} e^{- d }.
$$
Then the desired \prettyref{eq:assumption3_check} follows by the union bound. 
\end{proof}

As an immediate corollary of \prettyref{thm: opt}, our 
Byzantine-resilient Gradient Descent method can 
robustly solve the linear regression problem exponentially fast 
with high probability.

\begin{corollary}[Linear regression]
\label{cor:lr}
Under the aforementioned least-squares model for linear regression, 
assume  $\Theta \subset  \{ \theta: \| \theta-\theta^*\| \le r \}$ 
for $r>0$ such that  $\log r = O(\log (Nd))$.
Suppose that $N \ge c d^2 \log^8 (Nd))$ and $N \ge c q$ 
for a sufficiently large constant $c$, and that $m \le e^{\sqrt{d}}$.
Then with probability at least $1-3e^{-\sqrt{d}}$, the iterates $\{\theta_t\}$
given by Algorithm \ref{alg:BGD} with Robust Gradient Aggregator 
Algorithm \ref{ag: robust mean}  satisify
$$
\norm{\theta_t - \theta^* } \lesssim  \pth{\frac{15}{16}}^t \norm{\theta_0-\theta^*} 
+  \sqrt{\frac{q}{N}} + \sqrt{\frac{d }{N} } , \; \forall t \ge 0. 
$$
\end{corollary}

\subsection{Application to Logistic Regression}
Here we consider the binary logistic regression problem~\cite[Section 4.4]{statistical_learning09}, 
where we assume that $y_i$ is generated as follows:
$$
y_i \iiddistr \Bern \left( \frac{1}{1 + e^{-w_i^\top \theta^*}} \right),
$$
where $\theta^* \in \reals^d$ is an unknown true model parameter, $w_i \sim \calN(0, \sigma^2 \identity_d)$
is the observed feature vector, and $y_i \in \{0,1\}$ is the observed class label. 
Intuitively, logistric regression tries to model the log likelihood ratio 
$$
\log \frac{ \prob{y_i=1|w_i} } { \prob{y_i=0| w_i} }  
$$ 
as a linear function $w_i^\top \theta$  of $w_i$. 
Let $X_i= (w_i, y_1) \in \reals^d \times \{0,1\}$ denote the input data and define the risk function 
$f(X_i, \theta)$ as the negative log likelihood function 
\begin{align*}
f(X_i, \theta)  & = -\log \prob{ y_i \mid w_i; \; \theta} \\
& = - \indc{y_i=1} \log \frac{1}{1 + e^{-w_i^\top \theta}}
 - \indc{y_i=0}\log \frac{e^{-w_i^\top \theta}}{1 + e^{-w_i^\top \theta}}
  \\
 & = (1-y_i) w_i^\top \theta  +  \log \left( 1 + e^{-w_i^\top \theta} \right).
\end{align*}
It follows that the population risk function $F$ is given by
\begin{align*}
F(\theta) & \triangleq \expect{f (X,\theta)} \\
& = \mathbb{E}_{w \sim \calN(0, \sigma^2 \identity_d)}
\left[\frac{e^{-w^\top \theta^*} }{ 1 + e^{-w^\top \theta^*}} 
w^\top \theta  +  \log \left( 1 + e^{-w^\top \theta} \right)
\right].
\end{align*}
The population gradient function is given by
$$
\nabla_\theta F(\theta) = 
\mathbb{E}_{w \sim \calN(0, \sigma^2 \identity_d)}
\left[ w \left( \frac{e^{-w^\top \theta^*} }{ 1 + e^{-w^\top \theta^*}} 
- \frac{e^{-w^\top \theta} }{ 1 + e^{-w^\top \theta}} \right)
\right].
$$
The population Hessian function is given by
$$
\nabla^2_\theta F(\theta) = 
\mathbb{E}_{w \sim \calN(0, \sigma^2 \identity_d)} \left[ \frac{ww^\top}{ 1+ e^{w^\top \theta}} \right].
$$
It can be seen that the population Hessian function is 
strictly positive definite and hence $F(\theta)$ is strictly convex
for which $\theta^*$ is indeed the unique minimum. 
Moreover, for a given random sample $X=(w, y)$, the associated random gradient is given by
$$
\nabla f(X, \theta) = w 
\left[ (1-y)  - \frac{e^{-w^\top \theta} }{ 1 + e^{-w^\top \theta}} \right].
$$
When $\sigma$ is small and $\theta$ is restricted within a ball of small radius, 
then $w^\top \theta \sim \calN(0, \sigma^2 \|\theta\|_2^2)$ is typically small. 
In this case, we can approximate 
$e^{-w^\top \theta}/ (1 + e^{-w^\top \theta})$ by 
its first-order Taylor series $1/2- w^\top \theta/4$. As a consequence,
$$
\nabla f(X, \theta) \approx w 
\left[ \frac{1}{2} -y   +  \frac{1}{4} w^\top \theta \right],
$$
which resembles the gradient vector in the simple linear regression.

\section{Summary and Future Directions}
This present paper intersects two main areas of research:
fault-tolerant distributed computing and statistical machine learning. 
In particular, we consider a machine learning scenario where a 
model is trained in a distributed but unsecured environment. 
%
Armed with a robust mean estimation primitive, we 
secure the gradient descent method against adversarial interruptions, 
even in high dimensions. Our secured gradient descent 
converges to the true model parameter 
exponentially fast up
 to an estimation error $O(\sqrt{q/N}+\sqrt{d/N})$ -- matching 
the minimax-optimal error rate in the failure-free
setting as long as the number of faulty workers $q=O(d)$. 
A key ingredient in our analysis is a uniform concentration
of the sample covariance matrix of gradient functions. 

There are many interesting future directions to explore, and we list a few as follows. 
\begin{itemize}
\item 
We have shown the optimal error rate is $O(\sqrt{d/N})$
when $q=O(d)$. However, the optimal error rate remains elusive  when $q \gg d \gg 1$.

\item The present paper assumes the population risk function
$F(\theta)$ is convex. In many contemporary machine learning applications, the population 
risk function is often non-convex. It would be interesting to extend our results to the non-convex
setting. A crucial question is how to escape saddle points with robustly aggregated gradients.
This direction has been recently pursued in~\cite{yin2018defending}.

\item
It would be interesting to see how the choice of robust mean estimation building block affects the performance of the stochastic optimization algorithm in terms of computation, estimation error, probability error, etc.

\item 
Note that in this work, we consider full gradient descent under which each worker computes the local gradient based on the {\em entire} local sample (all $n$ data points). Since 
$n$ is small,  the computational burdens of the workers are reasonable. It has been demonstrated numerically in \cite{45648} that in the adversary-free setting, there is a performance improvement when each worker performs a few epochs of SGD before the model updates are aggregated. Whether there will be similar performance improvement in our adversary-prone setting is unclear.

\item So far, we consider synchronous distributed systems, wherein the learner communicates with the workers in synchronous communication rounds. 
It would be interesting to see how asynchrony affects the learning performance. 

\item We assume each worker reports the entire gradient vector in each round. Some applications may call for even more communication-efficient algorithms. It would be interesting to see if, rather than the entire gradient vector, it suffices for each worker to report partial gradient vector in each round.

\end{itemize}



\begin{appendices}

\section{Proof of Lemma \ref{lmm:truncation_con}}
\label{app: truncation}

We first quote a classical concentration inequality for sum of independent,
bounded random variables. 
\begin{lemma}[Bennett's inequality]
\label{lmm:Bennett}
Let $Y_1, \cdots, Y_m$ be independent random variables. Assume that $\abth{Y_j -\expect{Y_j}} \le B$ almost surely for every $j$. Then for any $t >0$, we have 
\begin{align*}
\prob{\sum_{j=1}^m (Y_j - \expect{Y_j}) \ge t} \le \exp \pth{-\frac{\sigma^2}{B^2} \cdot h\pth{\frac{Bt}{\sigma^2}}},
\end{align*}
where $\sigma^2 = \sum_{j=1}^m \var(Y_j)$ is the variance of the sum, and 
$$ h(u) =(1+u)\log(1+u) -u.$$
\end{lemma}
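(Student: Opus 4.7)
The plan is to run the standard Chernoff argument. First I would center by setting $Z_j = Y_j - \expect{Y_j}$, so that the $Z_j$ are independent, mean zero, with $|Z_j|\le B$ almost surely, and $\var(Z_j)=\var(Y_j)$. Since $\sum_j(Y_j-\expect{Y_j})=\sum_j Z_j$, it suffices to bound $\prob{\sum_j Z_j\ge t}$. For any $\lambda>0$, Markov's inequality applied to $\exp(\lambda\sum_j Z_j)$ together with independence gives
\[
\prob{\sum_{j=1}^m Z_j \ge t} \le e^{-\lambda t}\prod_{j=1}^m \expect{e^{\lambda Z_j}},
\]
so the task reduces to bounding each moment generating function and then optimizing in $\lambda$.

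The key single-variable estimate I would establish is that for each $j$,
\[
\expect{e^{\lambda Z_j}} \le \exp\!\pth{\frac{\var(Z_j)}{B^2}\pth{e^{\lambda B}-1-\lambda B}}.
\]
To prove it I would expand $\expect{e^{\lambda Z_j}}=1+\sum_{k\ge 2}\lambda^k\expect{Z_j^k}/k!$ (the linear term vanishes by centering), then use the pointwise bound $|Z_j|^k\le B^{k-2}Z_j^2$ valid for $k\ge 2$ since $|Z_j|\le B$, to dominate the series by $1+\frac{\var(Z_j)}{B^2}\sum_{k\ge 2}(\lambda B)^k/k! = 1+\frac{\var(Z_j)}{B^2}(e^{\lambda B}-1-\lambda B)$. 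The inequality $1+u\le e^u$ then upgrades this to the stated exponential form. Multiplying over $j$ replaces $\var(Z_j)$ by $\sigma^2=\sum_j\var(Y_j)$, giving
\[
\prob{\sum_j Z_j \ge t} \le \exp\!\pth{-\lambda t + \frac{\sigma^2}{B^2}\pth{e^{\lambda B}-1-\lambda B}}.
\]

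Finally I would optimize the exponent over $\lambda>0$. Differentiating in $\lambda$ and setting to zero yields $e^{\lambda B}=1+Bt/\sigma^2$, i.e.\ $\lambda^\star=\frac{1}{B}\log(1+Bt/\sigma^2)$. Writing $u=Bt/\sigma^2$ and substituting $\lambda=\lambda^\star$, the two terms in the exponent combine as $-\frac{\sigma^2 u}{B^2}\log(1+u)+\frac{\sigma^2}{B^2}(u-\log(1+u))=-\frac{\sigma^2}{B^2}\qth{(1+u)\log(1+u)-u}=-\frac{\sigma^2}{B^2}h(u)$, which is exactly the claimed bound. The only nontrivial ingredient is the per-variable MGF estimate; the remaining steps are routine Chernoff bookkeeping and a one-line calculus optimization, so I do not anticipate a serious obstacle.
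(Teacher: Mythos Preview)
Your proof is correct and is the standard textbook argument for Bennett's inequality via the Chernoff method. Note, however, that the paper does not actually prove this lemma: it is stated as a classical result (``We first quote a classical concentration inequality\ldots'') and used as a black box in the proof of \prettyref{lmm:truncation_con}. So there is no paper proof to compare against; your derivation simply supplies the omitted textbook argument.
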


\begin{proof}[\bf Proof of Lemma \ref{lmm:truncation_con}]
We use the idea of truncation. In this proof, we adopt the convention that $\frac{1}{0} = +\infty$. 

For each copy $j=1, \cdots, m$, we partition $Y_j$ into countably many pieces as follows: Let 
\begin{align*}
Y_{j,0} & = Y_j \indc{|Y_j| \le 1} \\
Y_{j,k} & = Y_j \indc{ e^{k-1} \le |Y_j | \le e^{k} }, \text{ for } k=1, 2, \ldots
\end{align*}
It is easy to see that 
\begin{align*}
Y_{j} = \sum_{k=0}^{\infty} Y_{j,k}, ~~~~ \text{for } j=1, \cdots, m. 
\end{align*}

Let $S=\sum_{j=1}^m Y_j$. We have 
\begin{align*}
S=\sum_{j=1}^m Y_j = \sum_{j=1}^m \pth{\sum_{k=0}^{\infty} Y_{j,k}} = \sum_{k=0}^{\infty} \sum_{j=1}^m Y_{j,k} = \sum_{k=0}^{\infty} S_k,
\end{align*} 
where $S_k \triangleq  \sum_{j=1}^m Y_{j,k}$, for $k =0, 1, \cdots$. Thus, 
\begin{align*}
 \prob{\abth{\sum_{j=1}^m  Y_j - m \expect{Y}} > m t } 
 &= \prob{\abth{S - \expect{S}} > m t } \\
 & =\prob{\abth{\sum_{k=0}^{\infty} \pth{S_k - \expect{S_k}}} > m t }.
\end{align*}
To bound $\prob{\abth{\sum_{j=1}^m  Y_j - m \expect{Y}}> m t }$ for a given $t$, our plan is to find a sequence of $t_k$ (which depends on $t$) such that 
\begin{align}
\label{eq: truncation goal}
\sth{\abth{S - \expect{S}}> m t } ~ \subseteq ~ \cup_{k=0}^{\infty} \sth{\abth{S_k - \expect{S_k}}> m t_k},
\end{align} 
and 
$$\prob{\abth{S_k - \expect{S_k}}> m t_k}$$
is small enough to apply the union bound over all $k$.

In this proof, we choose $t_k = \frac{t}{2(k+1)^2}$ for $k=0, 1, \cdots$. It is easy to see that \eqref{eq: truncation goal} holds. 

Next, we bound $\prob{\abth{S_k - \expect{S_k}}> m t_k}$ for each $k$. 
For given $t\ge t_0$, define 
\begin{align}
\label{eq: k0}
k_0 \triangleq \inf \sth{k\in \integers: 4e^{k}(k+1)^2 \ge t}. 
\end{align}
We are particularly interested in the setting when $t\ge t_0 \ge e^2$, which implies that 
\begin{align}
\label{eq: k0 ub}
1\le k_0 \le \log t -1,
\end{align}
noting that $4e^{\log t -1}(\log t -1+1)^2 \ge t$.

\noindent {\bf Case 1}: $0\le k \le k_0-1$. It is easy to see that when $t\ge t_0 \ge e^2$, $k_0 \ge 1$. Thus, case 1 is well posed. 
As per the definition of \eqref{eq: k0}, for all $0\le k \le k_0-1$, it holds that $4e^{k}(k+1)^2 < t.$ That is, 
\begin{align}
\label{eq: lm case 1}
2e^{k} < \frac{t}{2(k+1)^2} = t_k.
\end{align}

On the other hand, by construction of $Y_{j,k}$ we have deterministically 
\begin{align}
\label{eq: truc bounded}
\abth{Y_{j,k} - \expect{Y_{j,k} }} \le 2e^{k}, ~~~ \text{for all }k. 
\end{align}
Thus
\begin{align*}
\abth{S_k - \expect{S_k}} & = \abth{\sum_{j=1}^m Y_{j,k} - \expect{\sum_{j=1}^m Y_{j,k}}}\\ 
& \le \sum_{j=1}^m \abth{Y_{j,k} - \expect{Y_{j,k} }} \le 2m e^{k} ~~~ \text{for all }k,
\end{align*}
i.e., 
$$\prob{\abth{S_k - \expect{S_k}}> 2 m e^{k}} =0  ~~~ \text{for all }k.$$ 
By \eqref{eq: lm case 1}, we have that when $0\le k \le k_0 -1$,
\begin{align}
\label{eq: small k}
\prob{\abth{S_k - \expect{S_k}}> m t_k}  \le  \prob{\abth{S_k - \expect{S_k}}> 2m e^{k}} =0.
\end{align}

\noindent {\bf Case 2:} $k_0 \le k \le  \log (mt)$. For each $k$ in this range, we will apply 
Bennett's inequality given in~\prettyref{lmm:Bennett}.

From \eqref{eq: truc bounded}, we know that for any fixed $k$, the random variable $\abth{Y_{j,k} -\expect{Y_{j,k}}}\le 2e^k$ . The variance of $Y_{j,k}$ can be bounded as follows: for $k \ge 1$
\begin{align}
\var(Y_{j,k})& \le  \expect{Y_{j,k}^2} \nonumber \\
& \le e^{2k} \prob{ \abth{Y_{j} } \ge e^{k-1}} \nonumber \\
& \le e^{2k} \exp\pth{-E\pth{e^{k-1}}}. 
 \label{eq: truc second moment}
\end{align}
For notational convenience, define 
\begin{align}
\label{def: sigma2}
\sigma_k^2 ~ \triangleq e^{2k} \exp \pth{ - E(e^{k-1})}. 
\end{align}
To see that $\sigma_k^2$ is well-defined, recall that we adopt the convention that $\frac{1}{0} =\infty$ and $\exp \pth{-\infty} =0$. 

For each $k$ in this case, i.e., $k_0\le k \le \log (mt)$, by \prettyref{lmm:Bennett}, we get 
that 
\begin{align*}
& \prob{\abth{S_k - \expect{S_k}} \ge mt_k}  \\
&= \prob{\abth{\sum_{j=1}^m (Y_{j,k} - \expect{Y_{j,k}})} \ge mt_k} \\
&\le 2\exp \pth{-\frac{\sum_{j=1}^m \var(Y_{j,k})}{e^{2(k+1)}} \cdot h\pth{\frac{e^{(k+1)} m t_k}{\sum_{j=1}^m \var(Y_{j,k})}}},
\end{align*}
Note that when $u>0$, it holds that  
$h(u) \ge u\log (u/e)$, so we have that 
\begin{align}
\label{eq: case 2 prob}
\nonumber 
& \prob{\abth{S_k - \expect{S_k}} \ge mt_k} \\
\nonumber
&\le 2\exp \pth{-\frac{\sum_{j=1}^m \var(Y_{j,k})}{e^{2(k+1)}} \cdot \frac{e^{(k+1)} m t_k}{\sum_{j=1}^m \var(Y_{j,k})}\log \pth{\frac{e^{(k+1)} m t_k}{e\sum_{j=1}^m \var(Y_{j,k})}} }\\
&=2\exp \pth{- \frac{m t_k}{e^{(k+1)}}\log \pth{\frac{e^{k} m t_k}{\sum_{j=1}^m \var(Y_{j,k})}} } \nonumber \\
& \le   2\exp \pth{- \frac{m t_k}{e^{(k+1)}}\log \pth{\frac{e^{k}t_k}{\sigma_k^2} }}, 
\end{align}
where the last inequality follows from the fact that $\sum_{j=1}^m \var(Y_{j,k}) \le m \sigma_k^2$. 
We proceed to bound $\log \pth{\frac{e^{k}t_k}{\sigma_k^2}}$ using the assumption \prettyref{eq: tail condition}:
\begin{align}
 \label{eq: log}
\nonumber 
 \log \pth{\frac{e^{k}t_k}{\sigma_k^2}}
& = \log \pth{\frac{e^{k} t}{2(k+1)^2 e^{2k} \exp \pth{-E\pth{e^{k-1}}}}}\\
\nonumber 
& = \log \pth{\frac{t}{2(k+1)^2 e^{k} \exp \pth{-E\pth{e^{k-1}}}}}\\
\nonumber 
& = \log t - \pth{\log 2 + 2\log (k+1) + k - E\pth{e^{k-1}} }\\
\nonumber 
& =  E(e^{k-1}) - \pth{\log 2 + 2\log (k+1) + k - \log t} \\
\nonumber 
& \overset{(a)}{\ge}  \frac{1}{2} E(e^{k-1})  + \pth{2k+ 4\log (k+1) + \log 2 -\log t} \\
& \quad \; \; - \pth{\log 2 + 2\log (k+1) + k - \log t} 
\nonumber \\
& \ge \frac{1}{2} E(e^{k-1}) + 2 \log (k+1) +k,
\end{align}
where inequality $(a)$ holds due to the assumption \prettyref{eq: tail condition}.
Combining the last displayed equation with \eqref{eq: case 2 prob} yields 
\begin{align}
\label{eq: case 2 prob conti}
\nonumber 
& \prob{ \abth{ S_k - \expect{S_k} } \ge mt_k} \\
\nonumber 
& \le 2 \exp \pth{- \frac{m t_k}{2 e^{(k+1)}}  E(e^{k-1}) }\\
\nonumber 
& = 2 \exp \pth{- \frac{m t}{4(k+1)^2e^{(k+1)}} E(e^{k-1})}\\
& \le 2 \exp \pth{- \frac{m t}{4(\log (mt)+1)^2e^{(k+1)}} E(e^{k-1})},
\end{align}
where the last inequality holds because in the case under consideration, $k_0\le k \le \log (mt)$. 
%
%
To proceed, we use the monotonicity assumption of $E(t)/t.$ 
If $E(t)/t $ is non-decreasing (increasing), we can bound \eqref{eq: case 2 prob conti} as 
\begin{align}
\label{eq: case 2 prob conti 1}
\nonumber 
& \prob{\abth{ S_k - \expect{S_k} } \ge mt_k} \\
\nonumber 
& \overset{(a)}{\le} 2 \exp \pth{- \frac{m t}{4( \log (mt) +1)^2e^{(k_0+1)}} E(e^{k_0-1})} \nonumber \\
& \overset{(b)}{\le} 2 \exp \pth{- \frac{m t}{4( \log (mt)+1)^2 t} E\pth{ \frac{t}{4e(k_0+1)^2}}}  
\nonumber \\
&  \overset{(c)}{\le} 2 \exp \pth{- \frac{ m }{4( \log (mt)+1)^2 } E\pth{ \frac{t}{4e \log^2 t } } },
\end{align}
where $(a)$ holds because $k_0 \le k \le \log (mt)$; 
$(b)$ holds because $k_0\le \log t-1$, $4e^{k_0} (k_0+1)^2 \ge t$, and that $E(\cdot)$ is non-decreasing; 
$(c)$ follows from $k_0 \le \log t - 1$, and that $E(\cdot)$ is non-decreasing.

If $E(t)/t$ is non-increasing, we can bound \eqref{eq: case 2 prob conti} as 
\begin{align}
\label{eq: case 2 prob conti 2}
\nonumber 
& \prob{ \abth{ S_k - \expect{S_k} } \ge mt_k} \\
\nonumber
& \le 2 \exp \pth{- \frac{m t}{4( \log (mt) +1)^2 e^{\log (mt)+1 }} 
E(e^{\log (mt)-1})} \\  
& = 2 \exp \pth{- \frac{1}{4 e(\log (mt)+1)^2} E\pth{\frac{mt}{e}}}. 
\end{align}

\noindent {\bf Case 3}: $ k \ge \log (mt).$ In this case, we use the Chebyshev's inequality:
\begin{align}
\nonumber
\prob{ \abth{  S_k - \expect{S_k} }  \ge m t_k }  & \le  \frac{ \sigma_k^2 }{t_k} 
 = \exp \pth{ - \log \frac{t_k }{\sigma_k^2} } \nonumber \\
& \overset{(a)}{\le} \frac{1}{(k+1)^2 }\exp \pth{-\frac{1}{2}E(e^{k-1})} \nonumber \\
& \le  \frac{1}{(k+1)^2} \exp \pth{-\frac{1}{2}E\pth{\frac{mt}{e}}},  
\label{eq:tail_bound_large_k}
\end{align}
where $(a)$ follows from \eqref{eq: log};
the last inequality follows from the fact that $E(u)$ is increasing (non-decreasing) in $u$.

For a fix $t$, summing over all $k \in \naturals$, we have 
\begin{align*}
& \prob{\abth{\sum_{j=1}^m Y_j - m\expect{Y}} \ge mt} \\
&\le \sum_{k=0}^{\infty} \prob{\abth{S_k- \expect{S_k}} \ge mt_k}\\
& = \sum_{k=0}^{k_0-1} \prob{\abth{S_k- \expect{S_k}} \ge mt_k} + \sum_{k = k_0 }^{
\lfloor \log (mt) \rfloor} \prob{\abth{S_k- \expect{S_k}} \ge mt_k} \\
&\quad + \sum_{\lceil \log (mt) \rceil }^{\infty}\prob{\abth{S_k- \expect{S_k}} \ge mt_k}\\
& \le 0+ \exp \pth{-\frac{1}{2}E\pth{\frac{mt}{e}}} +\sum_{k=k_0}^{\lfloor \log (mt) \rfloor} \prob{\abth{S_k- \expect{S_k}} \ge mt_k}. 
 \end{align*} 
Therefore, we have that if $E(t)/t$ is non-decreasing,
\begin{align*}
& \prob{\abth{\sum_{j=1}^m Y_j - m\expect{Y}} \ge mt}  \\
&\le 2 \log (mt)\exp \pth{- \frac{ m }{4( \log (mt)+1)^2 } E\pth{ \frac{t}{4e \log^2 t } } } \\
& \quad  + \exp \pth{-\frac{1}{2}E\pth{\frac{mt}{e}}}; 
\end{align*}
if $E(t)/t $ is non-increasing, 
\begin{align*}
& \prob{\abth{\sum_{j=1}^m Y_j - m\expect{Y}} \ge mt}  \\
&\le 2 \log (mt)\exp \pth{- \frac{1}{4 e(\log (mt)+1)^2} E\pth{\frac{mt}{e}}}\\
& \quad +  \exp \pth{-\frac{1}{2}E\pth{\frac{mt}{e}}}. 
\end{align*}

\end{proof}

\section{Tightness of Upper Bound \prettyref{eq:norm_bound_exp}}\label{app:tightness_upper_bound}

To see the upper bound \prettyref{eq:norm_bound_exp} is tight up to logarithmic factors, 
consider an example, where $A_j$'s are 
i.i.d.\  isotropic Laplace distribution with the density function given by 
$f(x)=\prod_{i=1}^d \left[ (1/\sqrt{2}) \exp \left( -\sqrt{2} x_i \right) \right]$ for $x \in \reals^d.$ In this case, note that 
$$
\left\{ \norm{A} \ge \max \{ \sqrt{m/2}, d \} \right\} \supseteq \left\{ |A_{11}| \ge d \text{ and } \sum_{j=1}^m A^2_{2j} \ge m/2\right\}.
$$
Since
\begin{align*}
\prob{ |A_{11}| \ge d } 
=
\int_{|t| \ge d}  \frac{1}{\sqrt{2} } \exp \left( -\sqrt{2} t \right) \diff t 
= \exp \left( - \sqrt{2} d \right),
\end{align*}
and by Chebyshev's inequality,
$$
\prob{ \sum_{j=1}^m A^2_{2j} \ge  m/2 } \ge  1- O(1/m) \ge \frac{1}{2}
$$
for $m$ suffciently large, and $A_{11}$ is independent of $\sum_{j=1}^m A^2_{2j}$, it follows that
\begin{align*}
\prob{  \norm{A} \ge \max \{ \sqrt{m/2}, d \} } & \ge \prob{ |A_{11} | \ge d \text{ and } \sum_{j=1}^m A^2_{2j} \ge m/2 }  \\
& \ge \frac{1}{2} \exp \left( - \sqrt{2} d \right).
\end{align*}

\section{Robust Mean Estimation}
\label{app: robust mean}

Robust gradient aggregation is closely related to robust mean estimation, formally stated next. 
 
\begin{definition}[Robust mean estimation]
Let $\calS = \sth{y_1, \cdots, y_m}$ be a sample of size $m$, wherein each of the data point $y_i$ is generated independently from an unknown distribution. 
Among those $m$ data points, up to $q = \epsilon m$ of them may be adversarially corrupted. Let $\{\hat{y}_1, \cdots, \hat{y}_m\}$ be the observed sample. The goal is to estimate the true mean of the unknown distribution when only corrupted sample $\{\hat{y}_1, \cdots, \hat{y}_m\}$ is accessible. 
\end{definition}

The adversarially corrupted data may affect the mean estimation in the following two ways:  
(i) extreme magnitudes and/or (ii) extreme directions.
The adversarial magnitudes are relatively easy to ``detected'' and removed. 
For instance, a simple trimming/pruning procedure may suffice \cite[Section 4.3.1]{7782980}. 
Dealing with adversarially extreme directions is more challenging.

If the true sample mean/center were known, then those adversarially extreme  directions would
 be ``identified'' 
by finding the eigenvectors corresponding to large eigenvalues of the sample covariance matrix;
hence the corrupted data points would be filtered away by projecting along these extremre directions. 
However, the true sample mean/center is unknown in reality. It turns out that 
we can approximate the center by representing 
each data point by sufficiently many other data points evenly, as per
Step 2 of Algorithm \ref{ag: robust mean}. 

To gain some intuition on how it works, 
let us first consider the ideal setting where the data sample is corruption-free, 
i.e., all the data points are generated from the same underlying distribution. If
the spectral norm of the sample covariance matrix is bounded, then we expect these data points
are well concentrated around the sample mean and hence "similar" to each other.

When up to an $\epsilon$ fraction of the data sample is adversarially corrupted, as long as $\epsilon$ is small enough, 
there still exists a large collection of uncorrupted data points that are close to the true sample mean and 
similar to each other. Thus, each of them can be approximately represented as a convex combination of sufficiently many other data points so that the convex coefficients are approximately uniform over this collection of data. 
Hence, the corrupted data is more responsible for the approximation error of representation. 
Thus, the direction which maximizes the approximation error of representation
is likely to the adversarially extreme  direction.

The Step 2 of Algorithm \ref{ag: robust mean} works precisely along this idea. 
In particular, \eqref{eqn: opt} aims to find good center approximation through representation,
while \eqref{eq:opt_max_min} aims to find the extreme direction.
By solving \eqref{eqn: opt} and \eqref{eq:opt_max_min} for a saddle point $(W,U)$, 
Algorithm \ref{ag: robust mean} iteratively finds a 
direction (given by $U^*$) along which all data points are spread out the most, 
and filters away data points which have large residual errors
projected along this direction (given by \prettyref{eq:residual_error}).

For completeness, next we 
present the proof of \prettyref{lmm:algo}-- the robustness guarantee of Algorithm \ref{ag: robust mean}. 

For ease of exposition, in the sequel, we let 
$$
\alpha ~ \triangleq ~  1-\epsilon \quad \text{ and } \quad  \tilde{\sigma}^2 = 2\sigma^2.
$$

We first need a minimax identity between the min-max problem \eqref{eqn: opt} and max-min problem \prettyref{eq:opt_max_min}. 
For $W \in \reals^{|\calA| \times |\calA|}$ and $U \in \reals^{d \times d}$, 
recall the function 
$\psi: (W,U) \to \reals$ defined as:
$$
\psi(W, U) = \sum_{i \in \calA}  c_i \left( \hat{y}_i - \sum_{j \in \calA} \hat{y}_j W_{ji} \right)^\top U  \left( \hat{y}_i - \sum_{j \in \calA} \hat{y}_j W_{ji} \right).
$$
Also, let $\calW$ denote the set of all column stochastic matrices $W \in \reals^{|\calA| \times |\calA|}$ such that 
$0\le W_{ji}\le \frac{4-\alpha}{\alpha (2+\alpha)m}$, and $\calU$ denote the set of all positive semi-definite 
matrices $U \in \reals^{d \times d}$ such that $\Tr(U) \le 1$. 
Then the min-max program \eqref{eqn: opt}  can be
rewritten as 
\begin{align}
W^\ast \in & \arg \min_{ W \in \calW } 
~ \max_{ U \in \calU } ~ \psi(W, U) \nonumber \\
 &=\arg \min_{ W \in \calW } ~ 
\big\| \sum_{i \in \calA}  c_i \big( \hat{y}_i - \sum_{j \in \calA} \hat{y}_j W_{ji} \big)  \big( \hat{y}_i - \sum_{j \in \calA} \hat{y}_j W_{ji} \big)^\top \big\| \label{eq:opt_W}
\end{align}
and the max-min program  \prettyref{eq:opt_max_min} can be rewritten as
$$
U^\ast \in \arg \max_{ U \in \calU } ~  \min_{ W \in \calW } ~ \psi(W, U). 
$$
Note that $\psi(W,U)$ is convex in $W$ for a fixed $U$ and concave (in fact linear) in $U$ for a fixed $W$. 
By von Neumann's minimax theorem, we have
$$
\min_{ W \in \calW } ~ \max_{ U \in \calU } ~ \psi(W, U)
= \max_{ U \in \calU } ~ \min_{ W \in \calW }  ~ \psi(W, U) = \psi(W^*, U^*).
$$
Moreover, $(W^*, U^*)$ is a saddle point, \ie, 
\begin{align}
W^*  & \in \arg \min_{W \in \calW} ~ \psi(W, U^*),   \label{eq:saddle_W} \\
 U^*  & \in \arg \max_{U \in \calU} ~ \psi(W^*, U). \label{eq:saddle_U}
\end{align}
The saddle point properties \prettyref{eq:saddle_W} and \prettyref{eq:saddle_U} are crucial to prove
\prettyref{lmm:algo}.

Moreover, by condition \prettyref{eq: bounded sample spectral}, the underlying true sample $\calS$ (of size $m$) satisfies the following condition: 
\begin{align*}
\opnorm{ \frac{1}{m} \sum_{i=1}^m (y_i - \mu_{\calS}) (y_i - \mu_{\calS})^{\top} } \le \sigma^2,
\end{align*}
where $\mu_{\calS} = \frac{1}{m} \sum_{i=1}^m y_i$.  
Recall that up to $q$ points in $\calS$ are corrupted. Let $\calS_0\subseteq \calS$ be a subset of uncorrupted subset of $\calS$ of size $m-q=(1-\epsilon) m = \alpha m $. 
Notably, since $q$ is only an upper bound on the number of corrupted data points, the choice of subset $\calS_0$ may not be unique. Nevertheless, for any choice of subset
$\calS_0$, the following holds:
\begin{align}
\label{eq: subset spectral}
\nonumber 
& \opnorm{\frac{1}{|\calS_0|} \sum_{i\in \calS_0} (y_i - \mu_{\calS})(y_i - \mu_{\calS})^{\top}} \\
\nonumber  
&= \frac{1}{|\calS_0|} \opnorm{\sum_{i\in \calS_0} (y_i - \mu_{\calS})(y_i - \mu_{\calS})^{\top}}\\
\nonumber 
& \le  \frac{1}{|\calS_0|} \opnorm{\sum_{i=1}^m (y_i - \mu_{\calS})(y_i - \mu_{\calS})^{\top}}\\ 
& \le \frac{1}{\alpha} \sigma^2  \le 2\sigma^2,
\end{align}
where the last inequality follows because by assumption, $\alpha = 1- \epsilon \ge \frac{3}{4} \ge \frac{1}{2}$.

As commented in Subsection \ref{subsec: robust agg}, Algorithm \ref{ag: robust mean} terminates within $m$ iterations. 
For ease of exposition, we use $t=1, 2, \cdots $ to denote the iteration number. 
We use $c_i(t)$, $\tau_i(t)$, and $\calA(t)$ to denote the quantities of interest at iteration $t$. 
Note that weights $c_i$ and set $\calA$ may be updated throughout an iteration. Therefore, 
we use $\calA^{\prime}(t)$ and $c_i^{\prime}(t)$ to denote the updated quantities at the end of iteration $t$. 
Note that $c_i^{\prime}(t-1) = c_i(t)$ and $\calA^{\prime}(t-1) = \calA(t)$. 

\subsection{Two auxiliary lemmas}
We first show that when Algorithm \ref{ag: robust mean} terminates, most of data points in $\calS_0$ are remained in $\calA$. 
\begin{lemma}
\label{lm: downweight}
For every iteration $t\ge 1$ in the {\bf while}--loop of Algorithm \ref{ag: robust mean}, 
\begin{align}
\label{eq: 1}
\sum_{i\in \calS_0 \cap \calA(t) } c_i(t) \tau_i (t) ~ &\le ~ \alpha m \tilde{\sigma}^2  \\
\label{eq: 2}
\sum_{i  \in \calS_0  }  
\left(        1- c_i (t)       \right) 
&  \le \frac{\alpha}{4}      \sum_{i=1}^m    \left(    1-    c_i (t)         \right)    \\
\label{eq: 3}
\abth{\calS_0\cap \calA (t) } & \ge  \frac{\alpha (2+\alpha)m}{4-\alpha}. 
\end{align}
\end{lemma}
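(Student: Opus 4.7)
The plan is a simultaneous induction on the iteration index $t$, showing that \eqref{eq: 1}, \eqref{eq: 2}, and \eqref{eq: 3} hold at every iteration. The base case $t=1$ is immediate from the initialization $c_i(1) \equiv 1$ and $\calA(1) = [m]$: \eqref{eq: 2} reads $0 \le 0$, and \eqref{eq: 3} reduces to $\alpha m \ge \alpha(2+\alpha)m/(4-\alpha)$, which holds for every $\alpha \in (0,1]$. The inductive step decomposes into three implications: $(3)_t \Rightarrow (1)_t$, then $\{(1)_t,(2)_t\}\Rightarrow (2)_{t+1}$, and finally $(2)_{t+1}\Rightarrow (3)_{t+1}$.

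The main obstacle is the first arrow, deriving \eqref{eq: 1} from \eqref{eq: 3} via the saddle-point structure. The key observation is that the feasible set in \eqref{eqn: opt} decouples across columns of $W$ (each column is an independent capped simplex), so the outer minimization in \eqref{eq:saddle_W} decouples per $i$: since $c_i(t)>1/2$ on $\calA(t)$, the column $W^*_{\cdot,i}$ minimizes $(\hat y_i - \sum_j \hat y_j W_{ji})^\top U^*(\hat y_i - \sum_j \hat y_j W_{ji})$ over feasible columns. Under the inductive bound \eqref{eq: 3}, the reference column $W'_{ji}=\mathbf{1}\{j\in \calS_0\cap\calA(t)\}/|\calS_0\cap\calA(t)|$ is feasible and produces $\sum_j \hat y_j W'_{ji}=\mu_{\calS_0\cap\calA(t)}$ (using $\hat y_j=y_j$ on $\calS_0$). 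Comparing optimal and reference columns for $i\in \calS_0\cap\calA(t)$ gives $\tau_i(t)\le (y_i-\mu_{\calS_0\cap\calA(t)})^\top U^*(y_i-\mu_{\calS_0\cap\calA(t)})$. Summing over $\calS_0\cap \calA(t)$, using $c_i\le 1$, then replacing $\mu_{\calS_0\cap\calA(t)}$ by $\mu_{\calS}$ (the subset mean minimizes the weighted quadratic sum), extending the index set from $\calS_0\cap\calA(t)$ to $\calS_0$, and finally applying $\Tr(U^* A)\le \Tr(U^*)\opnorm{A}$ for PSD $A$ together with $\Tr(U^*)\le 1$ and the spectral bound \eqref{eq: subset spectral}, yields $\sum_{\calS_0\cap\calA(t)} c_i\tau_i \le \alpha m\tilde\sigma^2$.

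To propagate \eqref{eq: 2} to iteration $t+1$, I would plug the reweighting rule $c_i(t+1)=c_i(t)(1-\tau_i(t)/\tau_{\max})$ into both sides. This gives $\sum_{\calS_0}(1-c_i(t+1))=\sum_{\calS_0}(1-c_i(t))+\tau_{\max}^{-1}\sum_{\calS_0\cap\calA(t)} c_i(t)\tau_i(t)$, together with the analogous identity on $[m]$ with $\calA(t)$ in place of $\calS_0\cap\calA(t)$. The inductive \eqref{eq: 2} disposes of the first pieces, so it reduces to showing $\sum_{\calS_0\cap\calA(t)} c_i\tau_i \le (\alpha/4)\sum_{\calA(t)} c_i\tau_i$. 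The left side is at most $\alpha m\tilde\sigma^2$ by the just-established \eqref{eq: 1}; the right side strictly exceeds this, because the loop reaching iteration $t+1$ means the continuation test $\sum_{\calA(t)}c_i(t)\tau_i(t)>8m\sigma^2=4m\tilde\sigma^2$ was passed at iteration $t$.

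Finally, for \eqref{eq: 3} at $t+1$, I would observe that any good point ever removed has $c_i\le 1/2$ at the moment of removal and remains frozen, so every $i\in\calS_0\setminus\calA(t+1)$ contributes $1-c_i(t+1)\ge 1/2$; hence $|\calS_0|-|\calS_0\cap\calA(t+1)|\le 2\sum_{\calS_0}(1-c_i(t+1))$. Combining this with the newly proved \eqref{eq: 2} at $t+1$ and the trivial $\sum_{[m]\setminus\calS_0}(1-c_i)\le (1-\alpha)m$, a short algebraic elimination yields $\sum_{\calS_0}(1-c_i(t+1))\le \alpha(1-\alpha)m/(4-\alpha)$, whence $|\calS_0\cap\calA(t+1)|\ge \alpha m-2\alpha(1-\alpha)m/(4-\alpha)=\alpha(2+\alpha)m/(4-\alpha)$, closing the induction.
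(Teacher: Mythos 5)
Your proof is correct and follows the paper's own simultaneous-induction strategy step for step: the same base case, the same chain $(3)_t \Rightarrow (1)_t$, then $\{(1)_t,(2)_t,\text{continuation test}\} \Rightarrow (2)_{t+1}$, then $(2)_{t+1} \Rightarrow (3)_{t+1}$, with the same feasible reference column $\tilde w$ over $\calS_0 \cap \calA(t)$, the same chain of comparisons ($c_i\le 1$, mean minimizes the quadratic, trace--spectral bound, \eqref{eq: subset spectral}), and the same $c_i \le \tfrac12$ observation for removed points. The only differences are cosmetic (writing $c_i(t+1)$ for the paper's $c'_i(t)$, and your passing remark that $c_i>\tfrac12$ justifies the column decoupling, when in fact $c_i\ge 0$ together with separability of $\psi$ is what is used).
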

Intuitively, Lemma \ref{lm: downweight} says that in every iteration:
(1) 
the summation of the projected  residual error over the non-corrupted data is small; (2) the weights of non-corrupted data points are reduced by a relatively small amount; (3) and more importantly, most non-corrupted data points are not removed. 
%
%
%
\begin{proof}[Proof of Lemma \ref{lm: downweight}]

The proof is by induction on \prettyref{eq: 2} and \prettyref{eq: 3}. 
Note that the induction hypotheses do not include (56).
Recall that we use $t=1, \cdots $ to denote the iteration number in the {\bf while}--loop. \\

\noindent{\bf Base case: $t=1$.} Note that 
$\calA(1)=[m]$, and $c_i(1)=1$ for all $i \in \calA(1)$.
Therefore, \prettyref{eq: 2} and \prettyref{eq: 3} hold for $t=1$ trivially. \\

\noindent{\bf Induction Step:} Suppose \prettyref{eq: 2} and \prettyref{eq: 3} hold 
for $t$, and the {\bf while}-- has not terminate at iteration $t$. We aim to show
\prettyref{eq: 2} and \prettyref{eq: 3} hold for $t+1.$

We first prove \prettyref{eq: 1} holds for $t$. 
Recall that 
$$
\tau_i(t) = \pth{  y_i-\sum_{j\in \calA(t)} \hat{y}_j W_{ji}(t) }^\top U(t) \pth{  y_i-\sum_{j\in \calA(t)} \hat{y}_j W_{ji}(t) } ,
$$
where  $W(t)$ is a minimizer to \eqref{eqn: opt} and $U(t)$ is a maximizer to \prettyref{eq:opt_max_min} at iteration $t$, respectively. 
Since $(W(t),U(t))$ is a saddle point, it follows from \prettyref{eq:saddle_W} that $W(t) \in \arg \min_{W \in \calW} \psi( W, U(t) ).$ 
Moreover, this minimization  is decoupled over all data points in $\calA(t)$ and hence each column of $W(t)$ is optimized independently. 
Therefore, by letting $W_{\ast i}(t) $ denote the column of $W(t)$ corresponding to $i \in \calA(t)$, we have
\begin{align}
W_{\ast i} (t) \in  \arg \min_{w}  &  \pth{  y_i-\sum_{j\in \calA(t) } \hat{y}_j w_j }^\top U(t) \pth{  y_i-\sum_{j\in \calA(t) } \hat{y}_j w_j } \nonumber \\
 \text{ s. t. } & \sum_{j\in \calA(t)} w_j = 1 \nonumber \\
  & 0 \le w_j \le \frac{4-\alpha}{\alpha(2+\alpha) m}. \label{eq:min_ind}
\end{align}

Let $\tilde{w} \in \reals^{\abth{\calA(t)}}$ be the column stochastic vector such that 
$$\tilde{w}_{j} \triangleq \frac{ \indc{j\in \calS_0\cap \calA(t)}}{\abth{\calS_0\cap \calA(t)}}, ~~~ \forall ~ j\in \calA(t).$$
By the induction hypothesis, $\tilde{w}$ is feasible to \prettyref{eq:min_ind}. 
Let $Y_{\calA(t)} \in \reals^{d\times n}$ be the matrix with $\hat{y}_i$ with $i\in \calA(t)$ as columns.  
Moreover, 
$$Y_{\calA(t)} \tilde{w} = \sum_{j\in \calA(t)} \hat{y}_j \tilde{w}_{j} =  
\frac{1}{\abth{\calS_0\cap \calA(t)}} \sum_{j\in \calS_0\cap \calA(t)} y_j ~ \triangleq ~ \mu_{\calS_0\cap \calA(t)}.$$ 
Thus, we have 
\begin{align*}
& \sum_{i\in \calS_0\cap \calA(t)} c_i(t) \tau_i (t) \\
& \overset{(a)}{\le} \sum_{i\in \calS_0\cap \calA(t)} c_i(t) (y_i-\mu_{\calS_0\cap \calA(t)})^{\top} U(t) \pth{y_i-\mu_{\calS_0\cap \calA(t)}}\\
& \overset{(b)}{\le} \sum_{i\in \calS_0\cap \calA(t)} (y_i-\mu_{\calS_0\cap \calA(t)})^{\top} U(t) \pth{y_i-\mu_{\calS_0\cap \calA(t)}}\\
& \overset{(c)}{\le} \sum_{i\in \calS_0\cap \calA(t)} (y_i-\mu_{\calS})^{\top} U(t) \pth{y_i-\mu_{\calS}}\\
&\le \sum_{i\in \calS_0} (y_i-\mu_{\calS})^{\top} U (t) \pth{y_i-\mu_{\calS}} \\
& \overset{(d)}{\le} \Tr\left(U(t)\right) \opnorm{\sum_{i\in \calS_0} (y_i-\mu_{\calS})(y_i-\mu_{\calS})^{\top}}\\
& \overset{(e)}{\le} ~  \alpha m \tilde{\sigma}^2,
\end{align*}
where $(a)$ holds by the optimality of $W_{\ast i}(t)$ to 
\prettyref{eq:min_ind}; $(b)$ holds because $c_i(t) \le 1$ and $U(t) \succeq 0$;
$(c)$ holds because 
$$\mu_{\calS_0\cap \calA(t)} = \frac{1}{|\calS_0 \cap \calA(t)|} \sum_{i\in \calS_0 \cap \calA(t)} y_i$$ 
is a minimizer of the quadratic form 
$$
\sum_{i\in \calS_0\cap \calA(t) } (y_i-u)^{\top} U(t) \pth{y_i-u},
$$ 
as a function of $u$; $(d)$ holds because $\abth{\iprod{A}{B}} \le \opnorm{A} \| B \|_*$, where 
$\|B\|_*$ is the sum of singlular values of $B$ and $\|B\|_*=\Tr(B)$ when $B\succeq 0$; 
$(e)$ follows by \eqref{eq: subset spectral}
and the facts that $|\calS_0|\le \alpha m$ and $\Tr(U(t))=1.$

\medskip

Next we prove \eqref{eq: 2} and \eqref{eq: 3}. Since by induction hypothesis the {\bf while}--loop has not terminate at iteration $t$,
it follows that 
\begin{align}
\sum_{i\in \calA(t)} c_i(t) \tau_i  (t) > 4 m \tilde{\sigma}^2. \label{eq:termination_imply}
\end{align}
Note that the weights of the data points that do not lie in $\calA(t)$ are not updated in iteration $t$, i.e., $c_i^{\prime}(t) =c_i(t)$ for $i\notin \calA(t)$. As a consequence, we have 
\begin{align}
& \sum_{i\in \calS_0} \left( 1-c_i^{\prime}(t) \right) \nonumber \\
& = \sum_{i\in \calS_0} \left( 1-c_i(t) \right)+ \sum_{i\in \calS_0\cap \calA(t)} \left(c_i(t) - c_i^{\prime}(t) \right) \nonumber \\ 
& \le \frac{\alpha}{4} \sum_{i=1}^m \left( 1-c_i(t)  \right) + \frac{1}{\tau_{\max}(t) } \sum_{i\in \calS_0\cap \calA(t)} \tau_i (t) c_i(t),  \label{eq: down w 111}
\end{align}
where the last inequality follows from induction hypothesis.
Furthermore, we have 
\begin{align*}
\frac{1}{\tau_{\max}(t)}\sum_{i\in \calS_0\cap \calA(t)} \tau_i (t) c_i(t) 
& \overset{(a)}{\le} \frac{1}{\tau_{\max}(t)} \alpha m \tilde{\sigma}^2 \\
& \overset{(b)}{<}  \frac{\alpha}{4\tau_{\max}(t)}\sum_{i\in \calA(t)} \tau_i(t) c_i(t), 
\end{align*}
where $(a)$ holds because we have shown that \eqref{eq: 1} holds for $t$;
$(b)$ follows from \eqref{eq:termination_imply}. 
 
Thus, \eqref{eq: down w 111} can be further bounded as 
\begin{align*}
& \sum_{i\in \calS_0} \left(1-c_i^{\prime}(t) \right)  \\
& \le \frac{\alpha}{4} \sum_{i=1}^m \left( 1-c_i(t) \right) + \frac{\alpha}{4\tau_{\max}(t)}\sum_{i\in \calA(t)} \tau_i(t) c_i(t)\\
& = \frac{\alpha}{4} \pth{\sum_{i\notin \calA(t)} (1-c_i(t)) + \sum_{i\in \calA(t)} (1-c_i(t))  + \frac{1}{\tau_{\max}(t)}\sum_{i\in \calA(t)} \tau_i(t) c_i(t) }\\
&= \frac{\alpha}{4} \pth{\sum_{i\notin \calA(t)} (1-c_i^{\prime}(t)) + \sum_{i\in \calA(t)} \pth{1-    \pth{1-\frac{\tau_i(t)}{\tau_{\max}(t)}} c_i(t)}}\\
& =  \frac{\alpha}{4}\sum_{i=1}^m (1-c_i^{\prime}(t)),
\end{align*}
proving \eqref{eq: 2} for $t+1$. We rewrite \eqref{eq: 2} for $t+1$ as 
$$
\sum_{i\in \calS_0} \left(1-c_i^{\prime}(t) \right) \le \frac{\alpha}{4-\alpha}
\sum_{i \notin \calS_0} \left(1-c_i^{\prime}(t) \right).
$$
One the one hand, we have
$$
\sum_{i \notin \calS_0} \left(1-c_i^{\prime}(t) \right) \le |\calS_0^c|  \le (1-\alpha)m.
$$
On the other hand, 
$$
\sum_{i\in \calS_0} \left(1-c_i^{\prime}(t) \right) \ge 
\sum_{i\in \calS_0 \setminus \calA'(t)} \left(1-c_i^{\prime}(t) \right)
\ge \frac{1}{2} \left| \calS_0 \setminus \calA'(t) \right|,
$$
where the last inequality holds from the fact that $c'_i(t) \le 1/2$ for all 
$i \notin \calA'(t)$ -- by the data removal criterion in Algorithm \ref{ag: robust mean}. 
Combining the last three displayed equations, we get that 
$$
\left| \calS_0 \setminus \calA'(t) \right| \le \frac{2\alpha(1-\alpha)}{4-\alpha}m,
$$
proving \eqref{eq: 2} for $t+1$.  The proof of Lemma \ref{lm: downweight} is complete. 
\end{proof}

Let $W$ be the minimizer of \eqref{eqn: opt} when the {\bf while}--loop terminates. 
Let $W_1$ be the result of zeroing out all singular values of $W$ that are greater than 0.9. 

\begin{lemma}
\label{lmm: rank one}
The matrix $W_0=(W-W_1)(I-W_1)^{-1}$ is a column stochastic matrix, and the rank of the weight matrix  $W_0$ is one. 
\end{lemma}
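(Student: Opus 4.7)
The plan is to verify three ingredients in sequence—invertibility of $I-W_1$, column-stochasticity of $W_0$, and rank-oneness—with the rank-one step being the substantive one.

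\textbf{Invertibility and column sums.} By construction every nonzero singular value of $W_1$ is at most $0.9$, so $\opnorm{W_1}\le 0.9<1$; hence $I-W_1$ is invertible, and $(I-W_1)^{-1}$ admits the convergent Neumann series $\sum_{k\ge 0}W_1^k$. Since $W$ is column-stochastic, $\mathbf{1}^\top W=\mathbf{1}^\top$, and so $\mathbf{1}^\top(W-W_1)=\mathbf{1}^\top(I-W_1)$; therefore
$$
\mathbf{1}^\top W_0 \;=\; \mathbf{1}^\top(W-W_1)(I-W_1)^{-1} \;=\; \mathbf{1}^\top(I-W_1)(I-W_1)^{-1} \;=\; \mathbf{1}^\top,
$$
so the columns of $W_0$ sum to $1$.

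\textbf{Rank one.} Since $(I-W_1)^{-1}$ is invertible, $\mathrm{rank}(W_0)=\mathrm{rank}(W-W_1)$, which equals the number of singular values of $W$ strictly exceeding $0.9$. Column-stochasticity of $W$ gives $\opnorm{W}\ge 1$, so at least one such singular value exists; the substantive claim is that \emph{exactly} one does. I would prove this via the saddle-point characterization at the optimum of \eqref{eqn: opt}/\eqref{eq:opt_max_min}. The inner maximizer $U^*$ over $\{U\succeq 0,\ \Tr(U)\le 1\}$ is rank one, say $U^*=v^*(v^*)^\top$, since the objective is linear in $U$ and the extreme points of this feasible set are rank one. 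Given $v^*$, the outer minimization becomes $\min_W\sum_i c_i\bigl(\langle\hat y_i,v^*\rangle -\sum_j W_{ji}\langle\hat y_j,v^*\rangle\bigr)^2$, which decouples into $|\calA|$ identical columnwise problems depending only on the scalars $\langle\hat y_j,v^*\rangle$. Exploiting the box constraint $W_{ji}\le\frac{4-\alpha}{\alpha(2+\alpha)m}$ together with the KKT conditions of these decoupled problems, one argues that the minimizing columns share a common ``averaging'' structure in the $v^*$ direction, so that the contribution of $W$ along this direction collapses to a single rank-one piece $\sigma u w^\top$ with $\sigma>0.9$, and the remaining singular values of $W$ lie below the $0.9$ threshold.

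\textbf{Non-negativity of $W_0$.} From the rank-one form $W-W_1=\sigma u w^\top$, we obtain $W_0=u\bigl(\sigma(I-W_1)^{-\top}w\bigr)^\top$. Combining this with the column-sum identity $\mathbf{1}^\top W_0=\mathbf{1}^\top$ forces $\sigma(I-W_1)^{-\top}w = \mathbf{1}/(\mathbf{1}^\top u)$, hence $W_0 = u\mathbf{1}^\top/(\mathbf{1}^\top u)$. Since $W$ is entrywise non-negative, Perron--Frobenius applied to $WW^\top$ allows the top left singular vector $u$ to be chosen non-negative, and $W_0\ge 0$ follows.

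The main obstacle is the rank-one step. The algebra for invertibility, column sums, and non-negativity is routine once rank-oneness is in hand, but pinning down $\mathrm{rank}(W-W_1)=1$ requires carefully exploiting both the rank-one structure of the dual optimum $U^*$ and the box constraint on entries of $W$ so that all $|\calA|$ columnwise KKT systems saturate the upper bound on a common subset of indices. This is what identifies $W-W_1$ with a single averaging direction rather than allowing it to span multiple singular directions above the $0.9$ threshold.
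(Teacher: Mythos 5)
Your column-sum argument is correct and is exactly the paper's argument (though you usefully add the Neumann-series justification for invertibility of $I-W_1$). The trouble is the rank-one step, where you have a genuine gap, and you chose a route that is both much harder than the paper's and not actually carried out.

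You correctly observe that $\opnorm{W}\ge 1$ (so at least one singular value exceeds $0.9$), but your argument that \emph{exactly} one does is only a sketch: you propose to exploit rank-oneness of the dual maximizer $U^*$, decouple the columnwise minimizations, and then ``argue that the minimizing columns share a common averaging structure'' via KKT conditions. None of that is carried out, and it is not at all clear the argument closes---the columnwise minimizers depend on $c_i$ and on which box constraints are active, so there is no obvious reason they produce a single common rank-one piece above the $0.9$ threshold. The paper's argument is far more elementary and bypasses the optimization structure entirely: from the box constraint $0\le W_{ji}\le \frac{4-\alpha}{\alpha(2+\alpha)m}$ and column stochasticity (so $\sum_{i,j}W_{ji}=|\calA|\le m$), one gets
\begin{align*}
\Fnorm{W}^2=\sum_{i,j}W_{ji}^2\le \Big(\max_{i,j}W_{ji}\Big)\sum_{i,j}W_{ji}\le \frac{4-\alpha}{\alpha(2+\alpha)},
\end{align*}
and for $\alpha\ge 3/4$ this is at most $52/33<2\cdot(0.9)^2$. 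Hence at most one singular value of $W$ can exceed $0.9$; combined with $\opnorm{W}\ge 1$, exactly one does, so $W-W_1$ has rank one. That is the key insight you are missing: the box constraint on the entries bounds the Frobenius norm, and the Frobenius norm controls how many singular values can be large.

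Two smaller remarks. First, your non-negativity step is extra work the paper does not actually do (the paper's proof establishes only the column-sum identity and rank), and in any case it is downstream of the rank-one claim, so it cannot be used to patch the gap. Second, your Perron--Frobenius argument for choosing $u\ge 0$ is fine, but the factorization $W_0=u\bm{1}^\top/(\bm{1}^\top u)$ implicitly needs $\bm{1}^\top u\neq 0$, which you should note holds because $u$ is nonzero and can be taken entrywise non-negative.
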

\begin{remark}\label{rmk:Z_column}
Let $X_{\calA}\subseteq \reals^{d\times \abth{\calA}}$ be the data matrix with columns 
being the data points in $\calA$.  Let $Z=X_{\calA} W_0$. 
Since $W_0$ is rank one, all the $|\calA|$ columns in the matrix $Z$ are identical. Denote 
\begin{align}
\label{def: z identical}
Z = \qth{\tilde{\mu}, \cdots, \tilde{\mu}}. 
\end{align}
Then $\tilde{\mu}$ is a weighted average of the points in $\calA$.   
\end{remark}

\begin{proof}
We first show that $W_0$ is a column stochastic matrix:
\begin{align*}
\bm{1}^{\top} W_0 &=  \bm{1}^{\top} (W-W_1)(I-W_1)^{-1} \overset{(a)}{=} (\bm{1}^{\top}- \bm{1}^{\top}W_1)(I-W_1)^{-1} \\
& =  \bm{1}^{\top} (I-W_1)(I-W_1)^{-1} = \bm{1}^{\top}, 
\end{align*}
where $(a)$ follows because $W$ is column stochastic. 

Next we show that rank of $W_0$ is one. 
From \eqref{eqn: opt}, we know that $\Fnorm{W}^2 \le \frac{4-\alpha}{\alpha (2+\alpha)}$. To see this, 
\begin{align*}
\Fnorm{W}^2 & = \sum_{i,j\in \calA} W_{ji}^2  \\
& \le \sum_{i,j\in \calA} \pth{W_{ji} \cdot \max_{i,j\in \calA}W_{ji} } \\
& \le  \pth{\sum_{i,j\in \calA} W_{ji}  } \frac{4-\alpha}{\alpha (2+\alpha)m} \\
& \le   \frac{4-\alpha}{\alpha (2+\alpha)}. 
\end{align*}
When $\alpha \ge \frac{3}{4}$, 
$$
\frac{4-\alpha}{\alpha (2+\alpha)} \le \frac{52}{33} < 2\times 0.9^2.
$$
Hence, at most one singular value of $W$ can be greater than $0.9$. 
Moreover, since $W$ is column stochastic, its largest singular value is at least $1.$
Thus, $W-W_1$ is of rank one. As a consequence, $W_0$ is of rank one.  
\end{proof}

\subsection{Proof of \prettyref{lmm:algo}}

\begin{proof}
Recall that our goal is to show 
\begin{align*}
\norm{\mu_{\calS} -\hat{\mu}} = O(\sigma \sqrt{1-\alpha}),
\end{align*}
where $\hat{\mu} =\frac{1}{\abth{\calA}} \sum_{i \in \calA} \hat{y}_i$ is the algorithm output.
Recall $Y_{\calA}\subseteq \reals^{d\times \abth{\calA}}$ is the data matrix with columns 
being the data points in $\calA$. 
In view of \prettyref{rmk:Z_column}, columns of $Z=Y_{\calA} W_0$ are identical and denoted by
$\tilde{\mu}$.
Our proof is divided into two steps:

{\bf Step 1}: We first show that points in $\calA$ are clustered around the center $\tilde{\mu}$. 
In addition, by \eqref{eq: 3} in Lemma \ref{lm: downweight}, the set $\calA$ mainly consists of uncorrupted data. As a consequence,
we are able to show that 
\begin{align}
\label{al: aux 111}
\hat{\mu} ~= \frac{1}{|\calA|} \sum_{i\in \calA} \hat{y}_i ~ \approx  \frac{1}{|\calS_0 \cap \calA|} \sum_{i \in \calS_0 \cap \calA} \hat{y}_i = \frac{1}{|\calS_0 \cap \calA|} \sum_{i \in \calS_0 \cap \calA} y_i .  
\end{align}

{\bf Step 2}: By \prettyref{eq: subset spectral}, points in $\calS_0$ are clustered around the center $\mu_{\calS}$. 
In addition, by \eqref{eq: 3} in Lemma \ref{lm: downweight}, most of the points in $\calS_0$ have been preserved.  Thus we  
are able to show that 
\begin{align}
\label{al: aux 222}
\mu_{\calS} = \frac{1}{m} \sum_{i =1}^m y_i ~ \approx ~  \frac{1}{|\calS_0 \cap \calA|} \sum_{i\in \calS_0 \cap \calA} y_i.  
\end{align} 

Putting these two pieces together, the proof of \prettyref{lmm:algo} is complete.

\medskip 

\noindent {\bf Step 1: We show \eqref{al: aux 111}. } 

When the {\bf while}--loop terminates, in view of \prettyref{eq:opt_W}, we have 
\begin{align}
\opnorm{Y_{\calA}(I-W) \diag{(c_{\calA})^{\frac{1}{2}} }} \le 2 \sqrt{m} \tilde{\sigma}. \label{eq:obj_1},
\end{align}
where $\diag{ (c_{\calA})^{\frac{1}{2} } }$ is the diagonal matrix with diagonal entries given by $\{  c_i^{1/2} \}_{i \in \calA}.$
%
%
We will show that $\hat{y}_i\approx  \tilde{\mu}$ for all $i \in \calA$. 
For this purpose, it is enough to show $\opnorm{Y_{\calA} - Z}$ is small: 
\begin{align*}
\opnorm{Y_{\calA} - \tilde{\mu} \bm{1}^T} & = \opnorm{Y_{\calA} - Z} \\
& = \opnorm{Y_{\calA} - Y_{\calA} W_0} \\
&= \opnorm{Y_{\calA}(I-W_1) (I-W_1)^{-1}- Y_{\calA} (W-W_1)(I-W_1)^{-1} }\\
& = \opnorm{Y_{\calA}(I-W)(I-W_1)^{-1} }\\
& \le \opnorm{Y_{\calA}(I-W)}  \opnorm{(I-W_1)^{-1}}  \\
& \overset{(a)}{\le} \opnorm{Y_{\calA}(I-W)} \times 10 \\
& \overset{(b)}{\le} 10 \sqrt{2} \opnorm{Y_{\calA}(I-W) \diag{(c_{\calA})^{\frac{1}{2}} }}\\
& \overset{(c)}{\le} 20\sqrt{2m}\tilde{\sigma},
\end{align*}
where $(a)$ holds because the largest singular value of $W_1$ is at 
most $0.9$; $(b)$ holds because $c_i \ge \frac{1}{2}$ for all $i \in \calA$;
$(c)$ follows from \prettyref{eq:obj_1}.

Fix any $0 < \epsilon^{\prime} < 1/2$. Let $\calT\subseteq \calA$ such that $\abth{\calT} \ge (1-\epsilon^{\prime}) |\calA|$.  
We have 
\begin{align}
\label{eq: modified aaa}
\nonumber
& \norm{\frac{1}{|\calT|} \sum_{i\in \calT} \hat{y}_i  ~ - ~ \hat{\mu}} \\
\nonumber
& =  \norm{\frac{1}{|\calT|}\sum_{i\in \calT} \hat{y}_i -\frac{1}{|\calA|} \sum_{i\in \calA} \hat{y}_i} \\
\nonumber 
& = \norm{\frac{1}{|\calT|}\sum_{i\in \calT} (\hat{y}_i -\tilde{\mu})-\frac{1}{|\calA|} \sum_{i\in \calA} (\hat{y}_i -\tilde{\mu})}\\
\nonumber
&= \norm{\pth{\frac{1}{|\calT|} - \frac{1}{|\calA|}}\sum_{i\in \calT} (\hat{y}_i -\tilde{\mu})  -\frac{1}{|\calA|} \sum_{i\in \calA /\calT} (\hat{y}_i -\tilde{\mu}) }\\
\nonumber
& \overset{(a)}{\le} \frac{|\calA|-|\calT|}{|\calT||\calA| } \norm{\qth{Y_{\calA} - Z}_{\calT} \bm{1}} + \frac{1}{|\calA|}\norm{\qth{Y_{\calA} - Z}_{\calA /\calT} \bm{1}} \\
\nonumber
& = \pth{\frac{|\calA|-|\calT|}{\sqrt{|\calT|} \, |\calA| } +\frac{\sqrt{\abth{\calA /\calT}}}{|\calA|} } \opnorm {Y_{\calA} - Z} \\
& \le 80 \sqrt{2}  \tilde{\sigma} \sqrt{ \epsilon^{\prime} } ,
\end{align}
where $\qth{Y_{\calA} - Z}_{\calT} $ denotes the submatrix of $Y_{\calA} - Z$ -- restricting to columns in $\calT$, and $\bm{1}\in \reals^{\abth{\calT}}$; the last inequality holds because $\epsilon'<1/2$ and
$$
|\calA| \ge |\calA\cap \calS_0| \ge \frac{\alpha(2+\alpha)}{4-\alpha} m.
$$
%
%
Note that 
$$
\frac{\alpha(2+\alpha)}{4-\alpha} \ge 1-\frac{5}{3}(1-\alpha) \Leftrightarrow (\alpha-1)^2 \ge 0.
$$
Thus, $\abth{\calA - \calA\cap \calS_0} \le \frac{5}{3}(1-\alpha) m.$
Choosing $\calT = \calA\cap \calS_0$, we obtain 
\begin{align}
\label{estimator b}
\norm{\mu_{\calS_0\cap \calA} - \hat{\mu}} \le  80\sqrt{2}\tilde{\sigma} \sqrt{5(1-\alpha)/3} \le 160 \tilde{\sigma} \sqrt{1-\alpha} = O(\tilde{\sigma} \sqrt{1-\alpha}). 
\end{align}

\medskip
\noindent {\bf Step 2: We show \eqref{al: aux 222}. }   The proof of \eqref{al: aux 222} is similar to that of \eqref{al: aux 111}. 

Recall that $\mu_{\calS} =\frac{1}{m} \sum_{i=1}^m y_i$ and that 
$$
\mu_{\calS_0\cap \calA} =\frac{1}{|\calS_0\cap \calA|} \sum_{i\in \calS_0\cap \calA} y_i.
$$ 
We have 
\begin{align}
\label{eq: refine 1}
\nonumber 
\norm{\mu_{\calS} -\mu_{\calS_0\cap \calA}} & = \norm{ \mu_{\calS} - \frac{1}{\abth{\calA\cap \calS_0}}\sum_{i \in \calS_0\cap \calA } y_i} \\
\nonumber 
& = \norm{  \frac{1}{\abth{\calA\cap \calS_0}}\sum_{i \in \calA\cap \calS_0} (y_i  -\mu_{\calS})} \\
\nonumber 
& =\frac{1}{\abth{\calA\cap \calS_0}} \opnorm{  [ Y_{\calA \cap \calS_0} - \mu_S ] \mathbf{1} } \\
& \le \frac{\sqrt{ \abth{\calS_0} } } { \sqrt{ \abth{ \calA\cap \calS_0}}  } \tilde{\sigma} \nonumber \\
& \le \sqrt{\frac{4-\alpha}{\alpha(2+\alpha)} } \sqrt{1-\alpha} \tilde{\sigma} \le \sqrt{2 (1-\alpha)} \tilde{\sigma} \nonumber .
\end{align}

\end{proof}

\subsection{Alternative Termination Condition of Algorithm \ref{ag: robust mean}}
\label{app: ag modification}
Recall that the termination of Algorithm \ref{ag: robust mean} relies on the knowledge of $\sigma$. 
Since $\sigma$ depends on $\norm{\theta - \theta^*}$ according to
\prettyref{eq:choice_sigma_gradient} 
in our robust gradient aggregation setting, 
the learner needs to know a priori $\norm{\theta - \theta^*}$ for all $\theta$, which may not be possible. 
Nevertheless, it turns out that the termination condition of Algorithm \ref{ag: robust mean} can be replaced by 
checking the cardinality of set $\abth{\calA}$, formally stated as follows: 

If 
  $$
  \abth{\calA \setminus \sth{i: ~ \pth{1-\frac{\tau_i}{\tau_{\max}}} c_i \le \frac{1}{2}}} \ge \frac{\alpha(2+\alpha)m}{4-\alpha},
  $$ 
  we update $c_i \gets \pth{1-\frac{\tau_i}{\tau_{\max}}} c_i$
  and remove $\sth{i: ~c_i\le \frac{1}{2}}$ from $\calA$; otherwise, we break the {\bf while}--loop.  

Similar to the original Algorithm \ref{ag: robust mean}, in the modified Algorithm \ref{ag: robust mean}, in each iteration of the {\bf while}--loop at least one point will be removed. Thus, the modified Algorithm \ref{ag: robust mean} terminates in at most $m$ iterations. We next prove the conclusion of \prettyref{lmm:algo}
still holds after this modification.

Suppose the modified Algorithm \ref{ag: robust mean} terminates at iteration $t^*$. By the modified code we know $\abth{\calA(t^*)} \ge \frac{\alpha(2+\alpha)m}{4-\alpha}$; otherwise, the algorithm terminates earlier than $t^*$. By the termination condition, we also know that 
\begin{align}
\abth{\calA(t^*) - \sth{i: ~ \pth{1-\frac{\tau_i}{\tau_{\max}}} c_i \le \frac{1}{2}}} < \frac{\alpha(2+\alpha)m}{4-\alpha}.
\end{align}
%
\begin{claim}
There exists an iteration $t^{\prime} \le t^*$ such that 
$$
\sum_{i\in \calA(t^{\prime})} c_i(t^{\prime}) \tau_i(t^{\prime}) \le 8 m\sigma^2.
$$ 
\end{claim}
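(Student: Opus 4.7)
The plan is to argue by contradiction, leveraging the inductive invariants \eqref{eq: 2} and \eqref{eq: 3} of Lemma \ref{lm: downweight}. I would suppose, toward a contradiction, that
\[
\sum_{i \in \calA(t)} c_i(t)\tau_i(t) \; > \; 8 m \sigma^2 \; = \; 4 m \tilde{\sigma}^2 \qquad \text{for every } t = 1, 2, \ldots, t^*,
\]
and then show that $\calA^{\prime}(t^*)$, the hypothetical set produced by one more weight-update-and-removal step at iteration $t^*$, would be forced to be too large to be compatible with the modified termination criterion.

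The key step is to verify that the inductive step in the proof of Lemma \ref{lm: downweight} transports unchanged from the original algorithm to the modified one. That inductive step relies only on (a) the quantitative bound $\sum_{i\in \calA(t)} c_i(t)\tau_i(t) > 4m\tilde{\sigma}^2$ and (b) the weight-update-and-removal formulas $c_i^{\prime}(t) = (1 - \tau_i(t)/\tau_{\max}(t))\, c_i(t)$ and $\calA^{\prime}(t) = \calA(t) \setminus \{i : c_i^{\prime}(t) \le 1/2\}$ --- both of which are in force under our contradiction hypothesis. Consequently, by induction on $t$, the invariants \eqref{eq: 2} and \eqref{eq: 3} persist all the way through $\calA^{\prime}(t^*)$, and in particular
\[
|\calA^{\prime}(t^*)| \; \ge \; |\calS_0 \cap \calA^{\prime}(t^*)| \; \ge \; \frac{\alpha(2+\alpha)m}{4-\alpha}.
\]

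The contradiction then comes directly from the modified stopping rule of Remark \ref{rmk: algorithm modification}: the loop breaks at iteration $t^*$ exactly when
\[
\left| \calA(t^*) \setminus \left\{i : \left(1-\tfrac{\tau_i(t^*)}{\tau_{\max}(t^*)}\right) c_i(t^*) \le \tfrac{1}{2}\right\} \right| \; < \; \frac{\alpha(2+\alpha)m}{4-\alpha},
\]
and the set on the left-hand side is precisely $\calA^{\prime}(t^*)$. These two bounds cannot coexist, so some $t^{\prime} \le t^*$ must satisfy $\sum_{i\in \calA(t^{\prime})} c_i(t^{\prime})\tau_i(t^{\prime}) \le 8m\sigma^2$, as claimed.

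The main obstacle I anticipate is the verification in the second paragraph: I need to check that nothing in the proof of Lemma \ref{lm: downweight} implicitly appeals to the original termination rule beyond the quantitative bound on $\sum c_i \tau_i$. A targeted read of the chain of inequalities right after \eqref{eq:termination_imply}, specifically the step bounding $\frac{1}{\tau_{\max}(t)}\alpha m \tilde{\sigma}^2$ by $\frac{\alpha}{4\tau_{\max}(t)}\sum_{i\in \calA(t)} c_i(t)\tau_i(t)$, should confirm that this is the sole place the hypothesis enters, with everything else being purely algebraic manipulations of the update rule.
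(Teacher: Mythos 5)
Your proof is correct and takes essentially the same route as the paper: assume the contradiction hypothesis for all $t\le t^*$, observe that the original termination condition enters the proof of Lemma \ref{lm: downweight} only through \eqref{eq:termination_imply} (which the hypothesis supplies), conclude that \eqref{eq: 3} propagates to $\calA'(t^*)$, and contradict the modified stopping rule. The extra care you take in pinpointing the single inequality where the hypothesis is used matches the paper's remark that this is the sole place the original stopping rule is invoked.
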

\begin{proof}
We prove by contradiction. Suppose 
\begin{align}
\sum_{i\in \calA(t) } c_i(t) \tau_i (t) > 8 m\sigma^2,  
\quad \forall t \le t^* .   
\end{align}
Note that the modified Algorithm \ref{ag: robust mean} and the original Algorithm \ref{ag: robust mean} differ only in their termination conditions. 
Recall that the original termination condition is only used in the proof of Lemma \ref{lm: downweight}  to
conclude that \prettyref{eq:termination_imply} holds when the {\bf while}-loop does not terminate. 
Thus, under the hypothesis (given in the last displayed equation),  
Lemma \ref{lm: downweight} still holds. It follows that 
\begin{align*}
& \abth{\calA(t^*) - \sth{i: ~ \pth{1-\frac{\tau_i}{\tau_{\max}}} c_i \le \frac{1}{2}}}  \\ 
& \ge \abth{\calS_0\cap\pth{\calA(t^*) - \sth{i: ~ \pth{1-\frac{\tau_i}{\tau_{\max}}} c_i \le \frac{1}{2}}}}\\
&\ge \frac{\alpha(2+\alpha)m}{4-\alpha},
\end{align*}
%
which leads to a contradiction.
\end{proof}

Since $\calA(t)$ is monotone decreasing, it follows that $\calA(t^*) \subseteq  \calA(t')$.
Moreover, 
\begin{align*}
 \abth{\calA(t^*)} \ge \frac{\alpha(2+\alpha)m}{4-\alpha}
 \ge   \frac{\alpha(2+\alpha)}{4-\alpha} \abth{\calA(t')} \ge \pth{1-\frac{5}{3} (1-\alpha)} \abth{\calA(t')}.
\end{align*}
By \eqref{eq: modified aaa}, we know 
\begin{align*}
& \norm{\frac{1}{\abth{\calA(t^*)}} \sum_{i\in \calA(t^*)}\hat{y}_i   - \frac{1}{\abth{\calA(t')}} \sum_{i\in \calA(t')}\hat{y}_i} \\
& \le 80\sqrt{2} \tilde{\sigma} \sqrt{\frac{5}{3}(1-\alpha)} = O(\sigma \sqrt{1-\alpha}). 
\end{align*}
From Lemma \ref{lmm:algo}, we know 
$$
\norm{\frac{1}{\abth{\calA(t')}} \sum_{i\in \calA(t')}\hat{y}_i - \mu_{\calS}} = O(\sigma \sqrt{1-\alpha}).$$ 
Combining the last two displayed equations, we have 
\begin{align*}
\norm{\frac{1}{\abth{\calA(t^*)}} \sum_{i\in \calA(t^*)}\hat{y}_i   - \mu_{\calS}} = O(\sigma \sqrt{1-\alpha}). 
\end{align*}

\end{appendices}

\section*{Acknowledgment}
L. Su was supported  in part by the NSF Science \& Technology Center 
for Science of Information Grant CCF-0939370.
J.~Xu was supported in part by the NSF Grant CCF-1755960.
\bibliographystyle{alpha}
\bibliography{BGDE,ARB}

\end{document}